\documentclass[%
 reprint,
 amsmath,amssymb,
 aps, physrev,
]{revtex4-2}
\usepackage{physics}
\usepackage{graphicx}
\usepackage{dcolumn}
\usepackage{bm}
\usepackage{amsmath,amssymb,amsthm,mathtools}
\usepackage[hidelinks]{hyperref}
\usepackage[colorinlistoftodos]{todonotes}

\newtheorem{definition}{Definition}
\newtheorem{theorem}{Theorem}
\newtheorem{lemma}{Lemma}
\newtheorem{fact}{Fact}
\newtheorem{corollary}{Corollary}

\usepackage[ruled,vlined,linesnumbered]{algorithm2e}
\DontPrintSemicolon
\SetKwInput{KwIn}{Input}
\SetKwInput{KwOut}{Output}
\SetKwInput{KwPromise}{Promise}

\begin{document}

\preprint{APS/123-QED}

\title{\textbf{Single-copy stabilizer learning: average case and worst case}}

\author{Gyungmin Cho}
 \email{km950501@snu.ac.kr}
 \affiliation{Department of Physics and Astronomy, and Institute of Applied Physics, Seoul National University, Seoul, Korea}
\author{Dohun Kim}%
 \email{dohunkim@snu.ac.kr}
 \affiliation{Department of Physics and Astronomy, and Institute of Applied Physics, Seoul National University, Seoul, Korea}%

\date{\today}

\begin{abstract}
We study single-copy stabilizer learning, the problem of identifying a stabilizer group
of dimension $n-t$ from an $n$-qubit quantum state $\rho$.
We obtain two complementary results.
First, in the average case, logarithmic-depth local Clifford circuits suffice to efficiently learn
almost all stabilizer groups with $t=O(\log n)$, instead of the linear-depth measurements required in previous approaches. We support this result with numerical simulations for systems of up to 100 qubits. Second, we show that, in the worst case, any adaptive single-copy measurement scheme requires a number of samples that scales exponentially in $t$.
Together with existing results on two-copy learning, our findings suggest that, for large $t$,
identifying Pauli symmetries of a quantum system exhibits a quantum advantage in the learning setting.
\end{abstract}

\maketitle


\section{Introduction}\label{section:intro}
Stabilizer states are a central class of quantum states, defined as the common eigenspace of an abelian subgroup of the Pauli group. An $n$-qubit stabilizer state is uniquely specified by $n$ independent commuting Pauli operators. Stabilizer states play an important role across quantum information science and many-body physics. In quantum information, most quantum error-correcting codes admit a stabilizer description, making these states fundamental for fault-tolerant quantum computation~\cite{gottesman1997stabilizer}. In many-body physics, they also provide canonical examples of topological order, such as the logical states of the toric code~\cite{kitaev2003fault}.

$t$-doped stabilizer states provide a simple interpolation between stabilizer states and more general quantum states. Any stabilizer state can be prepared from $\ket{0^n}$ by a Clifford circuit generated by \{CNOT, Hadamard, Phase\}. If a small number of non-Clifford gates, such as $T$ gates, are inserted, some of the Pauli stabilizer structure is typically lost. The resulting state is no longer stabilized by $n$ independent commuting Pauli operators, but may still retain a large abelian Pauli subgroup. In this paper, we refer to such a state as a \emph{$t$-doped stabilizer state} when the remaining Pauli stabilizer subgroup has dimension $n-t$.

There are several reasons to study such states. In quantum information, Clifford-generated ensembles form a state $3$-design but not a state $4$-design~\cite{kueng2015qubit}, so accessing fourth or higher moments requires some non-Clifford resource. This leads to circuits consisting of Clifford gates together with a small number of non-Clifford gates~\cite{haferkamp2023efficient}, such as $T$ gates, and hence to $t$-doped stabilizer states. In many-body physics, Pauli symmetry subgroups appear in settings such as symmetry-protected topological phases~\cite{zeng2015quantum}, and related Pauli stabilizer structure can also emerge effectively in low-energy sectors even when it is not explicit at the Hamiltonian level~\cite{brell2011toric, kitaev2006anyons}.

As quantum devices continue to improve, they enable the preparation and study of increasingly complex quantum states that are difficult to analyze by direct classical means~\cite{evered2025probing, will2025probing}. In such situations, identifying the symmetries of the underlying state provides useful information. 

In particular, for a typical quantum state, the expectation values of all non-identity Pauli operators are exponentially small with high probability~\cite{huang2021information}. Therefore, the presence of a Pauli operator with a large expectation value already constitutes significant information about the state. The challenge, however, is that there are $4^n - 1$ non-identity Pauli operators on $n$ qubits, so determining such a symmetry by a brute-force search would require exponential samples and time.

To circumvent this exponential barrier, previous works introduced Bell difference sampling~\cite{grewal2025efficient, hangleiter2024bell, leone2024learning, gross2021schur, chen2025stabilizer}, which allows one to extract information about the stabilizer groups using poly($n$) samples and time resources. However, these approaches require entangled measurements across two copies of the state [Fig.~\ref{fig:1}(a)]. Moreover, on hardware with limited qubit connectivity, implementing the Bell-measurement circuit may incur a depth overhead that grows with the system size.

To alleviate these hardware constraints, computational difference sampling was introduced in the single-copy setting~\cite{grewal2025efficient}. This removes the need for quantum memory, but requires the application of a random Clifford circuit before measurement, which has circuit depth $O(n)$ in the absence of ancilla qubits~\cite{MaslovRoetteler2018} [Fig.~\ref{fig:1}(b)]. This raises the question of whether one can learn large Pauli symmetry groups using only single-copy measurements while substantially reducing the required circuit depth, say, $O(\log n)$ [Fig.~\ref{fig:1}(c)].

\begin{figure}[t]
    \centering
    \includegraphics[width=1\linewidth, trim=1cm 14cm 12cm 1cm, clip]{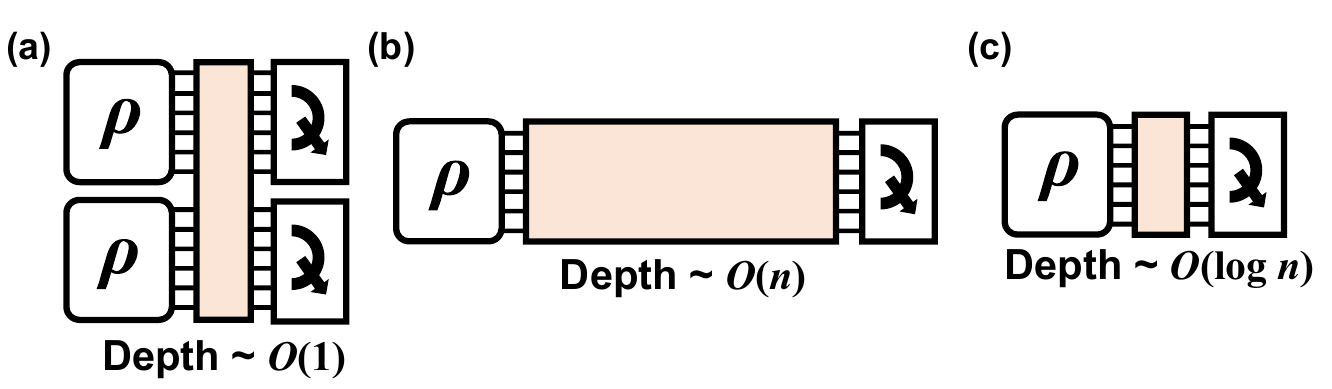}
    \caption{\label{fig:1}
    Comparison of measurement protocols.
    (a) Bell sampling uses two copies of the $n$-qubit state $\rho$ and performs Bell-basis
    measurements. The measurement circuit has depth $O(1)$,
    but the protocol requires $n$-qubit quantum memory.
    (b) Random Clifford measurement uses a single copy of $\rho$, applies a random Clifford circuit $C\sim \mathrm{Cl}(n)$, and then measures in the computational basis. In general,
    implementing $C$ requires circuit depth $O(n)$.
    (c) Random shallow Clifford measurement, as considered in this work, also uses a single copy
    of $\rho$ but replaces $\mathrm{Cl}(n)$ by the block-Clifford ensemble
    $\mathcal{C}_k=\mathrm{Cl}(k)^{\otimes n/k}$ with $k=O(\log n)$, yielding circuit depth
    $O(\log n)$.
    }
\end{figure}

Prior works crucially rely on the fact that the random Clifford group acts transitively on the set of non-identity Pauli operators~\cite{chia2024efficient, grewal2025efficient}. As a result, all such operators lie in a single orbit, and many averaging arguments reduce to analyzing a single representative together with uniform counting over its orbit.

In contrast, for shallow-depth circuits, locality imposes strong constraints on how Pauli operators can propagate. Consequently, the Pauli space no longer forms a single orbit but instead decomposes into multiple, geometry-dependent orbits. In this setting, the standard transitivity-based arguments break down.

To address this, we develop a framework that reduces the problem to cases determined by the relevant symplectic intersection patterns, and then analyzes each case via a refined counting argument. A key technical ingredient is an extension of the counting methods for Lagrangian subspaces~\cite{kueng2015qubit} to general isotropic subspaces.

\section{Preliminaries}
Throughout this work, we ignore global phases of Pauli operators and identify unsigned
$n$-qubit Pauli operators with binary vectors in $V := \mathbb{F}_2^{2n}$.
For $a=(a_x,a_z)\in \mathbb{F}_2^n\times \mathbb{F}_2^n$, we define the corresponding Weyl operator by
$W_a := i^{a_x\cdot a_z}\bigotimes_{j=1}^n X^{(a_x)_j} Z^{(a_z)_j}$, where $a_x\cdot a_z$ in the phase factor is evaluated over the integers. Under this identification, multiplication of unsigned Pauli operators corresponds to addition in $\mathbb{F}_2^{2n}$. For a state $\rho$, we define $\mathrm{Weyl}(\rho) := \{a \in V: \mathrm{tr}(\rho W_a)^2 = 1\}$, and define the \emph{stabilizer dimension} of $\rho$ to be $\dim\mathrm{Weyl}(\rho)$. 

The vector space $V$ is equipped with the canonical symplectic bilinear form. For $a=(a_x,a_z)$ and $b=(b_x,b_z)$ in $V$, we define $[a,b] := a_x \cdot b_z + a_z \cdot b_x \pmod 2$, where~$\cdot$ denotes the standard inner product over $\mathbb{F}_2^n$. In particular, $W_a$ and $W_b$ commute if and only if $[a,b]=0$. A $S\subseteq V$ is called \emph{isotropic} if $[a,b]=0$ for all $a,b\in S$. An isotropic subspace is called \emph{Lagrangian} if it has maximal dimension, i.e., if $\dim(S)=n$. Let $\mathcal{I}_{n-t}(V)$ denote the set of all isotropic subspaces of $V$ of dimension $n-t$. For subspaces $A,B\subseteq V$, we define their sum by $A+B := \{a+b : a\in A,\ b\in B\}$. We will also use the symplectic complement of a subspace $A^\perp := \{x\in V : [x,a]=0 \text{ for all } a\in A\}$. 

We employ the block-Clifford ensemble $\mathcal{C}_k \coloneq \mathrm{Cl}(k)^{\otimes n/k}$, which has appeared previously in randomized measurements~\cite{cho2025entanglement}
and quantum state tomography~\cite{cho2025sample, brandao2020fast}. For a Clifford ensemble $\mathcal{C}$ and a Pauli operator $P$, we define the Pauli collision coefficient~\cite{bertoni2024shallow, cho2025entanglement} by
\begin{equation}
    m_P \coloneq \mathbb{E}_{C\sim\mathcal{C}} [\mathbf{1}\{C(P) \in \mathcal{Z}\}],
\end{equation}
where $C(P)=CPC^{\dagger}$, $\mathcal{Z} = \{I,Z\}^n=\{(0^n,z):z\in \mathbb{F}_2^{n}\}$, and $\mathbf{1}\{\cdot\}$ denotes the indicator function. For $\mathcal{C}_k$ with $k=\Omega(\log n)$, we have [Appendix~\ref{appx:clifford ensemble}]
\begin{equation}
    \min_P m_P = \Theta(2^{-n}).
\end{equation}

Since implementing a unitary in $\mathrm{Cl}(k)$ requires depth $O(k)$~\cite{MaslovRoetteler2018}, choosing $k=O(\log n)$ allows us to realize $\mathcal{C}_k$ using a circuit of depth $O(\log n)$.
\section{Computational difference sampling}

Given an $n$-qubit quantum state $\rho$, computational difference sampling~\cite{grewal2025efficient} consists of
drawing two independent computational-basis outcomes $x,y \in \mathbb{F}_2^n$ from $\rho$
and returning their bitwise sum $x+y \pmod 2$.
We denote the resulting distribution by $r_{\rho}$.

For our analysis, it is convenient to regard the outcome space as $\mathcal{X} \coloneqq \{(x,0^n) : x \in \mathbb{F}_2^n\} \subset \mathbb{F}_2^{2n}$, rather than $\mathbb{F}_2^n$ itself.
Under this identification, for any $a \in \mathbb{F}_2^n$,
\begin{equation}
    r_{\rho}(a,0^n)
    =
    \sum_{x \in \mathbb{F}_2^n}
    \mathrm{tr}(\rho \ket{x}\bra{x})\,
    \mathrm{tr}(\rho \ket{x+a}\bra{x+a}).
\end{equation}
Equivalently, $r_{\rho}(a,0^n)$ is the probability that two independent
computational-basis samples from $\rho$ differ by $a$.
For a subset $A \subseteq \mathcal{X}$, we write $r_{\rho}(A) \coloneqq \sum_{x \in A} r_{\rho}(x)$.

The embedding of the outcome space into $\mathcal{X} \subset \mathbb{F}_2^{2n}$ is used only
to match the symplectic formalism adopted throughout the paper.
In the actual implementation, one may work directly over $\mathbb{F}_2^n$, which yields a
more efficient classical post-processing.

Computational difference sampling itself is well defined for an arbitrary quantum state $\rho$.
However, in prior work, its analysis has typically relied on properties specific to pure states~\cite{grewal2025efficient}.
Here we extend the framework so that it also applies to mixed states.

\section{Learning stabilizer group}
We begin by recalling the definition of stabilizer-group learning, which was introduced by previous works~\cite{grewal2025efficient}.
\begin{definition}[{$(\epsilon,\delta,t)$-stabilizer group learning}]\label{def:learning_task}
    Let $\rho$ be an $n$-qubit state with stabilizer dimension $n-t$.
    An algorithm outputs an isotropic subspace $\widehat S$ such that, with probability at least $1-\delta$,
    \begin{enumerate}
        \item $\widehat S \supseteq \mathrm{Weyl}(\rho)$,
        \item $\mathbb{E}_{x\sim \widehat S}\!\left[\mathrm{tr}(\rho W_x)^2\right] \ge 1-\epsilon$.
    \end{enumerate}
\end{definition}

This problem arises naturally as a subroutine for quantum state tomography (QST) on states with stabilizer dimension $n-t$~\cite{grewal2025efficient, chia2024efficient}.
Indeed, once such a subspace $\widehat S$ is found, one can separate the stabilizer and non-stabilizer parts of the state. It then suffices to perform QST only on the remaining $t$ qubits, thereby avoiding an exponential dependence on $n$. More recently, this perspective has also appeared in the study of the state hidden subgroup problem~\cite{bouland2025state, hinsche2025abelian}; when the underlying abelian group is a stabilizer group, the resulting notion coincides with the definition above. 

Our first main result can be stated as follows. A detailed proof is given in the Appendix~\ref{appx:proof}.

\begin{theorem}[Upper bound for learning almost all $\mathrm{Weyl}(\rho)$]\label{thm1}
    Let $\epsilon < 0.25$, and let $\rho$ be an $n$-qubit quantum state with
    $\dim \mathrm{Weyl}(\rho) = n-t$.
    There exists a single-copy, logarithmic-depth algorithm for
    $(\epsilon,\delta,t)$-stabilizer group learning that, for all but an exponentially small fraction of such $\rho$, uses $O(2^t n^3/\epsilon)$ copies of $\rho$, runs in time $O(2^t n^5/\epsilon)$, and succeeds with high probability.
\end{theorem}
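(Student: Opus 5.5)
We follow the computational-difference-sampling template of~\cite{grewal2025efficient}, replacing the global random Clifford with $C\sim\mathcal{C}_k$, $k=\Theta(\log n)$. Each round draws $C\sim\mathcal{C}_k$, measures two fresh copies of $C\rho C^\dagger$ in the computational basis, and records the difference $a=x+y$ as the vector $C^{-1}(a,0^n)$. We then argue two things. First, every recorded vector lies in $\mathrm{Weyl}(\rho)^\perp$. Second, after $O(2^t n^3/\epsilon)$ rounds the samples span $\mathrm{Weyl}(\rho)^\perp$ except for a part that carries little weight in $\mathrm{tr}(\rho W_x)^2$. The output is $\widehat S=(\mathrm{span}\text{ of samples})^\perp$, computed by Gaussian elimination over $\mathbb{F}_2^{2n}$. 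This takes $O(n^3)$ time per update, and multiplying by the number of samples and the $O(n^2)$ cost of applying $C^{-1}$ gives the $O(2^t n^5/\epsilon)$ runtime.

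\textbf{Step 1 (support, valid for mixed states).} Write $S=\mathrm{Weyl}(\rho)$ and expand $r_\rho$ in the Weyl basis. Conjugating by $C$ maps $S$ to $C(S)$. Since $C\rho C^\dagger$ is invariant under $W_s$ for $s\in C(S)$, up to a sign, the outcome distribution is shift-invariant by the $X$-parts of $C(S)\cap\mathcal{Z}^\perp$. This gives
\begin{equation}
r_{C\rho C^\dagger}(a,0^n)=\frac{1}{2^n}\sum_{z\in\mathbb{F}_2^n}(-1)^{a\cdot z}\,\mathrm{tr}(C\rho C^\dagger W_{(0,z)})^2 .
\end{equation}
Hence $\sum_{a\in U}r(a)=\Pr[\text{sample}\in U]$ is controlled by $\sum_{z}\mathrm{tr}(\rho W_{C^{-1}(0,z)})^2$. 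For $(a,0)$ with $[ (a,0),C(s)]=1$ for some $s\in S$, the sign $(-1)^{a\cdot z}$ averages to zero against the pairing $z\leftrightarrow z+C(s)_z$. This shows $\mathrm{supp}\subseteq C(S)^\perp$, so no pure-state argument is needed.

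\textbf{Step 2 (progress lemma).} Let $T\subseteq S^\perp$ be the span collected so far, and suppose $\widehat S=T^\perp$ still violates condition 2, that is, $\mathbb{E}_{x\sim T^\perp}\mathrm{tr}(\rho W_x)^2<1-\epsilon$. We must show that a fresh sample lands outside $T$ with probability $\Omega(\epsilon 2^{-t}/n^{c})$. By the formula above, the probability of landing in $T$ equals $\mathbb{E}_C$ of a weighted sum of $\mathrm{tr}(\rho W_x)^2$ over $x\in T^\perp\cap C^{-1}(\mathcal{Z})$, normalized by $|C(T)\cap\mathcal{X}|$. For a global Clifford, transitivity reduces this to the uniform average over $T^\perp$ and gives $\Pr[\notin T]\ge\epsilon\cdot 2^{-t}$-type bounds. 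For $\mathcal{C}_k$ transitivity fails. Instead we use that every nonidentity Pauli has $m_P=\Theta(2^{-n})$, together with the fact that each of $S,T$ is isotropic of dimension at least $n-t$. This is where the counting argument over symplectic intersection patterns enters: we split according to how $T^\perp$ and $S$ intersect each $k$-block, and count, for each pattern, the number of block Cliffords $C$ for which $C(T)\cap\mathcal{X}$ is large. This extends the Lagrangian counting of~\cite{kueng2015qubit} to isotropic subspaces. Since $\dim T^\perp/S\le 2t$, each pattern contributes a loss factor of at most $2^t$. Each successful sample raises $\dim T$ by one, at most $n$ times. Summing expectations and applying Chernoff bounds then gives $O(2^t n^3/\epsilon)$ samples with high probability.

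\textbf{Step 3 (average case and the main obstacle).} The counting in Step 2 fails for adversarial $S$, for example $S$ aligned with the block structure so that $C(S)\cap\mathcal{Z}$ is atypically large or small across blocks. We therefore bound the fraction of $S\in\mathcal{I}_{n-t}(V)$ with bad patterns. The plan is to count isotropic subspaces whose restriction to some block has a prescribed degenerate intersection with $\mathcal{Z}$, and to show this fraction is $2^{-\Omega(n)}$ via a union bound over the $n/k$ blocks, using $k=\Omega(\log n)$. This pattern-wise counting, which controls $\mathbb{E}_C$ uniformly over $T$ for typical $S$, is the step I expect to be the main obstacle. Condition $\epsilon<0.25$ enters there to ensure a constant gap in the stopping criterion.
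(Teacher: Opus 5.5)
Your Step 1 is false, and the estimator $\widehat S=T^\perp$ depends on it. The pairing $z\leftrightarrow z+C(s)_z$ preserves $\mathrm{tr}(C\rho C^\dagger W_{(0,z)})^2$ only if $C(s)\in\mathcal{Z}$, because only then is $W_{(0,C(s)_z)}$ itself a symmetry of $C\rho C^\dagger$. Stabilizers with nonzero $X$-part only make the outcome distribution shift-invariant; they put no constraint on the differences. The correct statement is Corollary~\ref{prop:cds-support}: $\operatorname{supp} r_{C\rho C^\dagger}\subseteq (C(S)\cap\mathcal{Z})^\perp\cap\mathcal{X}$, which is in general much larger than $C(S)^\perp\cap\mathcal{X}$.

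A concrete counterexample: take $\rho=\ket{0}\bra{0}$, $S=\langle Z\rangle$, and a single-qubit Clifford with $CZC^\dagger=\pm Y$ (such a gate occurs inside any $\mathrm{Cl}(k)$ block). Both outcomes are equiprobable, so $a=1$ occurs with probability $1/2$. The recorded vector is then $C^\dagger XC\in\{\pm X,\pm Y\}$, which anticommutes with $Z$. More generally, for a stabilizer state $a$ is uniform on a subspace of dimension $n-\dim(C(S)\cap\mathcal{Z})$. Hence $C^{-1}(a,0^n)\in S^\perp$ only with probability $2^{\dim(C(S)\cap\mathcal{Z})+\dim(C(S)\cap\mathcal{X})-n}$, which is exponentially small for typical $S$ and $C$. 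After $O(n)$ rounds $T$ is essentially all of $V$, and $T^\perp$ violates condition~1 however Steps 2--3 are carried out. Samples lying in $\mathrm{Weyl}(\rho)^\perp$ is a property of two-copy Bell difference sampling; a single-copy measurement in basis $C$ only sees $C(S)\cap\mathcal{Z}$.

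The paper avoids this by keeping information local to each basis and combining by sums, not by a global orthogonal complement. For each $C_i$ it takes $N_S=O(n^2/\epsilon)$ samples, since one sample cannot pin down $C_i(S)\cap\mathcal{Z}$. It spans them into $\widehat H_i$ and pulls back $C_i^\dagger((\widehat H_i+\mathcal{Z})^\perp)$, a subspace of $C_i^\dagger(\mathcal{Z})$ that contains $S\cap C_i^\dagger(\mathcal{Z})$, and outputs the sum over $i$ of these pullbacks.

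The containment $\widehat S\supseteq S$ then reduces to $\sum_i(S\cap C_i^\dagger(\mathcal{Z}))=S$ for $m=O(n2^t)$. This follows from Lemma~\ref{lem:pr_lower_overlap}: for typical $S$ and every codimension-one $T'\subset S$, $\Pr_C(C(S\setminus T')\cap\mathcal{Z}\neq\emptyset)=\Omega(2^{-t})$. That lemma is proved by Paley--Zygmund on $X_C=|C(S\setminus T')\cap\mathcal{Z}|$:
\begin{itemize}
\item The first moment is at least $2^{n-t-1}\min_P m_P=\Omega(2^{-t})$.
\item The second moment is at most $\beta(S)-1$, where $\beta(S)=\mathbb{E}_C|C(S)\cap\mathcal{Z}|^2$.
\item $\beta(S)=1+O(2^{-t})$ outside a $2^{-\Omega(n)}$ fraction of $S$, by Chebyshev. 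This uses exact formulas for $\mathbb{E}_S\beta$ and $\mathbb{E}_S\beta^2$, obtained from counts of isotropic subspaces meeting one or two Lagrangians and from the generating function $G_k(s)^{n/k}$ for $\dim(C^\dagger(\mathcal{Z})\cap\mathcal{Z})$.
\end{itemize}

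The weight condition comes from $r_\rho(H)=\frac{|H+\mathcal{Z}|}{2^n}p_\rho((H+\mathcal{Z})^\perp)$ (Lemma~\ref{lem:cds-subspace-mass}), applied with $r_{\rho_i}(\widehat H_i)\ge 1-\epsilon/2n$, together with the aggregation result Fact~\ref{fact:reduce_2n}. The assumption $\epsilon<1/4$ is used for isotropy via Fact~\ref{fact:heavy-subspace-in-M}, not for a stopping gap.

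Your Steps 2--3 give no substitute for this argument. The progress bound is asserted, not derived. The bound $\dim T^\perp/S\le 2t$ is wrong for $T\subsetneq S^\perp$; it can be as large as $n+t$. And the exceptional set of $S$ has to be controlled by a variance computation over uniform $S\in\mathcal{I}_{n-t}(V)$, not by a block-wise union bound.
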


This result shows that, for learning almost all $\rho$ with stabilizer dimension $n-t$ with $t=O(\log n)$, it suffices to use only $O(\log n)$-depth circuits while achieving the same asymptotic resource scaling as previous non-adaptive algorithms based on random Clifford circuits of depth $O(n)$~\cite{grewal2025efficient}. Another notable feature is that, whereas prior works mainly focused on the case of pure-state inputs~\cite{grewal2025efficient, chia2024efficient}, our approach also extends to mixed states. Moreover, when the input is restricted to stabilizer states, the sample complexity can be reduced to $O(n^2)$, with a corresponding improvement in the $n$-dependence of the computational cost. This matches the optimal sample complexity known for learning stabilizer states from single-copy measurements~\cite{arunachalam2022optimal}.

\begin{algorithm}[t]\label{alg}
    \caption{Approximating Weyl($\rho$) using single-copy shallow-depth measurements}
    
    \KwIn{$O(n^3 2^t/\epsilon)$ copies of $\rho$ and $\epsilon \in (0,0.25)$}
    \KwPromise{$\dim \mathrm{Weyl}(\rho) = n-t$}
    \KwOut{A subspace $\widehat{S} \supseteq \mathrm{Weyl}(\rho)$ of dimension at least $n-t$ such that $\mathbb{E}_{x\sim \widehat{S}}\mathrm{tr}(\rho W_x)^2 \geq 1-\epsilon$ with high probability}
    
    \SetKwProg{Fn}{function}{}{}
        Take $m=O(n2^t)$\;
        \For{$i=1$ \KwTo $m$}{
            Sample $C_i \sim \mathcal{C}_k$ and define $\rho_i=C_i \rho C_i^{\dagger}$.\;
            Computational difference sample $\rho_i$ to draw $N_S=O(n^2/\epsilon)$ samples from $r_{\rho_i}$.\;
            Let $\widehat{H}_i\subset \mathcal{X}$ be the subspace spanned by the computational difference samples. 
        }
        output $\widehat{S}=\sum_i^{m} C_i^{\dagger}((\widehat{H}_i+\mathcal{Z})^{\perp})$\;
\end{algorithm}

At a high level, the algorithm proceeds in two steps. First, after applying a random Clifford $C$, it extracts the portion of the stabilizer group that becomes visible in the computational basis (line 3 in Algorithm~\ref{alg}). Second, it uses this information to reconstruct the partial stabilizer groups by computational difference sampling (line 4 in Algorithm~\ref{alg}). In our proof, the most technical part lies in the first step. One must show that, for a Clifford ensemble $\mathcal{C}_k$, the resulting intersections with $\mathcal{Z}$ reveal enough information to recover the entire stabilizer group. The key lemma is the following.

\begin{lemma}[Intersection with $\mathcal{Z}$]\label{main:lem:main_inequal}
    Assume that $t=O(\log n)$ and $k=\Omega(\log n)$. Let $C$ be sampled from $\mathcal{C}_k$. Then, for all but a $2^{-\Omega(n)}$ fraction of $S \in \mathcal{I}_{n-t}(V)$, we have
        \begin{equation}
            \Pr_C\!\left(
                C(S \setminus T) \cap \mathcal{Z} \neq \emptyset
            \right)
            \ge \Omega(2^{-t})
        \end{equation}
    for every codimension-one subspace $T \subset S$.
\end{lemma}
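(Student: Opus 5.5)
Our plan is a second-moment (Paley--Zygmund) argument on the random variable counting how many elements of $S\setminus T$ are mapped into $\mathcal{Z}$. Fix $S\in\mathcal{I}_{n-t}(V)$ and a codimension-one $T\subset S$, and set
\begin{equation}
N_T(C) := \sum_{a\in S\setminus T}\mathbf{1}\{C(a)\in\mathcal{Z}\}.
\end{equation}
Then $\Pr_C(N_T>0)\ge (\mathbb{E} N_T)^2/\mathbb{E} N_T^2$. For the first moment, we invoke $\min_P m_P=\Theta(2^{-n})$ from Appendix~\ref{appx:clifford ensemble}. Since $|S\setminus T|=2^{n-t-1}$, this gives
\begin{equation}
\mathbb{E} N_T=\sum_{a\in S\setminus T} m_{W_a}\ge \Omega(2^{-t-1}),
\end{equation}
which holds for every $S$ and $T$.

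For the second moment, we expand
\begin{equation}
\mathbb{E} N_T^2=\mathbb{E}N_T+\sum_{a\neq b\in S\setminus T}\Pr_C\bigl(C(a),C(b)\in\mathcal{Z}\bigr).
\end{equation}
Because $\mathcal{Z}$ is a subspace, the event $\{C(a),C(b)\in\mathcal{Z}\}$ is the same as $\{C(\mathrm{span}(a,b))\subseteq\mathcal{Z}\}$. Since $\mathcal{C}_k$ is a tensor product over blocks, this probability factorizes over the $n/k$ blocks. On a block $j$, the restriction of $\mathrm{span}(a,b)$ is an isotropic subspace of $\mathbb{F}_2^{2k}$ of dimension $d_j\in\{0,1,2\}$ (unless the restrictions anticommute, in which case the probability is $0$). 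A uniformly random $\mathrm{Cl}(k)$ element maps a $d$-dimensional isotropic subspace into $\mathcal{Z}_k$ with probability
\begin{equation}
\prod_{i=0}^{d-1}\frac{2^k-2^i}{4^k-2^{2i}}\approx 2^{-dk}.
\end{equation}
This is exactly where we use the extension of the Lagrangian counting of~\cite{kueng2015qubit} to general isotropic subspaces. Consequently the pair probability is roughly $\prod_j 2^{-d_j k}$. It is small (about $4^{-n}$) when both $a$ and $b$ have full block support with independent restrictions, and it is larger when the restrictions $a|_j$ and $b|_j$ coincide or vanish on many blocks.

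The main obstacle is to show that, for all but a $2^{-\Omega(n)}$ fraction of $S$, the pair sum is $O(2^{-t}\cdot 2^{-t})$ uniformly over all $T$. It suffices to bound it with the sum taken over all pairs in $S$. We would classify pairs by their intersection pattern, namely the vector $(d_j)_j$ describing on which blocks $a$ and $b$ are supported and on which they are linearly dependent. We then bound $\mathbb{E}_{S}$ of the pattern-weighted count of pairs, using the transitivity of the full symplectic group on isotropic subspaces of $V$ to compute how many pairs in a uniformly random $S$ fall into each pattern. Each block on which a pair degenerates costs a factor of about $2^{-k}=n^{-\Omega(1)}$ in the count and gains at most $2^{k}$ in probability. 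With $k=\Omega(\log n)$ chosen with a large enough constant, the expected excess contribution from degenerate patterns sums to $O(4^{-t})\cdot 2^{-\Omega(n)}$-type corrections. We then apply Markov's inequality over $S$ to conclude that all but a $2^{-\Omega(n)}$ fraction of $S$ have
\begin{equation}
\sum_{a\neq b\in S}\Pr\bigl(C(a),C(b)\in\mathcal{Z}\bigr)\le O(4^{-t}).
\end{equation}
We expect controlling the low-weight patterns, where $a$ and $b$ are supported on few blocks, to be the delicate step: a single Pauli of weight concentrated in one block is harmless, but a random $S$ must be shown to contain few such elements, which again follows from counting.

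Combining the two moments, we get
\begin{equation}
\Pr_C(N_T>0)\ge \frac{\Omega(4^{-t})}{O(2^{-t})+O(4^{-t})}=\Omega(2^{-t}),
\end{equation}
uniformly in $T$, which is the claimed bound.
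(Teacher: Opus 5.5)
\paragraph{Where the proposal breaks.} Your Paley--Zygmund setup, the first-moment bound from $\min_P m_P=\Theta(2^{-n})$, and making the second moment uniform in $T$ by summing over pairs from all of $S$ all match the paper. The paper bounds $\mathbb{E}_C[X_C^2]\le\beta(S)-1$ with $\beta(S)=\mathbb{E}_C|C(S)\cap\mathcal{Z}|^2$. The gap is the step from an average over $S$ to a statement for all but a $2^{-\Omega(n)}$ fraction of $S$.

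Since $\mathcal{C}_k$ is a group and the uniform measure on $\mathcal{I}_{n-t}(V)$ is Clifford-invariant, the mean of your pair sum equals its full-Clifford value:
\[
\mathbb{E}_S\sum_{a\neq b\in S\setminus\{0\}}\Pr_C\bigl(C(a),C(b)\in\mathcal{Z}\bigr)
=\mathbb{E}_S\bigl[(|S\cap\mathcal{Z}|-1)(|S\cap\mathcal{Z}|-2)\bigr]=2\cdot 4^{-t}+O(2^{-n}).
\]
The mean is therefore already of the order of your threshold. Markov at level $c\,4^{-t}$ only excludes an $O(1/c)$ fraction of $S$. Getting the exceptional fraction down to $2^{-\Omega(n)}$ forces $c=2^{\Omega(n)}$, which destroys the $\Omega(2^{-t})$ conclusion.

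Your proposed split fails on both halves. That split is a nondegenerate part bounded by $O(4^{-t})$ for every $S$, plus a degenerate excess with mean $4^{-t}2^{-\Omega(n)}$.
\begin{itemize}
\item The nondegenerate part is not bounded for every $S$. The per-block probability for a $2$-dimensional isotropic restriction is $1/((2^k+1)(2^{k-1}+1))$; the $i=1$ factor of your product needs denominator $2^{2k-1}-2$, not $4^k-4$. So a generic pair has probability about $2^{n/k}4^{-n}$, not $4^{-n}$. This is offset on average only because roughly a $2^{-n/k}$ fraction of the pairs in a typical $S$ commute on every block, and that fraction depends on $S$. For example, if $S$ is a direct sum of isotropic subspaces of the individual blocks, every pair commutes blockwise. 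With $t=0$ and Lagrangian blocks one gets $\beta(S)=G_k(4)^{n/k}\approx 6^{n/k}$.
\item The degenerate patterns are not exponentially small. They carry roughly a $1-(1-3\cdot 2^{-k}+2\cdot 4^{-k})^{n/k}\approx 3n/(k2^k)$ share of the mean, which is only polynomially small.
\end{itemize}

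\paragraph{What is missing.} You need a genuine concentration bound over $S$, that is, a second moment over $S$ of the second moment over $C$. The paper does this as follows.
\begin{enumerate}
\item It writes $\mathbb{E}_S[\beta(S)^2]=\mathbb{E}_{C_1,C_2}\mathbb{E}_S\bigl[|S\cap C_1^\dagger(\mathcal{Z})|^2\,|S\cap C_2^\dagger(\mathcal{Z})|^2\bigr]$.
\item For Lagrangians $A,B$ with $\dim(A\cap B)=\kappa$, it proves $\mathbb{E}_S[|S\cap A|^2|S\cap B|^2]\le f_t^2+O(2^{\kappa-n})+O(4^{\kappa-n})$ (Lemma~\ref{lem:main_chunk}). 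The proof conditions on $S\cap A=M$, passes to the quotient $M^\perp/M$, and uses the isotropic-subspace counting.
\item It averages over $\kappa$, whose law under $\mathcal{C}_k$ has generating function $G_k(s)^{n/k}$. This gives $\mathrm{Var}_S[\beta(S)]=O\bigl((2/(2^k+1))^{n/k}\bigr)$.
\item Chebyshev then yields $\beta(S)-1=O(2^{-t})$ outside a $2^{-\Omega(n)}$ fraction of $S$.
\end{enumerate}
Without an argument of this kind, your plan proves the lemma only outside a constant fraction of $S$.
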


Using Lemma~\ref{main:lem:main_inequal}, we can recover each generator by drawing $O(2^t)$ random Clifford $C$. Since the stabilizer group has at most $n-t$ generators, it follows that a total of $m = O(n2^t)$ random unitaries suffices. We summarize this as follows:

\begin{corollary}[Recovery of the Weyl support]\label{main:cor:weyl-support-recovery}
    Let $C_1, C_2, \dots, C_m$ be sampled from $\mathcal{C}_k = \mathrm{Cl}(k)^{\otimes n/k}$,
    and let $\mathrm{Weyl}(\rho) \in \mathcal{I}_{n-t}(V)$, where $m = O(n2^t)$ and
    $k = \Omega(\log n)$. Then, with high probability over the choice of
    $C_1,\dots,C_m$, the following holds for all but an exponentially small fraction of
    $\mathrm{Weyl}(\rho)$:
    \begin{equation}
        \sum_{i=1}^{m} \bigl(\mathrm{Weyl}(\rho) \cap C_i^{\dagger}(\mathcal{Z})\bigr)
        = \mathrm{Weyl}(\rho).
    \end{equation}
\end{corollary}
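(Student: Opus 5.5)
The plan is to derive the corollary from Lemma~\ref{main:lem:main_inequal} by a "no proper subspace survives" argument, using a union bound over codimension-one subspaces. Write $S=\mathrm{Weyl}(\rho)$ and restrict to the $1-2^{-\Omega(n)}$ fraction of $S\in\mathcal{I}_{n-t}(V)$ for which the lemma holds. Put
\begin{equation}
U_m := \sum_{i=1}^{m}\bigl(S\cap C_i^{\dagger}(\mathcal{Z})\bigr)\subseteq S,
\end{equation}
which is a subspace because each summand is a subspace: $C_i^{\dagger}(\mathcal{Z})$ is the image of a subspace under a linear symplectic map. The basic observation is this. If $U_m\neq S$, then $U_m$ lies in some codimension-one subspace $T\subset S$. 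For that $T$ and every $i$, we get $S\cap C_i^\dagger(\mathcal{Z})\subseteq T$. Equivalently, $C_i(S\setminus T)\cap\mathcal{Z}=\emptyset$ for all $i$. Hence
\begin{equation}
\Pr\bigl(U_m\neq S\bigr)\le \sum_{T}\Pr\Bigl(\forall i:\ C_i(S\setminus T)\cap \mathcal{Z}=\emptyset\Bigr).
\end{equation}

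Since the $C_i$ are independent, each term in the sum is bounded by the lemma: it is at most $(1-c2^{-t})^m\le \exp(-c m 2^{-t})$ for some constant $c>0$. The number of codimension-one subspaces of the $(n-t)$-dimensional space $S$ is $2^{n-t}-1<2^n$. The union bound therefore gives $\Pr(U_m\neq S)\le 2^{n}\exp(-cm2^{-t})$. Choosing $m=C' n2^t$ with $C'$ a large enough constant makes this $2^{-\Omega(n)}$, and adding $\log(1/\delta)$ to $m$ gives any prescribed failure probability $\delta$.

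A sequential version gives the same bound and matches the intuition of recovering one generator per $O(2^t)$ Cliffords. While $U_j\neq S$, pick a hyperplane $T\supseteq U_j$. With probability $\Omega(2^{-t})$, a fresh $C$ contributes an element of $S\setminus T$, and this raises $\dim U_j$ by one. After at most $n-t$ such successes we have $U=S$, and a Chernoff bound on the number of successes among $O(n2^t)$ trials gives the claim. I prefer the union-bound version because it avoids conditioning on which hyperplane is chosen.

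The only real content is Lemma~\ref{main:lem:main_inequal}, which we may assume. The point needing care is that the lemma's bound must hold \emph{uniformly for every} hyperplane $T$ of a fixed good $S$, not merely on average over $T$, so that it can be used inside the union bound. The lemma's statement supplies exactly this uniformity.

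The exceptional set is determined by $S$ alone and does not depend on the $C_i$. So the "all but an exponentially small fraction of $\mathrm{Weyl}(\rho)$" qualifier is inherited directly from the lemma, and the "with high probability over $C_1,\dots,C_m$" part is the union bound above.
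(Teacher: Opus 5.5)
Your proposal is correct. It reaches the statement by a slightly different probabilistic route than the paper.

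\textbf{The paper's route.} The paper's proof is the sequential version you sketch second. It tracks $T_i=\sum_{j\le i}\bigl(S\cap C_j^{\dagger}(\mathcal{Z})\bigr)$. Whenever $T_{i-1}\subsetneq S$, it picks a hyperplane $T\supseteq T_{i-1}$ and uses Lemma~\ref{lem:pr_lower_overlap} to get $\dim T_i>\dim T_{i-1}$ with probability $\Omega(2^{-t})$. It then invokes ``a standard Chernoff-bound argument'' to obtain $n-t$ increments within $m=O(n2^t)$ rounds.

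\textbf{Your route.} Your main argument fixes each of the $2^{n-t}-1$ hyperplanes of $S$ in advance. For each one, independence of the $C_i$ gives failure probability at most $(1-c2^{-t})^m$, and you then union-bound over hyperplanes.

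\textbf{Comparison.} Your version gives an explicit, self-contained failure bound $2^{n-t}\exp(-cm2^{-t})$ built from genuinely independent trials. In the sequential version the hyperplane depends on $C_1,\dots,C_{i-1}$, so the success indicators are not independent. The Chernoff step therefore needs an observation the paper leaves implicit: given any history, the conditional success probability is at least $c2^{-t}$. Hence the number of successes stochastically dominates a $\mathrm{Bin}(m,c2^{-t})$ variable. The cost of your union bound is the factor $2^{n-t}$, but the $n$ already present in $m=O(n2^t)$ absorbs it, so both routes give the same scaling. The sequential view is more constructive and matches the ``one generator per $O(2^t)$ Cliffords'' picture used in the main text and the numerics.

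\textbf{Quantifier order.} Both arguments prove the statement in the form ``for each good $S$, with high probability over the $C_i$'', which is also how the paper uses it. If you want the literal order in the statement, where a single draw of the $C_i$ works for almost all $S$ simultaneously, apply Markov's inequality to the fraction of good $S$ that fail. This converts your per-$S$ bound of $2^{-\Omega(n)}$ into that form.
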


Corollary~\ref{main:cor:weyl-support-recovery} shows that the sampled Clifford bases are
sufficient to recover $\mathrm{Weyl}(\rho)$. It therefore remains to perform sufficiently
many computational difference samples in each such basis. This is formalized in the following fact.

\begin{fact}[Theorem 9.6 in~\cite{grewal2025efficient}]\label{fact:recover-heavy-subspace-each-basis}
    Let $\epsilon \in (0,0.25)$. For each $i \in [m]$, let $\rho_i := C_i \rho C_i^\dagger$, and let $\widehat{H}_i \subseteq \mathcal{X}$ be the subspace spanned by $O(n^2/\epsilon)$ independent computational difference samples drawn from $r_{\rho_i}$. Then, with high probability, $r_{\rho_i}(\widehat{H}_i) \ge 1 - \epsilon/2n$ simultaneously for all $i \in [m]$.
\end{fact}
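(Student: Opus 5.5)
The statement to prove is Fact~\ref{fact:recover-heavy-subspace-each-basis}. The plan is to reduce it to a single basis and then take a union bound over the $m=O(n2^t)$ bases. The key structural point is that $r_{\rho_i}$ is a distribution on $\mathcal{X}\cong\mathbb{F}_2^n$. So the span of i.i.d.\ samples can grow only while the current span $H$ has mass $r_{\rho_i}(H)<1-\epsilon/2n$; each such "bad" step strictly increases $\dim H$, and this can happen at most $n$ times. Notably, this argument does not use purity of $\rho$, which fits the paper's remark that the framework extends to mixed states.

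Fix $i$. Let $H_j$ be the span of the first $j$ samples $s_1,\dots,s_j\sim r_{\rho_i}$. I will run the process in phases. Whenever $r_{\rho_i}(H_j)<1-\epsilon/2n$, the next sample lands outside $H_j$ with probability greater than $\epsilon/2n$. When it does, $\dim H_{j+1}=\dim H_j+1$. Since the dimension is at most $n$, the event $r_{\rho_i}(\widehat H_i)<1-\epsilon/2n$ after $N$ samples implies that among $N$ trials, each of which (conditionally on the past) succeeds with probability at least $\epsilon/2n$, fewer than $n$ succeeded.

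To bound that event, I will stochastically dominate the success process by i.i.d.\ Bernoulli$(\epsilon/2n)$ variables. This works via a standard coupling or martingale argument: as long as the process is still "bad", each trial has conditional success probability at least $\epsilon/2n$, and once it becomes "good" it stays good, because spans are monotone and $r_{\rho_i}(H)$ is monotone in $H$. With $N=c\,n^2/\epsilon$ samples, the expected number of successes is $cn/2$. A Chernoff bound then gives $\Pr[\text{fewer than } n \text{ successes}]\le e^{-\Omega(cn)}$ for a large enough constant $c$. To make the bound uniform over $m=O(n2^t)$ indices, I will choose $N=O(n^2/\epsilon+n\log(m/\delta)/\epsilon)$. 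With $t=O(\log n)$ and $\delta=1/\mathrm{poly}(n)$, this is still $O(n^2/\epsilon)$. A union bound over $i\in[m]$ then gives simultaneous success with probability at least $1-m e^{-\Omega(n)}$, which is high probability.

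The only delicate step is the domination argument, because the success probabilities depend adaptively on the current span. I would handle it by defining a stopping time $\tau$: the first $j$ with $r_{\rho_i}(H_j)\ge 1-\epsilon/2n$. Before $\tau$, the indicators $\mathbf 1\{s_{j+1}\notin H_j\}$ have conditional mean at least $\epsilon/2n$. Azuma's inequality, or a direct coupling with uniform variables $u_j$ where success is declared whenever $u_j<\epsilon/2n$, then yields the Chernoff-type tail. The hypothesis $\epsilon<0.25$ is not needed here; it only matters downstream.
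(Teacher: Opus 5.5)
Your proposal is correct and follows the paper's route. In the proof of Corollary~\ref{cor:part2_main}, the paper applies the heavy-subspace lemma (Fact~\ref{fact:heavy}, via Corollary~\ref{cor:r_mass}) to each $r_{\rho_i}$ with error parameter $\epsilon/2n$ and failure probability $\delta/m$, obtaining $N_S=O\bigl(n(n+\log(m/\delta))/\epsilon\bigr)$, and then takes a union bound over $i\in[m]$, exactly as you do. The only difference is that the paper cites that lemma, whereas you re-derive it inline with the dimension-increment argument plus Chernoff/stopping-time domination, which is essentially how the cited lemma is proved.
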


By construction, each $\widehat{H}_i$ accounts for a large fraction of the mass of $r_{\rho_i}$. Summing these subspaces therefore yields an approximation $\widehat{S}$ to $\mathrm{Weyl}(\rho)$. We summarize this conclusion in the following Lemma.

\begin{lemma}[Reconstruction via subspace sum]\label{lem:sum_of_subspaces}
    Let $\epsilon \in (0,0.25)$, and let $\widehat{S}\coloneq\sum_{i=1}^{m} C_i^\dagger \bigl( (\widehat{H}_i + \mathcal{Z})^\perp \bigr)$, where, for every $i\in[m]$, $r_{\rho_i}(\widehat{H}_i)\ge 1-\epsilon/2n$. Then, conditioned on the event in Corollary~\ref{main:cor:weyl-support-recovery},
    $\widehat S$ is an isotropic subspace satisfying
    \begin{equation}
        \widehat S\supseteq \mathrm{Weyl}(\rho),
        \quad
        \mathbb{E}_{x\sim \widehat{S}}\!\left[\mathrm{tr}(\rho W_x)^2\right]
        \ge 1-\epsilon.
    \end{equation}
\end{lemma}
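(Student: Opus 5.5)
We have to prove three things about $\widehat S$: that it contains $\mathrm{Weyl}(\rho)$, that it is isotropic, and that it carries enough Pauli weight. The plan is to work basis by basis. Fix $i$, write $S:=\mathrm{Weyl}(\rho)$, and set $S_i:=C_i(S)$, so that $S_i=\mathrm{Weyl}(\rho_i)$. Everything rests on one structural fact about computational difference sampling, which we prove for mixed states as well.

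\textbf{Structural fact.} Expanding the computational-basis projectors in Pauli-$Z$ strings gives
\[
r_{\rho_i}(a,0^n)=2^{-n}\sum_{z\in\mathbb{F}_2^n}(-1)^{a\cdot z}\,\mathrm{tr}(\rho_i W_{(0,z)})^2 .
\]
Hence $r_{\rho_i}$ is the Fourier transform of the nonnegative function $z\mapsto \mathrm{tr}(\rho_iW_{(0,z)})^2$. For $z\in S_i\cap\mathcal Z$ we have $\mathrm{tr}(\rho_iW_{(0,z)})^2=1$. A standard argument (for example, $\rho_i W\rho_i$-type reasoning, or simply that $W$ with $\mathrm{tr}(\rho_i W)=\pm1$ fixes $\rho_i$'s support) shows that the outcome distribution is supported on a coset of $(S_i\cap\mathcal Z)^{\perp}$ within the $x$-space. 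Therefore $r_{\rho_i}$ is supported on $\mathcal X\cap (S_i\cap\mathcal Z)^\perp$.

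\textbf{Containment.} Every sample lies in $(S_i\cap\mathcal Z)^\perp$, so $\widehat H_i\subseteq (S_i\cap\mathcal Z)^\perp$. Trivially $\mathcal Z\subseteq (S_i\cap\mathcal Z)^\perp$, because $\mathcal Z$ is isotropic. Taking complements gives
\[
S_i\cap\mathcal Z\subseteq(\widehat H_i+\mathcal Z)^\perp .
\]
Applying $C_i^\dagger$ and summing over $i$, Corollary~\ref{main:cor:weyl-support-recovery} yields $\widehat S\supseteq\sum_i (S\cap C_i^\dagger\mathcal Z)=S$.

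\textbf{Isotropy.} Each term $(\widehat H_i+\mathcal Z)^\perp$ lies inside $\mathcal Z^\perp=\mathcal Z$, so it is isotropic. Sums of such terms need not be isotropic in general. The plan here is to prove the reverse inclusion: if every $(\widehat H_i+\mathcal Z)^\perp\subseteq S_i$, then $\widehat S=S$, which is isotropic.

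Suppose $v\in(\widehat H_i+\mathcal Z)^\perp$ but $v\notin S_i$. Since $v\in\mathcal Z$, we have $v=(0,z)$ with $\mathrm{tr}(\rho_iW_v)^2<1$. The condition $v\perp\widehat H_i$ means $\widehat H_i\subseteq v^\perp$. Consequently
\[
r_{\rho_i}(v^\perp\cap\mathcal X)\ge 1-\epsilon/2n .
\]
By the Fourier formula, $r_{\rho_i}(v^\perp\cap\mathcal X)=\tfrac12\bigl(1+\mathrm{tr}(\rho_iW_v)^2\bigr)$. This forces $\mathrm{tr}(\rho_iW_v)^2\ge 1-\epsilon/n$.

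\textbf{Main obstacle.} This argument shows $\widehat S$ consists of near-stabilizers, not exact ones. The delicate step is therefore twofold: establishing isotropy, and obtaining the expectation bound. For isotropy, I would show that two Paulis $P,Q$ with $\mathrm{tr}(\rho P)^2,\mathrm{tr}(\rho Q)^2>1-2\epsilon/n\ge 1/2$ must commute. The route is through the anticommutation uncertainty bound $\mathrm{tr}(\rho P)^2+\mathrm{tr}(\rho Q)^2\le1$, applied to the generators of the $(\widehat H_i+\mathcal Z)^\perp$ across different $i$. This is where $\epsilon<0.25$ is used.

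\textbf{Expectation bound.} I would first prove, via a union/Cauchy–Schwarz chain, that any product of near-stabilizers in $\widehat S$ has weight at least $1-\sum_j\epsilon_j$. The uniform average over $\widehat S$ is then treated using $\dim\widehat S\le n$ generators, each losing at most $\epsilon/n$. Concretely, for $P=\prod_jP_j$ with commuting $P_j$, the bound $\mathrm{tr}(\rho P)\ge 1-\sum_j(1-\mathrm{tr}(\rho P_j))$ follows from the operator inequality $I-P_1P_2\preceq (I-P_1)+(I-P_2)$. Induction on the number of factors then gives $\mathrm{tr}(\rho W_x)^2\ge 1-\epsilon$ for every $x\in\widehat S$, and a fortiori in expectation.
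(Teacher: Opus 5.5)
Your proposal is correct in substance. Only the containment step matches the paper; for isotropy and the weight bound you take a different, pointwise route.

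\textbf{The paper's route.} The paper works entirely with averages. Lemma~\ref{lem:cds-subspace-mass} turns $r_{\rho_i}(\widehat H_i)\ge 1-\epsilon/2n$ into $p_\rho(\widehat S_i)\ge (1-\epsilon/2n)|\widehat S_i|/2^n$ for $\widehat S_i:=C_i^\dagger((\widehat H_i+\mathcal Z)^\perp)$. The imported Lemma~9.4 of Grewal et al.\ (Fact~\ref{fact:reduce_2n}) passes this to the sum; a sum of subspaces of $\mathbb{F}_2^{2n}$ needs at most $2n$ summands, which is where the $2n$ comes from. Isotropy is then read off from Lemma~B.7 (Fact~\ref{fact:heavy-subspace-in-M}), using $1-\epsilon>3/4$.

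\textbf{Your route.} You apply the same mass formula to the hyperplane $v^\perp\cap\mathcal X\supseteq\widehat H_i$. This gives $\mathrm{tr}(\rho W_x)^2\ge 1-\epsilon/n$ for \emph{every} $x\in\bigcup_i\widehat S_i$. Isotropy then follows from the anticommutation bound $\mathrm{tr}(\rho P)^2+\mathrm{tr}(\rho Q)^2\le 1$ together with bilinearity of the symplectic form. Isotropy in turn gives a basis of at most $n$ vectors drawn from $\bigcup_i\widehat S_i$. Finally, $I-P_1P_2\preceq (I-P_1)+(I-P_2)$, with signs chosen so that each $\mathrm{tr}(\rho P_j)\ge 0$, controls every element of $\widehat S$.

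\textbf{What each route buys.} Yours is self-contained and gives a guarantee for every $x\in\widehat S$, not just on average. It also does not depend on the threshold $1/4$: you only need $\epsilon/n<1/2$ and $\epsilon<1$. So your remark that this is where $\epsilon<0.25$ enters is not accurate for your own argument. The paper's route is shorter given the cited facts, and it never needs isotropy before the weight bound. The reverse inclusion $(\widehat H_i+\mathcal Z)^\perp\subseteq S_i$ that you first aim for is false in general, and your final argument rightly does without it.

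\textbf{A bookkeeping step to tighten.} Suppose each generator loses $\epsilon/n$ in the sense $1-\mathrm{tr}(\rho P_j)\le\epsilon/n$. Then your chain only gives $\mathrm{tr}(\rho P)\ge 1-\epsilon$, hence $\mathrm{tr}(\rho W_x)^2\ge(1-\epsilon)^2$, which is weaker than what you claim. Use instead $1-\mathrm{tr}(\rho P_j)\le 1-s$ with $s:=\sqrt{1-\epsilon/n}$. Then $\mathrm{tr}(\rho P)\ge ns-(n-1)\ge 0$. Moreover $(ns-n+1)^2\ge 1-n(1-s^2)=1-\epsilon$: the difference of the two sides vanishes at $s=1$ and has derivative $2n(n-1)(s-1)\le 0$ in $s$. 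Alternatively, apply Lemma~\ref{lem:sos} to the lines $\langle v_j\rangle$. Each has average weight $\tfrac12\bigl(1+\mathrm{tr}(\rho W_{v_j})^2\bigr)\ge 1-\epsilon/2n$, so this gives the required average bound directly (even $1-\epsilon/2$, using your $\dim\widehat S\le n$).
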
 

Indeed, the sample complexity in Theorem~\ref{thm1} follows immediately from the two preceding steps.
By Corollary~\ref{main:cor:weyl-support-recovery}, it suffices to sample
$m=O(n2^t)$ measurement bases. For each such basis, by
Fact~\ref{fact:recover-heavy-subspace-each-basis}, we require
$N_S=O(n^2/\epsilon)$ computational difference samples. Hence, the total number of copies becomes $O(2^t n^3/\epsilon)$.

The runtime in Theorem~\ref{thm1} is dominated by the Gaussian elimination and subsequent subspace
operations used to construct $\{\widehat{H}_i\}$ and aggregate them into
$\widehat{S}$, which yield a total runtime of $O(2^t n^5/\epsilon)$.

\section{Worst case}
The algorithm based on our shallow-depth measurements does not work for all inputs $\rho$. A representative counterexample is the $n$-qubit GHZ state. Among the generators of its stabilizer group $S_{\mathrm{GHZ}}$, $n-1$ have weight $2$, while the remaining one has weight $n$. Hence, if we define $T_0=\langle Z_1Z_2,\dots,Z_{n-1}Z_n\rangle=\{(0^n,z):\sum_i z_i \equiv 0\pmod2\}$, then every element of $S_{\mathrm{GHZ}}\setminus T_0$ has weight $n$, making this an especially hard instance for our approach. In this case, for $C\sim\mathcal{C}_k$, one can show that 
\begin{equation}
    \Pr_C\!\left(C(S_{\mathrm{GHZ}} \setminus T_0) \cap \mathcal{Z} \neq \emptyset\right) \leq (2/3)^{n/k}.
\end{equation}
A proof is given in the Appendix~\ref{appx:ghz}. Therefore, for the Clifford ensemble $\mathcal{C}_k$, $k$ must scale at least linearly with $n$. Nevertheless, this obstruction can be bypassed by employing a single round of adaptivity. Indeed, one can first learn an $(n-1)$-dimensional subgroup $T_0 \subset S_{\mathrm{GHZ}}$ without much difficulty. Since $|T_0^{\perp}/T_0|=4$, it then suffices to identify which of the four candidates is correct. This avoids the need to draw exponentially many unitaries. Moreover, this adaptive step still uses only constant-depth circuits, so the overall depth remains $O(\log n)$.

\section{Numerical simulations}
\begin{figure}[t]
    \centering
    \includegraphics[width=1\linewidth, trim=1cm 7.3cm 3cm 1cm, clip]{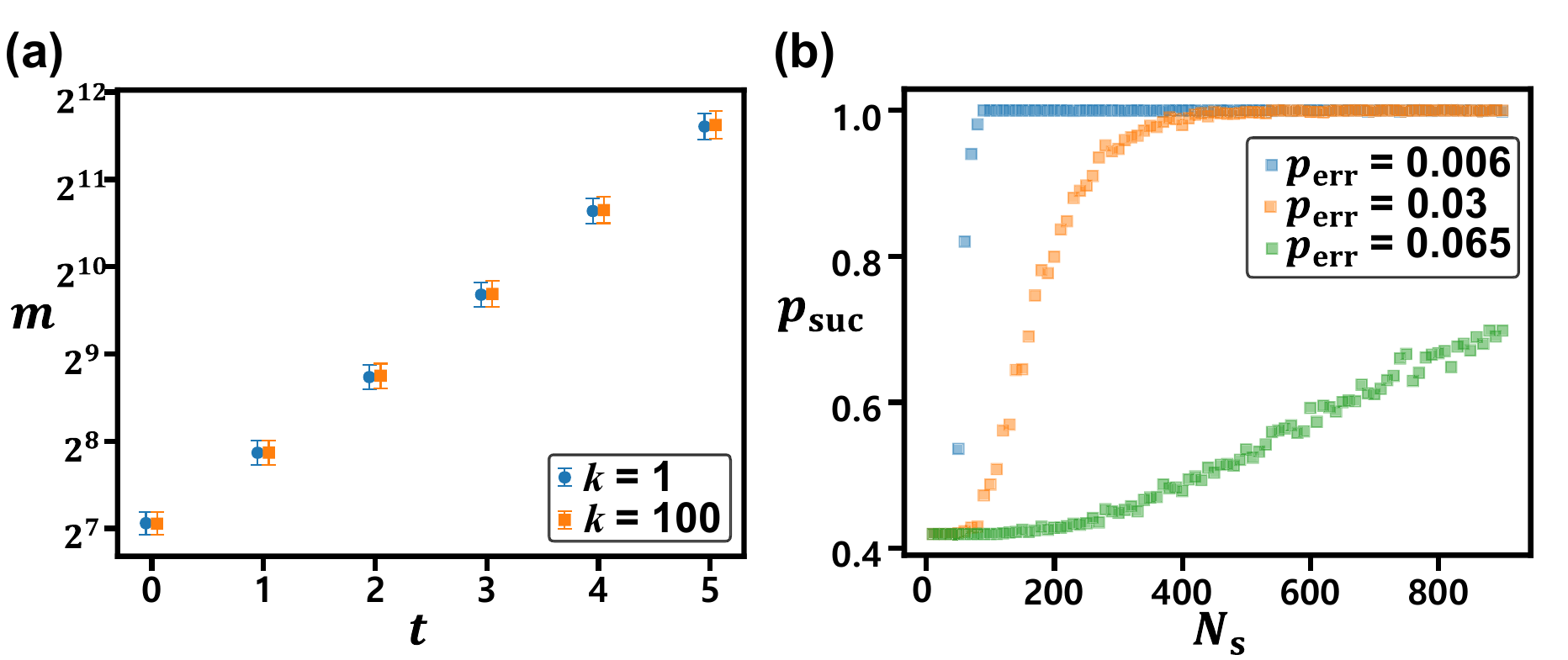}
    \caption{\label{fig:2}
    Numerical simulations.
    (a) Learning $\mathrm{Weyl}(\rho)$ for random instances. We fix $n=100$ and vary $t=0,\dots,5$, where $\dim \mathrm{Weyl}(\rho)=n-t$. For each $t$, we estimate the number $m$ of sampled Clifford circuits required to recover $\mathrm{Weyl}(\rho)$, comparing the Clifford ensembles $\mathcal{C}_k=\mathrm{Cl}(k)^{\otimes n/k}$ for $k=1$ and $k=100$.
    (b) Recovery under noisy computational difference samples. We fix $n=24$, $t=0$, and vary the number $N_S$ of computational difference samples obtained from a single measurement basis. We evaluate the success probability $p_{\mathrm{suc}}$ of recovering the hidden symmetry when each measured bit is flipped independently with probability $p_{\mathrm{err}}$. In both panels, each point shows the mean and standard deviation over $1000$ independent trials. For clarity, error bars are omitted in (b); see Appendix~\ref{appx:numeric} for the full plot.
    }
\end{figure}
Here, we compare the block-Clifford ensemble $\mathcal{C}_k$ with the full random Clifford ensemble $\mathrm{Cl}(n)$ through numerical experiments. Algorithm~\ref{alg} consists of two steps: first, sampling $C \sim \mathcal{C}$, and second, performing computational difference sampling.

If $C^{\dagger}(\mathcal{Z})$ has a non-trivial overlap with $\mathrm{Weyl}(\rho)$, the number of samples required in the computational difference sampling step is the same for both $\mathcal{C}_k$ and $\mathrm{Cl}(n)$. Therefore, we first study how effectively a sampled $C$ can span $\mathrm{Weyl}(\rho)$, by comparing how efficiently the two ensembles reveal generators of Weyl($\rho$).

In Fig.~\ref{fig:2}(a), we fix $n=100$ and vary $t$ from $0$ to $5$. For each ensemble, we record the number $m$ of sampled unitaries required to span $\mathrm{Weyl}(\rho)$. Each data point is averaged over $1000$ trials. While our theoretical analysis uses circuits of depth $O(\log n)$, in the numerics we consider the case $k=1$, corresponding to random Pauli measurements ($\mathcal{C}_1$). As shown in the figure, for all values of $t$, random Pauli measurements and random Clifford measurements require essentially the same number of samples $m$ to recover $\mathrm{Weyl}(\rho)$. Moreover, the required $m$ increases by approximately a factor of two when $t$ increases by one, consistent with the theoretical prediction. Taken together, these numerical results suggest that $k=1$ already suffices for learning almost all stabilizer groups. Additional simulation results for a range of system sizes, as well as for stabilizer states that commonly arise in physics and quantum error correction rather than random instances, are provided in the Appendix~\ref{appx:numeric}.

Reducing the circuit depth from $O(n)$ to $O(\log n)$ directly lowers the cost of gate compilation and implementation. In addition, on noisy quantum devices, fewer gates lead to reduced accumulated errors. Once a suitable $C$ that provides actionable information about $\mathrm{Weyl}(\rho)$ is obtained, the computational difference sampling step does not depend on the choice of the Clifford ensemble. However, using shallow-depth circuits reduces measurement noise, which can improve the efficiency of classical post-processing.

In Fig.~\ref{fig:2}(b), we compare the recovery of hidden symmetries under noise using computational difference sampling. To reconstruct the hidden bitstring from noisy samples, we use the Fast Walsh--Hadamard Transform (FWHT) correlation test, with details provided in the Appendix~\ref{appx:numeric}.

In the experiment, we vary the number of samples $N_S$ and evaluate the success
probability $p_{\mathrm{suc}}$ of correctly recovering the hidden symmetry for a
system with $n=24$ and $t=0$. Noise is introduced by flipping each bit
independently with probability $p_{\mathrm{err}}$. Each data point is averaged over $1000$
trials. As shown in the figure, the number of samples required for successful
post-processing increases with the noise level. The common initial value across all $p_{\mathrm{err}}$ arises because, when $\mathrm{Weyl}(\rho) \cap C^{\dagger}(\mathcal{Z}) = \{0\}$, the correlation values are negligible regardless of the noise level.

These numerical simulations suggest that, in the presence of noise, using shallow
unitary ensembles not only simplifies circuit implementation but also reduces the
number of experimental repetitions required for reliable post-processing.

\section{Lower bound}
Known single-copy algorithms for learning an $(n-t)$-dimensional stabilizer group
require a number of samples that scales exponentially in $t$~\cite{grewal2025efficient, chia2024efficient}. It is therefore natural
to ask whether this dependence is merely an artifact of existing techniques, or whether
it is in fact inherent to the single-copy setting. The following theorem shows that the
latter is the case: in the worst case, any algorithm based on single-copy measurements
must use at least $\Omega(2^t)$ samples. Consequently, when $t=\omega(\log n)$, both
the sample complexity and the overall runtime become super-polynomial.

\begin{theorem}[Single-copy lower bound on $t$-dependence]
    Any algorithm that uses potentially adaptive single-copy measurements for
    $(\epsilon,\delta,t)$-stabilizer group learning must, in the worst case, use at least
    $\Omega(2^t)$ samples.
\end{theorem}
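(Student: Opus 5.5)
We reduce to a distinguishing task between a fixed ``null'' state and a family of ``planted'' states. The lower bound is then obtained with the standard single-copy adaptive framework: a learning tree combined with Le Cam's two-point method, as used for Pauli shadow tomography and purity testing.

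\emph{The hard instance.} Write $n = (n-t) + t$. On the first $n-t$ qubits put the fixed stabilizer state $\ket{0^{n-t}}$, which contributes the known stabilizer group $S_0=\langle Z_1,\dots,Z_{n-t}\rangle$. On the last $t$ qubits the null hypothesis places the maximally mixed state $I/2^t$. Then $\rho_0 = \ket{0^{n-t}}\bra{0^{n-t}}\otimes I/2^t$ has $\mathrm{Weyl}(\rho_0)=S_0$, which has dimension exactly $n-t$.

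The alternative hypotheses are indexed by a nonidentity Pauli $P$ on the last $t$ qubits, with $P$ taken up to sign and with a random sign $s\in\{\pm1\}$:
\begin{equation}
\rho_{P,s}=\ket{0^{n-t}}\bra{0^{n-t}}\otimes \frac{I+s\,(1-2\eta)P}{2^t},
\end{equation}
where $\eta$ is a small constant. These states still have stabilizer dimension $n-t$, because $\mathrm{tr}(\rho W_P)^2=(1-2\eta)^2<1$.

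Any valid output $\widehat S$ must contain $S_0$ and must satisfy $\mathbb{E}_{x\sim\widehat S}\mathrm{tr}(\rho W_x)^2\ge 1-\epsilon$. We will check the following two facts for $\epsilon$ below a small constant, adjusting $\eta$ if needed.
\begin{enumerate}
\item Under $\rho_0$, the output is forced to be exactly $\widehat S=S_0$. If $\widehat S$ contains any extra element outside $S_0$, that element has zero expectation under $\rho_0$, so a constant fraction of $\widehat S$ has zero weight and the average drops far below $1-\epsilon$.
\item Under $\rho_{P,s}$, the output $\widehat S=S_0$ is still permitted. However, the estimate of the weighted average reveals whether $P$ was present, provided we instead choose the hard alternatives as \emph{pure} on the extra direction, i.e.\ $\eta=0$.
\end{enumerate}

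Taking $\eta=0$ makes the stabilizer dimension $n-t+1$, which violates the promise. So the cleaner plan is to make both hypotheses have dimension $n-t$ and have different \emph{required} outputs:
\begin{itemize}
\item Null hypothesis: $\rho_0=\ket{0^{n-t}}\bra{0^{n-t}}\otimes I/2^t$, with required output $\widehat S=S_0$.
\item Alternative hypothesis: $\rho_P=\ket{0^{n-t-1}}\bra{0^{n-t-1}}\otimes I/2\otimes (I+P)/2^{t}$ on $t$ qubits. Here $P$ is a stabilizer, and the first block loses one $Z$, so the dimension is still $n-t$.
\end{itemize}
The required output for $\rho_P$ contains $P$, while the required output for $\rho_0$ contains $Z_{n-t}$ instead. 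These are incompatible, so a successful learner distinguishes $\rho_0$ from the mixture $\mathbb{E}_P\rho_P$ with probability $\ge 1-2\delta$. The $Z_{n-t}$ qubit is easy to distinguish, however, and this would spoil the reduction. We therefore symmetrize: the null state is likewise $\ket{0^{n-t-1}}\bra{0^{n-t-1}}\otimes I/2\otimes \sigma$, where $\sigma$ is itself a random $(I+Q)/2^t$ with $Q$ acting on a different set of registers. Concretely, we set up two disjoint blocks of $t$ qubits and hide the single pure Pauli symmetry uniformly among the $4^{t}-1$ Paulis of one block.

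\emph{Key estimate.} With this symmetrization, the task reduces to distinguishing $I/2^t$ from $\mathbb{E}_P\,(I+P)/2^t$ (with random sign) using adaptive single-copy measurements. The learning-tree argument gives, for any POVM element $\ket{\psi}\bra{\psi}$, the likelihood ratio
\begin{equation}
1+\frac{\bra{\psi}P\ket{\psi}}{1}\,s.
\end{equation}
Averaging over the sign and over $P$, and using
\begin{equation}
\sum_P \bra{\psi}P\ket{\psi}^2 = 2^t-1\quad\text{(pure }\psi),
\end{equation}
the second moment of the likelihood ratio per step is $1+O(2^{-t})$. Along each root-to-leaf path, the standard one-sided bound, namely $\mathbb{E}_P\prod_i(1+s\langle P\rangle_i)\ge$ something, via Jensen or the martingale argument of Chen et al., gives total variation $O(N/2^t)$ after $N$ samples. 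Hence $N=\Omega(2^t)$ samples are needed to succeed with constant probability.

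The main obstacle is exactly this last step. Adaptive trees do not factorize, and the sign average kills only the linear term. I will follow the Chen--Cotler--Huang--Li technique: bound $\mathbb{E}_{P,s}\prod_i(1+s\,x_i(P))$ from below by $1-O(N\,\max_\psi\mathbb{E}_P\langle P\rangle_\psi^2)$, which is $1-O(N2^{-t})$. A further technical point is ensuring that the hard states satisfy the promise $\dim\mathrm{Weyl}=n-t$ exactly, and that the required outputs under the two hypotheses are disjoint. I will verify this carefully, adjusting the padding block as needed.
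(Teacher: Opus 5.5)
The gap is in the hard instance, which you left unverified, and the symmetrized version you settle on does not satisfy the promise. Take pure padding, one $t$-qubit block in $(I+P)/2^t$, and a disjoint $t$-qubit decoy block in $I/2^t$. The stabilizer dimension is then at most $(n-2t)+1$, which equals $n-t$ only when $t=1$: the maximally mixed decoy block costs you $t$ stabilizers. If you relabel so the promise holds, $t'=2t-1$, your argument gives only $\Omega(2^{t})=\Omega(2^{(t'+1)/2})$, a square-root loss in the exponent.

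For comparison, the paper uses exactly the $\eta=0$ instance you discarded: $\ket{0^{n-t}}\bra{0^{n-t}}\otimes(I+P)/2^t$ versus $\ket{0^{n-t}}\bra{0^{n-t}}\otimes I/2^t$. It forces $\widehat S=\mathrm{Weyl}(\rho)$ for $\epsilon<1/2$ by the coset argument, absorbs the fixed pure padding into induced $t$-qubit POVMs, and cites the known adaptive $\Omega(2^t)$ Pauli-testing bound. Your objection that the alternative then has $\dim\mathrm{Weyl}=n-t+1$ is correct. The paper does not address it and implicitly assumes the learner handles both dimensions.

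A repair that keeps the promise exact and leaves the rest of your plan intact is to hide the Pauli on $t+1$ qubits, $\rho_B=\ket{0^{n-t-1}}\bra{0^{n-t-1}}\otimes(I+sP)/2^{t+1}$. Then $\dim\mathrm{Weyl}(\rho_B)=n-t$, and the coset argument forces the learner to output $\mathrm{Weyl}(\rho_B)$. The null $\rho_A=\ket{0^{n-t-1}}\bra{0^{n-t-1}}\otimes I/2^{t+1}$ lies outside the promise, so the learner's output there is arbitrary. Handle this by verification:
\begin{itemize}
\item pick any $x\in\widehat S\setminus\langle Z_1,\dots,Z_{n-t-1}\rangle$, answering $\rho_A$ if none exists;
\item measure $W_x$ on $O(1)$ fresh copies;
\item under $\rho_B$ the outcomes are constant, while under $\rho_A$ every such $W_x$ has zero mean, so the outcomes are fair coins.
\end{itemize}
This is an adaptive single-copy tester using $N+O(1)$ copies, so the $\Omega(2^{t+1})$ testing bound gives $N=\Omega(2^t)$. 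Alternatively, split the nonidentity Paulis on those $t+1$ qubits into two halves. Both resulting hypotheses satisfy the promise, have disjoint required outputs, and are each $O(N/2^t)$-close to the maximally mixed state by your learning-tree estimate.

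If you prove the testing bound yourself rather than cite it, Jensen does not apply directly to $\mathbb{E}_{P,s}\prod_i(1+s\,x_i(P))$ at full strength. A factor vanishes whenever $\psi_i$ is a $(-s)$-eigenvector of $P$. Fix this in two steps:
\begin{itemize}
\item First restrict to $G=\{P:|x_i(P)|\le 1/2\ \forall i\}$. This only lowers the nonnegative average, and on $t$ qubits $\Pr(G^c)\le 4N/(2^t+1)$ by Markov.
\item Then apply Jensen conditionally, using $\mathbb{E}_s\log(1+sx)=\tfrac12\log(1-x^2)\ge-\tfrac23x^2$ for $|x|\le 1/2$.
\end{itemize}
This gives the leafwise likelihood ratio $1-O(N/2^t)$ you need.
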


A detailed proof is given in the Appendix~\ref{appx:lower}, where we reduce our stabilizer group learning problem to a previously studied single-copy Pauli testing problem and invoke the known lower bound~\cite{huang2021information, chen2022exponential, chen2024optimal}.

Combining this hardness result for classical learners with the sample-efficient Bell difference sampling protocol for quantum learners~\cite{grewal2025efficient, hangleiter2024bell, chen2025stabilizer, leone2024learning, gross2021schur} yields a quantum advantage for learning Pauli symmetries of quantum systems. This separation is notable for several reasons. First, once a candidate symmetry group is found, it can be verified efficiently. Second, the quantum algorithm is not only sample-efficient but also computationally efficient, in contrast to previously known examples where the advantage appears only at the level of sample complexity~\cite{huang2021information}. Finally, identifying symmetries of quantum states is a natural problem in many-body physics, giving this learning problem direct physical motivation and potential applications.

\section{Discussion and Outlook}
We studied the problem of learning stabilizer groups using single-copy measurements. On the positive side, we showed that shallow-depth measurements suffice to efficiently learn almost all stabilizer groups, using a refined counting argument that extends from Lagrangian to general isotropic subspaces in symplectic vector spaces. On the negative side, we established a limitation of arbitrary single-copy measurements by showing that, in the worst case, learning can require exponentially more samples than in the two-copy setting~\cite{grewal2025efficient, hangleiter2024bell, chen2025stabilizer} when $t$ is large.

A natural direction for future work is to understand whether, by allowing adaptivity, one can learn stabilizer groups using only shallow-depth measurements even in the worst case. It is also interesting to study what can be proved when the measurements are further restricted. As we observed in numerical experiments and previous work~\cite{arunachalam2022optimal}, even single-qubit measurements are already very effective. 
Another interesting direction is to consider variants of the block-Clifford ensemble used in this work. If the block-Clifford ensemble is arranged in a brickwork pattern, one obtains a two-layer block-Clifford ensemble, which can be used to construct approximate unitary designs~\cite{schuster2025random}. It would be interesting to understand whether the two-layer block-Clifford ensemble is sufficient in the worst case.

\begin{acknowledgments}
This work was supported by a National Research Foundation of Korea (NRF) grant funded
by the Korean Government (MSIT) (No. 2019M3E4A1080144, No. 2019M3E4A1080145, No.
2019R1A5A1027055, RS-2023-00283291, SRC Center for Quantum Coherence in Condensed
Matter RS-2023-00207732, quantum computing technology development program No.
2020M3H3A1110365, and No. 2023R1A2C2005809, No. RS-2024-00413957, No. RS-2024-
00442994), a Korea Basic Science Institute (National Research Facilities and Equipment
Center) grant funded by the Ministry of Education (No. 2021R1A6C101B418).
\end{acknowledgments}

\bibliography{main_ref}
\appendix
\onecolumngrid

\section{Related work}
\paragraph{Multi-copy and single-copy stabilizer learning.}
Previous work on stabilizer learning has explored both multi-copy~\cite{grewal2025efficient, leone2024learning, hangleiter2024bell, montanaro2017learning} and single-copy
measurement settings~\cite{grewal2025efficient, chia2024efficient}. Multi-copy methods are often more efficient, but typically rely
on entangled measurements across different copies of the state and may require coherent
quantum memory. Single-copy approaches avoid this requirement and are therefore more
appealing from an experimental point of view, but existing methods usually rely on
random Clifford measurements implemented by circuits of depth $O(n)$. In contrast, our
result shows that, for almost all stabilizer groups $\mathrm{Weyl}(\rho)$ of dimension
$n-t$ with $t=O(\log n)$, one can learn the hidden Pauli symmetries without coherent
quantum memory and without using linear-depth measurement circuits.

\paragraph{State hidden subgroup problem.}
Our learning problem can be viewed as a special case of the recently introduced state
hidden subgroup problem~\cite{hinsche2025abelian, bouland2025state}, which asks for the hidden symmetry subgroup of an unknown
quantum state. From this perspective, learning stabilizer groups is one representative
example, alongside other tasks such as identifying cuts across which the state is
unentangled and learning hidden translation symmetries. The common theme is that one
aims to recover a symmetry of the state rather than a complete classical description
of it. The standard algorithmic approach to this framework is based on Fourier sampling~\cite{bouland2025state}.
However, implementing it in full generality typically relies on generalized phase
estimation, which requires auxiliary registers and controlled group actions and can
therefore be demanding in practice. More recent work~\cite{hinsche2025abelian} has shown that, in several
important cases, the same structure can be realized through simpler measurement primitives, including two-copy routines such as Bell sampling and Bell difference sampling.

\paragraph{Stabilizer testing.}
A related line of work considers stabilizer testing~\cite{hinsche2025single, gross2021schur} rather than stabilizer learning. Here the
task is only to determine whether an unknown state is a stabilizer state or is $\epsilon$-far from the
stabilizer family, so the problem is generally easier than full learning. Nevertheless, as in
learning, the measurement model plays a decisive role. Two-copy protocols based on Bell
difference sampling suffice with \(O(1)\) copies~\cite{gross2021schur}, while single-copy protocols are inherently
more costly. Recent work~\cite{hinsche2025single} showed that single-copy stabilizer testing can be carried out
using computational difference sampling with \(O(n)\) copies, and also proved a lower bound
of \(\Omega(\sqrt{n})\) copies for any single-copy algorithm. This again highlights the gap
between single-copy and multi-copy access.

\paragraph{Randomized measurements and quantum state tomography.}
The block-Clifford ensemble $\mathcal{C}_k=\mathrm{Cl}(k)^{\otimes n/k}$ used in our work
has also appeared in recent studies on randomized measurements~\cite{cho2025entanglement} and quantum state tomography~\cite{cho2025sample}.
In the randomized-measurement setting, such ensembles provide a hardware-efficient way to
estimate a variety of quantities, ranging from linear observables such as $\mathrm{tr}(O\rho)$ to
nonlinear ones such as the purity $\mathrm{tr}(\rho^2)$. In this line of work, the choice of unitary ensemble affects not only the sample complexity,
but also the cost of the classical post-processing. While earlier approaches~\cite{bertoni2024shallow, schuster2025random, hu2025demonstration} often
prioritized sample efficiency, they could require more delicate or approximate post-processing. Recent work~\cite{cho2025entanglement} showed that the block-Clifford ensemble
$\mathcal{C}_k=\mathrm{Cl}(k)^{\otimes n/k}$ admits efficient classical post-processing while maintaining optimal sample complexity guarantees in many cases. This efficient post-processing also makes it possible to extend hybrid quantum-classical learning schemes, previously developed for single-qubit measurements~\cite{huang2022provably}, to the setting of shallow-depth randomized measurements.

The same Clifford ensemble $\mathcal{C}_k$ has also been used for quantum state tomography~\cite{cho2025sample}. Previous optimal single-copy tomography protocols typically relied on linear-depth Clifford measurements~\cite{lowe2025lower}. By contrast, recent work showed that $\mathcal{C}_k$ with $k=O(\log n)$ already suffices to achieve optimal sample complexity in the full-rank case, thereby reducing the required measurement depth to $O(\log n)$ without sacrificing performance guarantees.

\section{Weyl representation and symplectic geometry}

In the main text, we ignore global phases of Pauli operators and identify unsigned
$n$-qubit Pauli operators with binary vectors in
\[
V := \mathbb{F}_2^{2n}
    = \mathbb{F}_2^n \times \mathbb{F}_2^n.
\]
For $a=(a_x,a_z)\in V$, we define the corresponding
Weyl operator by
\begin{equation}
    W_a
    :=
    i^{a_x\cdot a_z}
    \bigotimes_{j=1}^n
    X^{(a_x)_j} Z^{(a_z)_j},
\end{equation}
where the exponent $a_x\cdot a_z$ in the phase factor is evaluated over the integers.
Modulo global phase, every $n$-qubit Pauli operator is represented uniquely by a vector
$a\in V$. Thus, multiplication of unsigned Pauli operators corresponds to addition in
$\mathbb{F}_2^{2n}$.

The vector space $V$ is equipped with the canonical symplectic bilinear form
\begin{equation}
    [a,b]
    :=
    a_x \cdot b_z + a_z \cdot b_x
    \pmod 2,
    \qquad
    a=(a_x,a_z),\; b=(b_x,b_z)\in V,
\end{equation}
where $\cdot$ denotes the standard inner product on $\mathbb{F}_2^n$.
This form encodes the commutation relations of Weyl operators via
\begin{equation}
    W_a W_b = (-1)^{[a,b]} W_b W_a .
\end{equation}
In particular, $W_a$ and $W_b$ commute if and only if $[a,b]=0$.
Thus, commutation among Pauli operators is naturally captured by the symplectic geometry of $\mathbb{F}_2^{2n}$.

For $A\subseteq V$, we define its symplectic complement by
\begin{equation}
    A^\perp
    :=
    \{x\in V : [x,a]=0 \text{ for all } a\in A\}.
\end{equation}
A subset $A\subseteq V$ is called \emph{isotropic} if $[x,y]=0$ for all $x,y\in A$, that is, if all Weyl operators $\{W_x:x\in A\}$ commute pairwise.
An isotropic subspace is called \emph{Lagrangian} if it has maximal possible dimension, namely $\dim(A)=n$.

We will use the following standard facts for subspaces $A,B \subseteq V$:
\begin{itemize}
    \item $(A^\perp)^\perp = A$.
    \item $\dim(A)+\dim(A^\perp)=2n$; equivalently, $|A|\,|A^\perp|=4^n$.
    \item $A\subseteq B$ if and only if $B^\perp \subseteq A^\perp$.
    \item $(A+B)^\perp = A^\perp \cap B^\perp$.
\end{itemize}
Moreover, if $A$ is isotropic, then $A\subseteq A^\perp$, and therefore $\dim(A)\le n$.
\begin{fact}\label{fact:sum_perp}
Let $A,B \subset V$ be subspaces. Then
\begin{equation}
    (A+B)^\perp = A^\perp \cap B^\perp.
\end{equation}
\end{fact}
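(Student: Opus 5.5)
We prove the two inclusions $(A+B)^\perp \subseteq A^\perp \cap B^\perp$ and $A^\perp \cap B^\perp \subseteq (A+B)^\perp$ directly from the definition of the symplectic complement. The only tools are bilinearity of $[\cdot,\cdot]$ and the fact that $0 \in A$ and $0 \in B$ because they are subspaces.

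\emph{First inclusion.} Let $x \in (A+B)^\perp$. Every $a \in A$ can be written as $a = a + 0$ with $0 \in B$, so $a \in A+B$. Hence $[x,a]=0$ for all $a\in A$, which means $x\in A^\perp$. The same argument with the roles of $A$ and $B$ exchanged gives $x\in B^\perp$. Therefore $x \in A^\perp\cap B^\perp$.

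\emph{Second inclusion.} Let $x \in A^\perp \cap B^\perp$ and take an arbitrary element $a+b$ of $A+B$, with $a\in A$ and $b\in B$. Since the form is linear in its second argument over $\mathbb{F}_2$,
\begin{equation}
    [x,a+b] = [x,a] + [x,b] = 0 + 0 = 0 \pmod 2 .
\end{equation}
As $a+b$ was arbitrary, $x \in (A+B)^\perp$.

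The two inclusions together give the stated equality. No step is a real obstacle; the only point to keep in mind is that the first inclusion uses that $A$ and $B$ each contain $0$, which holds because they are subspaces (for arbitrary subsets one would need $0$ in each).
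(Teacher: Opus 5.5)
Your double-inclusion argument is correct and complete: the first inclusion uses $A, B \subseteq A+B$ (because each subspace contains $0$), and the second uses linearity of $[x,\cdot\,]$ over $\mathbb{F}_2$. The paper states this as a standard fact without proof, and your argument is the routine verification it takes for granted.
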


\begin{fact}\label{fact:modular_identity}
Let $A,B,C \subset V$ be subspaces such that $C \subseteq A$. Then
\begin{equation}
    A \cap (B+C) = (A\cap B)+C.
\end{equation}
\end{fact}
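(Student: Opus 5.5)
This is the modular law; I prove the two inclusions directly from the definitions, using only that $A$ is closed under addition and that $C\subseteq A$.

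\emph{Inclusion $\supseteq$.} Take $x=b+c$ with $b\in A\cap B$ and $c\in C$. Since $c\in C\subseteq A$ and $b\in A$, closure of $A$ under addition gives $x\in A$. Since $b\in B$ and $c\in C$, we have $x\in B+C$. Hence $x\in A\cap(B+C)$.

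\emph{Inclusion $\subseteq$.} Take $x\in A\cap(B+C)$ and write $x=b+c$ with $b\in B$ and $c\in C$. Then $b=x-c$ (over $\mathbb{F}_2$, $b=x+c$). Here $x\in A$ by assumption and $c\in C\subseteq A$, so $b\in A$. Together with $b\in B$, this gives $b\in A\cap B$. Therefore $x=b+c\in (A\cap B)+C$.

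The only nontrivial step is recognizing, in the second inclusion, that the $B$-component $b$ can be forced into $A$; this is exactly where the hypothesis $C\subseteq A$ is used. Nothing symplectic enters, and the argument holds verbatim for subspaces of any vector space.
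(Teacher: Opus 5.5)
Your proof is correct: it is the standard two-inclusion argument for the modular law, and you correctly place the one use of $C\subseteq A$ at the step showing $b=x+c\in A$. The paper states this as a standard fact without proof, so there is no competing argument to compare against.
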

We denote by $\mathrm{Sp}(V)$ the symplectic group of $V$, namely
\begin{equation}
    \mathrm{Sp}(V)
    :=
    \{\phi \in \mathrm{GL}(V) : [\phi(x),\phi(y)] = [x,y] \text{ for all } x,y \in V\},
\end{equation}
where $\mathrm{GL}(V)$ is the group of invertible linear maps on $V$.

For two $\mathbb{F}_2$-vector spaces $A$ and $B$, we denote by
\begin{equation}
    \mathrm{Hom}(A,B)
\end{equation}
the space of all $\mathbb{F}_2$-linear maps from $A$ to $B$.
In particular, if $\dim A = a$ and $\dim B = b$, then $\mathrm{Hom}(A,B)$ is an
$\mathbb{F}_2$-vector space of dimension $ab$, and hence
\begin{equation}
    |\mathrm{Hom}(A,B)| = 2^{ab}.
\end{equation}

In particular, a subspace $L \subset V$ is Lagrangian when it is isotropic and has dimension $n$.
Equivalently, if $L\subseteq V$ is Lagrangian, then $L=L^\perp$.

We will frequently use the coordinate subspaces
\begin{equation}
    \mathcal{X}:=\{(x,0^n):x\in\mathbb{F}_2^n\},
    \qquad
    \mathcal{Z}:=\{(0^n,z):z\in\mathbb{F}_2^n\},
\end{equation}
both of which are Lagrangian subspaces of $V$.

We will also use the following standard character-sum identities.
For a subspace $A\subseteq \mathbb{F}_2^n$, define its orthogonal complement with respect
to the standard inner product by
\begin{equation}
    A^{\perp_s}
    :=
    \{b\in \mathbb{F}_2^n : a\cdot b=0 \text{ for all } a\in A\}.
\end{equation}
Then, for every $b\in \mathbb{F}_2^n$,
\begin{equation}
    \sum_{a\in A}(-1)^{a\cdot b}
    =
    |A|\,\mathbf{1}\{b\in A^{\perp_s}\}.
\end{equation}
Similarly, for a subspace $A\subseteq V=\mathbb{F}_2^{2n}$ and every $b\in V$,
\begin{equation}
    \sum_{a\in A}(-1)^{[a,b]}
    =
    |A|\,\mathbf{1}\{b\in A^\perp\}.
\end{equation}

For proofs in the following sections, we repeatedly use the quotient map
\begin{equation}
    \pi : M^{\perp} \to M^{\perp}/M, \qquad \pi(x)=x+M,
\end{equation}
where \(M\subset V\) is isotropic. For any Lagrangian subspace \(L\subset V\), the image
\begin{equation}
    \pi(L\cap M^{\perp})=(L\cap M^{\perp}+M)/M
\end{equation}
is a Lagrangian subspace of \(M^{\perp}/M\), since it is isotropic and has dimension \(n-\dim M\).

The Weyl operators form an orthogonal basis of the operator space on $(\mathbb{C}^2)^{\otimes n}$
with respect to the Hilbert--Schmidt inner product. More explicitly,
\begin{equation}
    \mathrm{tr}(W_a W_b)=2^n \delta_{a,b}.
\end{equation}
Therefore any $n$-qubit state $\rho$ admits the Weyl expansion
\begin{equation}
    \rho
    =
    \frac{1}{2^n}
    \sum_{a\in V}
    \mathrm{tr}(\rho W_a)\, W_a .
\end{equation}
This expansion shows that the coefficients $\mathrm{tr}(\rho W_a)$ provide a complete description
of the state in the Weyl basis.

Finally, following the notation in the main text, we define
\begin{equation}
    \mathrm{Weyl}(\rho)
    :=
    \{a\in V : \mathrm{tr}(\rho W_a)^2 = 1\}.
\end{equation}
Thus, $\mathrm{Weyl}(\rho)$ records the unsigned Pauli symmetries of $\rho$.
Because $[a,b]=0$ is equivalent to commutation of $W_a$ and $W_b$, the set
$\mathrm{Weyl}(\rho)$ is naturally viewed as an isotropic subspace of the
vector space $V$.

We will also use the Gaussian binomial coefficient. For integers $0 \le r \le m$ and a prime power $q$, define
\begin{equation}
    \binom{m}{r}_q
    :=
    \prod_{i=0}^{r-1}\frac{q^{m-i}-1}{q^{r-i}-1}.
\end{equation}
Equivalently, $\binom{m}{r}_q$ is the number of $r$-dimensional subspaces of $\mathbb{F}_q^m$.
In particular, when $q=2$, $\binom{m}{r}_2$ counts the number of $r$-dimensional subspaces of $\mathbb{F}_2^m$. We will use the following standard properties of the Gaussian binomial coefficients:
\begin{flalign}
    &\quad1. \quad \binom{m}{r}_2=\binom{m}{m-r}_2\label{prop:gaissian_binom_one},\\
    &\quad2. \quad \prod_{i=0}^{m-1}(1+2^i t)
    =
    \sum_{i=0}^{m}2^{\binom{i}{2}}\binom{m}{i}_2\,t^i.&\label{prop:gaissian_binom}
\end{flalign}

\begin{fact}\label{fact:num_largrangian}
    The number of Lagrangian subspaces of $V=\mathbb{F}_2^{2n}$ is $\prod_{i=1}^{n}(2^i+1)$.
\end{fact}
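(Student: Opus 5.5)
We prove the count by induction on $n$, building a Lagrangian subspace of $\mathbb{F}_2^{2n}$ from a Lagrangian subspace of a $2(n-1)$-dimensional symplectic space through a fixed nonzero vector. Let $N_n$ denote the number of Lagrangian subspaces of $V=\mathbb{F}_2^{2n}$, with $N_0=1$ (the zero subspace of the zero space). We will establish
\[
N_n=(2^n+1)\,N_{n-1},
\]
from which the product formula follows immediately.

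\textbf{Double counting.} Count the set of pairs $(v,L)$ with $L$ Lagrangian and $0\neq v\in L$, in two ways.
\begin{itemize}
\item Each $L$ has dimension $n$, so it contributes $2^n-1$ nonzero vectors. The number of pairs is therefore $N_n(2^n-1)$.
\item Alternatively, fix $v\neq 0$; there are $4^n-1$ choices, and every vector is isotropic since $[v,v]=0$ over $\mathbb{F}_2$. Any Lagrangian $L\ni v$ satisfies $L\subseteq L^\perp\subseteq \langle v\rangle^\perp$.
\end{itemize}

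\textbf{Reduction to the quotient.} Take $M=\langle v\rangle$ and use the quotient map $\pi:M^\perp\to M^\perp/M$ introduced above. The quotient $M^\perp/M$ is a nondegenerate symplectic space of dimension $2n-2$. We claim the map $L\mapsto L/M$ is a bijection between Lagrangians of $V$ containing $v$ and Lagrangians of $M^\perp/M$.
\begin{itemize}
\item For the forward direction: $L/M$ is isotropic because the induced form is well defined, and it has dimension $n-1$, so it is Lagrangian.
\item For the inverse: given a Lagrangian $\bar L$ of the quotient, its preimage $\pi^{-1}(\bar L)$ is a subspace of $M^\perp$ of dimension $n$ containing $v$. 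It is isotropic because $[x,y]$ depends only on the classes of $x,y$ for $x,y\in M^\perp$.
\item These two maps are mutually inverse.
\end{itemize}
Hence each $v$ lies in exactly $N_{n-1}$ Lagrangians, and the number of pairs is also $(4^n-1)N_{n-1}$.

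\textbf{Conclusion.} Equating the two counts gives
\[
N_n=\frac{4^n-1}{2^n-1}\,N_{n-1}=(2^n+1)\,N_{n-1}.
\]
Since $N_0=1$ (equivalently $N_1=3$, the three lines of $\mathbb{F}_2^2$), we obtain $N_n=\prod_{i=1}^n(2^i+1)$.

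The only step needing care is the bijection with the quotient, specifically the well-definedness and nondegeneracy of the induced form on $M^\perp/M$. This is standard: its radical is $(M^\perp)^\perp\cap M^\perp/M=M/M=0$.
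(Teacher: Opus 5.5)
Your proof is correct, but it is organized differently from the paper's. The paper obtains this fact as the case $m=n$ of Lemma~\ref{lem:number-of-isotropic-subspaces}, by counting ordered isotropic bases directly. Once $v_1,\dots,v_i$ span $U_i$, the next vector ranges over $U_i^\perp\setminus U_i$, which gives $\prod_{i=0}^{m-1}(2^{2n-i}-2^i)$ tuples; the paper then divides by the $\prod_{i=0}^{m-1}(2^m-2^i)$ ordered bases of each subspace. You instead double count pairs $(v,L)$ and pass to the quotient $\langle v\rangle^\perp/\langle v\rangle$, which gives $N_n=\frac{4^n-1}{2^n-1}N_{n-1}$. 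The factors $\frac{2^{2(n-i)}-1}{2^{n-i}-1}=2^{n-i}+1$ are exactly the terms of the paper's product at $m=n$, so the two computations are the same telescoping product seen from different angles.

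The paper's route is more elementary: it needs only $U\subseteq U^\perp$ and $\dim U^\perp=2n-\dim U$. It also yields $|\mathcal{I}_m(V)|$ for all $m$ at once, which is what Lemmas~\ref{lem:iso_lag_intersection} and~\ref{lem:iso_Lag_high_inter} later require.

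Your route needs one extra fact, and you should state it explicitly. The induced form on $\langle v\rangle^\perp/\langle v\rangle$ is nondegenerate (as you check) and alternating (since $[x,x]=0$ for all $x$). It therefore has a symplectic basis and is isometric to the standard form on $\mathbb{F}_2^{2(n-1)}$, so its Lagrangian count really is $N_{n-1}$. In characteristic $2$, nondegeneracy of a symmetric form alone would not give this.

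In return, your symplectic-reduction step is the same quotient technique the paper relies on later (Lemmas~\ref{lem:mu} and~\ref{lem:main_chunk}). Double counting pairs $(v,S)$ with $S\in\mathcal{I}_m(V)$ also gives the general recursion $|\mathcal{I}_m(\mathbb{F}_2^{2n})|=\frac{4^n-1}{2^m-1}\,|\mathcal{I}_{m-1}(\mathbb{F}_2^{2n-2})|$ just as easily.
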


\begin{lemma}[The number of isotropic subspaces]\label{lem:number-of-isotropic-subspaces}
Let $m \in [n]$, and let $\mathcal{I}_{m}(V)$ denote the set of isotropic subspaces of
$V=\mathbb{F}_2^{2n}$ of dimension $m$.
Then
\begin{equation}
    |\mathcal{I}_{m}(V)|
    =
    \binom{n}{m}_2
    \prod_{i=n-m+1}^{n}(2^{i}+1).
\end{equation}
\end{lemma}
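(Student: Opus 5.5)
We count pairs $(S,L)$ with $S\in\mathcal{I}_m(V)$, $L$ Lagrangian and $S\subseteq L$, in two ways, and combine this with Fact~\ref{fact:num_largrangian}.

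\emph{First count.} Each Lagrangian $L$ is an $n$-dimensional subspace. Every subspace of an isotropic space is isotropic, so every $m$-dimensional subspace of $L$ lies in $\mathcal{I}_m(V)$. Hence the number of pairs is
\[
\binom{n}{m}_2 \prod_{i=1}^{n}(2^i+1).
\]

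\emph{Second count.} Fix $S\in\mathcal{I}_m(V)$. We claim that Lagrangians $L\supseteq S$ are in bijection with Lagrangian subspaces of $S^\perp/S$, which is a symplectic space of dimension $2(n-m)$.

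If $L\supseteq S$ is Lagrangian, then $L=L^\perp\subseteq S^\perp$. Its image $L/S$ is isotropic, because the form descends to $S^\perp/S$. It has dimension $n-m$, so it is Lagrangian.

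Conversely, let $\bar L\subseteq S^\perp/S$ be Lagrangian and let $L=\pi^{-1}(\bar L)$. Then $L$ contains $S$ and has dimension $n$. It is isotropic: for $x,y\in L$ we have $[x,y]=[\pi x,\pi y]=0$, which uses that the form on $S^\perp/S$ is well defined and that $S\subseteq S^\perp$.

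These two maps are mutually inverse. The only point to check is that the induced form on $S^\perp/S$ is nondegenerate. If $x\in S^\perp$ is orthogonal to all of $S^\perp$, then $x\in (S^\perp)^\perp=S$.

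By Fact~\ref{fact:num_largrangian} applied in dimension $n-m$, each $S$ therefore lies in exactly $\prod_{i=1}^{n-m}(2^i+1)$ Lagrangians.

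\emph{Conclusion.} Equating the two counts gives
\[
|\mathcal{I}_m(V)|\prod_{i=1}^{n-m}(2^i+1)=\binom{n}{m}_2\prod_{i=1}^{n}(2^i+1).
\]
Dividing through yields $|\mathcal{I}_m(V)|=\binom{n}{m}_2\prod_{i=n-m+1}^{n}(2^i+1)$.

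The main obstacle is the quotient bijection, specifically the nondegeneracy of the form on $S^\perp/S$; the rest is routine. The edge case $m=n$ needs no separate treatment, since the empty product equals $1$ and both counts are consistent.
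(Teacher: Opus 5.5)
Your double-counting argument is correct, but it takes a different route from the paper.

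\textbf{The paper's route.} The paper counts ordered bases directly. Once $v_1,\dots,v_i$ span an isotropic $U_i$, the next vector must lie in $U_i^\perp\setminus U_i$, which gives $2^{2n-i}-2^i$ choices. Dividing $\prod_{i=0}^{m-1}(2^{2n-i}-2^i)$ by the number $\prod_{i=0}^{m-1}(2^m-2^i)$ of ordered bases of an $m$-dimensional space, and factoring $2^{2(n-i)}-1=(2^{n-i}-1)(2^{n-i}+1)$, yields the formula. That argument is self-contained and needs only $U\subseteq U^\perp$ and $\dim U^\perp=2n-\dim U$. In particular it does not use Fact~\ref{fact:num_largrangian}; it reproduces it as the case $m=n$.

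\textbf{Your route.} You take the Lagrangian count as input, in dimensions $2n$ and $2(n-m)$. So your argument cannot serve to prove that Fact, but it is the more structural proof. It explains the formula as the number of Lagrangians, times the number of $m$-subspaces of a Lagrangian, divided by the number of Lagrangians through a fixed $S$. It also gives that last number, $\prod_{i=1}^{n-m}(2^i+1)$, as a byproduct. And it rests on the same $S^\perp/S$ quotient correspondence the paper uses later in Lemmas~\ref{lem:iso_lag_intersection}, \ref{lem:mu} and \ref{lem:main_chunk}.

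\textbf{A point of emphasis.} You say the only thing to check for the bijection is nondegeneracy, but that is misplaced. The bijection between Lagrangians $L\supseteq S$ and $(n-m)$-dimensional isotropic subspaces of $S^\perp/S$ holds without nondegeneracy. What nondegeneracy buys, together with the induced form being alternating (inherited from $V$), is a symplectic basis for $S^\perp/S$. That basis identifies $S^\perp/S$ with the standard space $\mathbb{F}_2^{2(n-m)}$, which is what lets you apply Fact~\ref{fact:num_largrangian} there.
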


\begin{proof}
    We count ordered bases $(v_1,\dots,v_m)$ of a $m$-dimensional isotropic subspace. The first vector $v_1$ can be any nonzero vector of $V$, so there are $2^{2n}-1$ choices.
    Suppose inductively that $v_1,\dots,v_i$ have already been chosen and span an isotropic subspace
    $U_i$ of dimension $i$. Since $U_i$ is isotropic, we have $U_i \subseteq U_i^\perp$, and $\dim(U_i^\perp)=2n-i$. To keep the span isotropic, the next vector $v_{i+1}$ must lie in $U_i^\perp \setminus U_i$.
    Hence, the number of choices for $v_{i+1}$ is $|U_i^\perp \setminus U_i|=2^{2n-i}-2^i$. Therefore, the total number of ordered linearly independent $m$-tuples spanning an isotropic subspace is
    \begin{equation}
        \prod_{i=0}^{m-1}(2^{2n-i}-2^i).
    \end{equation}
    Each $m$-dimensional isotropic subspace has 
    \begin{equation}
        \prod_{i=0}^{m-1}(2^m-2^i)
    \end{equation}
    ordered bases. Dividing, we obtain
    \begin{align}
        |\mathcal{I}_{m}(V)|
        &=
        \frac{\prod_{i=0}^{m-1}(2^{2n-i}-2^i)}
             {\prod_{i=0}^{m-1}(2^m-2^i)}\\
        &=
        \prod_{i=0}^{m-1}
        \frac{2^{2n-2i}-1}{2^{m-i}-1}\\
        &=
        \prod_{i=0}^{m-1}
        \frac{(2^{n-i}-1)(2^{n-i}+1)}{2^{m-i}-1} \\
        &=
        \left(
        \prod_{i=0}^{m-1}\frac{2^{n-i}-1}{2^{m-i}-1}
        \right)
        \left(
        \prod_{i=0}^{m-1}(2^{n-i}+1)
        \right) \\
        &=
        \binom{n}{m}_2
        \prod_{i=0}^{m-1}(2^{n-i}+1)\\
        &=
        \binom{n}{m}_2
        \prod_{i=n-m+1}^{n}(2^{i}+1).
    \end{align}
    For $m=n$, this recovers Fact~\ref{fact:num_largrangian}.
\end{proof}

\section{Computational difference sampling}

Computational difference sampling~\cite{grewal2025efficient} uses two independent copies of $\rho$. Measuring each copy in the computational basis gives outcomes $b_1,b_2\in \mathbb{F}_2^n$, and the procedure outputs their bitwise mod-$2$ sum $b_1+b_2\in \mathbb{F}_2^n$. In the main text, we denote the resulting distribution by $r_\rho$. Unlike Bell difference sampling, the two measurements are performed independently, so this procedure can be implemented in the single-copy setting and requires no quantum memory or ancillary qubits.

The natural domain of $r_\rho$ is $\mathbb{F}_2^n$. However, to match the Weyl representation in the sympectic geometry, it is convenient to regard $r_\rho$ as a distribution on $    \mathcal{X}\coloneqq \{(x,0^n): x\in \mathbb{F}_2^n\}\subseteq \mathbb{F}_2^{2n}$. Then, for $a\in \mathbb{F}_2^n$,
\begin{equation}
    r_{\rho}(a,0^n)
    =
    \sum_{x \in \mathbb{F}_2^n}
    \mathrm{tr}\bigl(\rho \ket{x}\bra{x}\bigr)\,
    \mathrm{tr}\bigl(\rho \ket{x+a}\bra{x+a}\bigr).
\end{equation}

The next lemma rewrites $r_\rho$ in terms of Pauli expectation values.

\begin{lemma}\label{lem:cds-pauli-expansion}
    For every $a\in \mathbb{F}_2^n$,
    \begin{equation}
        r_\rho(a,0^n)
        =
        2^{-n}\sum_{b\in \mathbb{F}_2^n}
        (-1)^{a\cdot b}\,\mathrm{tr}(\rho Z^b)^2.
    \end{equation}
\end{lemma}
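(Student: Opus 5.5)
The plan is to expand each computational-basis probability in the Pauli-$Z$ basis and then collapse the sum over $x$ using the character-sum identity on $\mathbb{F}_2^n$ recalled in the preliminaries.

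\emph{Step 1: Pauli expansion of the measurement probabilities.} Write the projector as
\begin{equation}
    \ket{x}\bra{x}=2^{-n}\sum_{b\in\mathbb{F}_2^n}(-1)^{x\cdot b}Z^b ,
\end{equation}
where $Z^b=\bigotimes_j Z^{b_j}$. This holds because the $Z^b$ are diagonal with entries $(-1)^{y\cdot b}$, and $\sum_b(-1)^{(x+y)\cdot b}=2^n\delta_{x,y}$ by the character-sum identity with $A=\mathbb{F}_2^n$. It follows that
\begin{equation}
    \mathrm{tr}(\rho\ket{x}\bra{x})=2^{-n}\sum_b(-1)^{x\cdot b}\,\mathrm{tr}(\rho Z^b),
\end{equation}
and analogously for $x+a$ with a summation variable $b'$.

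\emph{Step 2: substitute into the definition of $r_\rho$.} This gives
\begin{equation}
    r_\rho(a,0^n)=4^{-n}\sum_{b,b'}(-1)^{a\cdot b'}\,\mathrm{tr}(\rho Z^b)\,\mathrm{tr}(\rho Z^{b'})\sum_{x}(-1)^{x\cdot(b+b')} .
\end{equation}
The inner sum over $x$ equals $2^n\mathbf{1}\{b=b'\}$, again by the character-sum identity. Setting $b'=b$ leaves
\begin{equation}
    r_\rho(a,0^n)=2^{-n}\sum_b(-1)^{a\cdot b}\,\mathrm{tr}(\rho Z^b)^2 .
\end{equation}
This step uses that $\mathrm{tr}(\rho Z^b)$ is real, since $\rho$ and $Z^b$ are Hermitian, so the product of the two traces is indeed the square.

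\emph{Step 3: check the bookkeeping.} Nothing here is a real obstacle. The only points that need care are the sign convention, which depends on whether $a$ is attached to $b$ or to $b'$ (the result is the same after setting $b=b'$), and the fact that $Z^b$ coincides with $W_{(0,b)}$ with no phase, because $a_x\cdot a_z=0$. The argument never uses purity, so it holds for mixed states, consistent with the paper's stated goal of extending the analysis beyond pure inputs.
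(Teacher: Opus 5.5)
Your proof is correct and is essentially the paper's argument: both expand the diagonal projectors in the $Z^b$ basis and collapse the sum over $x$ by character orthogonality. The paper packages this as $\sum_x \ket{x}\bra{x}\otimes\ket{x}\bra{x}=2^{-n}\sum_b Z^b\otimes Z^b$ and handles the shift via $X^a Z^b X^a=(-1)^{a\cdot b}Z^b$, whereas you absorb the shift directly into the character $(-1)^{(x+a)\cdot b'}$; your remark that the step needs $\mathrm{tr}(\rho Z^b)$ to be real is unnecessary, since the identity involves the plain square rather than $|\cdot|^2$.
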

\begin{proof}
    Using $X^a\ket{x}=\ket{x+a}$, we have
    \begin{align}
        r_\rho(a,0^n)
        &=
        \sum_{x\in \mathbb{F}_2^n}
        \mathrm{tr}(\rho \ket{x}\bra{x})\,
        \mathrm{tr}(\rho X^a\ket{x}\bra{x}X^a) \\
        &=
        2^{-n}\sum_{b\in \mathbb{F}_2^n}
        \mathrm{tr}(\rho Z^b)\,
        \mathrm{tr}(\rho X^a Z^b X^a) \\
        &=
        2^{-n}\sum_{b\in \mathbb{F}_2^n}
        (-1)^{a\cdot b}\,\mathrm{tr}(\rho Z^b)^2,
    \end{align}
    where $X^a=\bigotimes_i X^{a_i}$, $Z^b=\bigotimes_i Z^{b_i}$, and $a\cdot b=\sum_i a_i b_i$. In the second line, we used
    \[
        \sum_{x\in \mathbb{F}_2^n}\ket{x}\bra{x}\otimes\ket{x}\bra{x}
        =
        2^{-n}\sum_{b\in \mathbb{F}_2^n} Z^b\otimes Z^b.
    \]
\end{proof}

For $A\subseteq \mathcal{X}$, write $r_\rho(A)\coloneqq \sum_{x\in A} r_\rho(x)$. We next compute the mass of an $X$-type subspace under $r_\rho$.

\begin{lemma}\label{lem:cds-subspace-mass}
    Let $H=H_X\times 0^n\subseteq \mathcal{X}$, where $H_X\subseteq \mathbb{F}_2^n$ is a subspace. Then
    \begin{equation}
        r_\rho(H)
        =
        \frac{|H+\mathcal{Z}|}{2^n}\,p_\rho((H+\mathcal{Z})^\perp),
    \end{equation}
    where $p_\rho(S)\coloneqq \sum_{x\in S} p_\rho(x)$.
\end{lemma}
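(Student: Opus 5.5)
The plan is a direct computation that starts from Lemma~\ref{lem:cds-pauli-expansion}, sums over $H$ with the character-sum identity, and then identifies the surviving index set with $(H+\mathcal{Z})^\perp$. Throughout I take $p_\rho$ to be the characteristic distribution
\begin{equation}
    p_\rho(x)\coloneqq 2^{-n}\,\mathrm{tr}(\rho W_x)^2 ,
\end{equation}
which is the normalization under which the identity should come out exactly. Since the statement uses $p_\rho$ without the paper having defined it yet, this normalization is an assumption.

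\textbf{Step 1 (sum over $H$).} For $a\in H_X$, Lemma~\ref{lem:cds-pauli-expansion} gives $r_\rho(a,0^n)=2^{-n}\sum_{b}(-1)^{a\cdot b}\mathrm{tr}(\rho Z^b)^2$. Summing over $a\in H_X$ and exchanging the two finite sums, the inner sum $\sum_{a\in H_X}(-1)^{a\cdot b}$ equals $|H_X|\,\mathbf{1}\{b\in H_X^{\perp_s}\}$ by the standard character-sum identity. Hence
\begin{equation}
    r_\rho(H)=2^{-n}|H_X|\sum_{b\in H_X^{\perp_s}}\mathrm{tr}(\rho Z^b)^2 .
\end{equation}

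\textbf{Step 2 (identify the complement).} Fact~\ref{fact:sum_perp} gives $(H+\mathcal{Z})^\perp=H^\perp\cap\mathcal{Z}^\perp$, and $\mathcal{Z}^\perp=\mathcal{Z}$ because $\mathcal{Z}$ is Lagrangian. So $(H+\mathcal{Z})^\perp$ consists of the vectors $(0^n,b)$ with $[(a,0^n),(0^n,b)]=a\cdot b=0$ for all $a\in H_X$. This gives
\begin{equation}
    (H+\mathcal{Z})^\perp=\{(0^n,b):b\in H_X^{\perp_s}\}.
\end{equation}
For these vectors the Weyl operator is $W_{(0,b)}=Z^b$, since the phase $i^{0}$ is trivial. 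Therefore
\begin{equation}
    p_\rho\bigl((H+\mathcal{Z})^\perp\bigr)=2^{-n}\sum_{b\in H_X^{\perp_s}}\mathrm{tr}(\rho Z^b)^2 .
\end{equation}

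\textbf{Step 3 (count $|H+\mathcal{Z}|$).} Because $H\subseteq\mathcal{X}$ and $\mathcal{X}\cap\mathcal{Z}=\{0\}$, the sum $H+\mathcal{Z}$ is direct, so $|H+\mathcal{Z}|=|H_X|\,2^n$. Then
\begin{equation}
    \frac{|H+\mathcal{Z}|}{2^n}\,p_\rho\bigl((H+\mathcal{Z})^\perp\bigr)=2^{-n}|H_X|\sum_{b\in H_X^{\perp_s}}\mathrm{tr}(\rho Z^b)^2 ,
\end{equation}
which is exactly the expression for $r_\rho(H)$ from Step 1.

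No step here is a real obstacle. The only points needing care are the normalization convention for $p_\rho$ and checking that the symplectic complement restricted to $\mathcal{Z}$ coincides with the standard-inner-product complement $H_X^{\perp_s}$; Step 2 handles the latter directly.
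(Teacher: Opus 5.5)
Your proof is correct and follows the paper's argument essentially step for step: sum the Pauli expansion over $H_X$, collapse the sum with the character-sum identity, then identify $(H+\mathcal{Z})^\perp=0^n\times H_X^{\perp_s}$ and $|H+\mathcal{Z}|=2^n|H_X|$. Your assumed normalization $p_\rho(x)=2^{-n}\mathrm{tr}(\rho W_x)^2$ is the one the paper uses; it is stated explicitly later, in the proof of Corollary~\ref{cor:part2_main}.
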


\begin{proof}
    By Lemma~\ref{lem:cds-pauli-expansion},
    \begin{align}
        r_\rho(H)
        &=
        2^{-n}\sum_{b\in \mathbb{F}_2^n}\sum_{a\in H_X}
        (-1)^{a\cdot b}\,\mathrm{tr}(\rho Z^b)^2 \\
        &=
        2^{-n}|H_X|
        \sum_{b\in H_X^{\perp_s}}\mathrm{tr}(\rho Z^b)^2 \\
        &=
        |H_X|\,p_\rho(0^n\times H_X^{\perp_s})\\
        &=\frac{|H+\mathcal{Z}|}{2^n}\,p_\rho((H+\mathcal{Z})^\perp),
    \end{align}
    where $H_X^{\perp_s}\coloneqq \{b\in \mathbb{F}_2^n : a\cdot b=0 \text{ for all } a\in H_X\}$. Since $H+\mathcal{Z}=H_X\times \mathbb{F}_2^n$, we have $|H+\mathcal{Z}|=2^n|H_X|$, and since $(H+\mathcal{Z})^\perp=0^n\times H_X^{\perp_s}$, the last equality follows.
\end{proof}

The next lemma identifies the orthogonal complement that will appear in the support statement.

\begin{lemma}\label{lem:cds-orthogonal}
    Let $H\coloneqq (\mathrm{Weyl}(\rho)\cap \mathcal{Z})^\perp \cap \mathcal{X}$. Then
    \begin{equation}
        (H+\mathcal{Z})^\perp
        =
        \mathrm{Weyl}(\rho)\cap \mathcal{Z}.
    \end{equation}
\end{lemma}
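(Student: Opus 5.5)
The identity is pure linear algebra over $V$; it follows from the standard facts on symplectic complements listed above. Write $W:=\mathrm{Weyl}(\rho)\cap\mathcal{Z}$, which is a subspace of the Lagrangian $\mathcal{Z}$. Then $H=W^\perp\cap\mathcal{X}$. The plan is to compute $H+\mathcal{Z}$ explicitly and take its complement.

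First apply Fact~\ref{fact:modular_identity} with $A=W^\perp$, $B=\mathcal{X}$, $C=\mathcal{Z}$. Its hypothesis $C\subseteq A$ holds because $W\subseteq\mathcal{Z}=\mathcal{Z}^\perp$, and taking complements reverses inclusion, so $\mathcal{Z}\subseteq W^\perp$. The identity gives
\begin{equation}
    W^\perp\cap(\mathcal{X}+\mathcal{Z}) = (W^\perp\cap\mathcal{X})+\mathcal{Z} = H+\mathcal{Z}.
\end{equation}
Since $\mathcal{X}+\mathcal{Z}=V$, the left-hand side is just $W^\perp$. Hence $H+\mathcal{Z}=W^\perp$.

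Taking symplectic complements and using $(A^\perp)^\perp=A$ then yields $(H+\mathcal{Z})^\perp=(W^\perp)^\perp=W=\mathrm{Weyl}(\rho)\cap\mathcal{Z}$, as claimed.

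There is no real obstacle here. The only point needing care is checking $\mathcal{Z}\subseteq W^\perp$ before invoking the modular identity. One can also verify the result in coordinates as a sanity check. Write $W=0^n\times W_Z$. A vector $(x,0)$ lies in $W^\perp$ iff $x\cdot z=0$ for all $z\in W_Z$, so $H=W_Z^{\perp_s}\times 0^n$. As in the proof of Lemma~\ref{lem:cds-subspace-mass}, this gives $(H+\mathcal{Z})^\perp=0^n\times(W_Z^{\perp_s})^{\perp_s}=0^n\times W_Z=W$.
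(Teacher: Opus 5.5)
Your proof is correct and is essentially the paper's argument run in the dual order. The paper first takes complements, $(H+\mathcal{Z})^\perp=H^\perp\cap\mathcal{Z}=(\mathrm{Weyl}(\rho)\cap\mathcal{Z}+\mathcal{X})\cap\mathcal{Z}$, and then uses $\mathcal{X}\cap\mathcal{Z}=\{0\}$, which is implicitly a modular-law step since $\mathrm{Weyl}(\rho)\cap\mathcal{Z}\subseteq\mathcal{Z}$. You instead apply the modular identity first, obtaining the slightly stronger intermediate identity $H+\mathcal{Z}=W^\perp$ (with your $W=\mathrm{Weyl}(\rho)\cap\mathcal{Z}$), and then take a single complement; this uses only the listed facts explicitly, whereas the paper's step $H^\perp=\mathrm{Weyl}(\rho)\cap\mathcal{Z}+\mathcal{X}$ tacitly uses $(A\cap B)^\perp=A^\perp+B^\perp$ and $\mathcal{X}^\perp=\mathcal{X}$.
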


\begin{proof}
    Using $(A+B)^\perp=A^\perp\cap B^\perp$, we get
    \begin{align}
        (H+\mathcal{Z})^\perp
        &=
        H^\perp\cap \mathcal{Z} \\
        &=
        (\mathrm{Weyl}(\rho)\cap \mathcal{Z}+\mathcal{X})\cap \mathcal{Z} \\
        &=
        \mathrm{Weyl}(\rho)\cap \mathcal{Z},
    \end{align}
    where we used $\mathcal{X}\cap \mathcal{Z}=\{0\}$ in the last step.
\end{proof}

\begin{corollary}[Support of computational difference sampling]\label{prop:cds-support}
    If $r_\rho(x)>0$, then
    \begin{equation}
        x\in (\mathrm{Weyl}(\rho)\cap \mathcal{Z})^\perp \cap \mathcal{X}.
    \end{equation}
\end{corollary}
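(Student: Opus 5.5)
The plan is to combine Lemma~\ref{lem:cds-pauli-expansion} with a characterization of the $Z$-type Paulis on which $\rho$ is fully symmetric. By Lemma~\ref{lem:cds-pauli-expansion}, $r_\rho(a,0^n)=2^{-n}\sum_{b}(-1)^{a\cdot b}\mathrm{tr}(\rho Z^b)^2$. Let $G:=\{b\in\mathbb{F}_2^n : (0^n,b)\in \mathrm{Weyl}(\rho)\}$, so that $\mathrm{Weyl}(\rho)\cap\mathcal{Z}=0^n\times G$. We need to show that if $r_\rho(a,0^n)>0$ then $a\cdot b=0$ for all $b\in G$. This condition says exactly that $(a,0^n)\in(0^n\times G)^\perp\cap\mathcal{X}$, because $[(a,0),(0,b)]=a\cdot b$.

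\textbf{Direct argument.} The key observation is that for $b\in G$ we have $\mathrm{tr}(\rho Z^b)=\pm1$. Since $Z^b$ is Hermitian with eigenvalues $\pm1$, this forces $\rho$ to be supported on the $\pm1$ eigenspace of $Z^b$. Hence every computational-basis outcome $x$ with $\langle x|\rho|x\rangle>0$ satisfies $(-1)^{b\cdot x}=s_b$ for a fixed sign $s_b$. Take two outcomes $x$ and $x+a$, both with positive probability; they exist whenever the summand in the definition of $r_\rho(a,0^n)$ is nonzero. Then $(-1)^{b\cdot x}=(-1)^{b\cdot(x+a)}$, which gives $b\cdot a=0$. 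Since this holds for all $b\in G$, we get $a\in G^{\perp_s}$, which is the claim.

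\textbf{Alternative via Lemma~\ref{lem:cds-subspace-mass}.} One can also take $H$ as in Lemma~\ref{lem:cds-orthogonal}. Combining Lemma~\ref{lem:cds-subspace-mass} with Lemma~\ref{lem:cds-orthogonal} gives $r_\rho(H)=\frac{|H+\mathcal{Z}|}{2^n}\,p_\rho(\mathrm{Weyl}(\rho)\cap\mathcal{Z})$. Here $p_\rho(x)=2^{-n}\mathrm{tr}(\rho W_x)^2$, and $|H+\mathcal{Z}|=2^n|H_X|=2^n\cdot 2^{n}/|G|$. So
\[
r_\rho(H)=\frac{2^n}{|G|}\cdot |G|\cdot 2^{-n}=1,
\]
and all of the mass of $r_\rho$ lies in $H$.

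\textbf{Main obstacle.} The only delicate step is the identification of the support of $\rho$ for mixed states, namely that $\mathrm{tr}(\rho P)^2=1$ forces $\rho P=\pm\rho$. This follows from writing $\rho=\sum_i\lambda_i|\psi_i\rangle\langle\psi_i|$ and observing that a convex combination of numbers in $[-1,1]$ equals $\pm1$ only if every term equals $\pm1$ with the same sign.
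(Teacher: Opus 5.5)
Your proposal is correct, and your two arguments relate to the paper differently. Your ``alternative'' argument is exactly the paper's proof. The paper sets $H=(\mathrm{Weyl}(\rho)\cap\mathcal{Z})^\perp\cap\mathcal{X}$ and applies Lemma~\ref{lem:cds-subspace-mass} and Lemma~\ref{lem:cds-orthogonal}. It then concludes $r_\rho(H)=|H+\mathcal{Z}|\,|(H+\mathcal{Z})^\perp|/4^n=1$ via $|A|\,|A^\perp|=4^n$, which is the computation you do by hand with $|H_X|=2^n/|G|$. Your primary, direct argument takes a genuinely different route. It never uses the Pauli expansion of $r_\rho$, so Lemma~\ref{lem:cds-pauli-expansion} is unnecessary there despite your opening sentence. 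Instead you extract the structural fact $\rho Z^b=s_b\rho$ for $b\in G$. This shows that the single-copy computational-basis distribution of $\rho$ is supported on one coset of $G^{\perp_s}$, so any two positive-probability outcomes differ by an element of $G^{\perp_s}$. That route is more elementary and pointwise, and slightly stronger, since it localizes the outcomes themselves and not just their differences. Its cost is the spectral fact that $\mathrm{tr}(\rho P)=\pm1$ forces $\rho P=\pm\rho$ for mixed $\rho$, which you justify correctly. The paper's route needs no such structural statement: it only uses the values $\mathrm{tr}(\rho W_x)^2=1$ inside a sum. Its advantage is that the support claim drops out of the quantitative mass identity in Lemma~\ref{lem:cds-subspace-mass}. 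Corollary~\ref{cor:part2_main} later reuses that same identity to turn $r_{\rho_i}(\widehat{H}_i)\ge 1-\epsilon/2n$ into a lower bound on $p_\rho(\widehat S_i)$.
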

\begin{proof}
    Let $H\coloneqq (\mathrm{Weyl}(\rho)\cap \mathcal{Z})^\perp \cap \mathcal{X}$. By Lemma~\ref{lem:cds-subspace-mass},
    \[
        r_\rho(H)
        =
        \frac{|H+\mathcal{Z}|}{2^n}\,p_\rho((H+\mathcal{Z})^\perp).
    \]
    By Lemma~\ref{lem:cds-orthogonal}, $(H+\mathcal{Z})^\perp=\mathrm{Weyl}(\rho)\cap \mathcal{Z}\subset\mathrm{Weyl}(\rho)$. Hence
    \[
        r_\rho(H)
        =
        \frac{|H+\mathcal{Z}|\cdot |(H+\mathcal{Z})^\perp|}{4^n}
        =
        1.
    \]
    Therefore, the support of $r_\rho$ is contained in $H$.
\end{proof}

\section{Clifford ensembles}\label{appx:clifford ensemble}

In the main text, instead of the full random Clifford ensemble $\mathrm{Cl}(n)$, we consider the block-Clifford ensemble
\begin{equation}
    \mathcal{C}_k \coloneqq \mathrm{Cl}(k)^{\otimes n/k},
\end{equation}
where we assume throughout the paper that $k$ divides $n$. Here, each block has size $k$, and the Clifford circuit acting on each block is sampled independently from $\mathrm{Cl}(k)$. The ensemble $\mathcal{C}_k$ is useful for randomized measurements and quantum state tomography because of its tensor-product structure and because it forms a group.

For a given unitary ensemble $\mathcal{C}$, an important quantity in our analysis is the Pauli collision probability
\begin{equation}
    m_P \coloneqq \mathbb{E}_{C \sim \mathcal{C}} \left[\mathbf{1}\{ C(P) \in \mathcal{Z} \}\right],
\end{equation}
where $\mathcal{Z} = \{I,Z\}^n=\{(0^n, z):z\in\mathbb{F}_2^{n}\}$. For $\mathcal{C}_k$, this quantity is given by
\begin{align}
    m_P
    &= \mathbb{E}_{C \sim \mathcal{C}_k} \left[\mathbf{1}\{ C(P) \in \mathcal{Z} \}\right] \\
    &= \prod_{i=1}^{n/k} \mathbb{E}_{C_i \sim \mathrm{Cl}(k)}
    \left[\mathbf{1}\{ C_i (P_i) \in \mathcal{Z}_k \}\right] \\
    &= (2^k+1)^{-w_k(P)},
\end{align}
where $C = \bigotimes_{i=1}^{n/k} C_i$, $P = \bigotimes_{i=1}^{n/k} P_i$, and $\mathcal{Z}_k = \{I,Z\}^k=\{(0^k, z):\in\mathbb{F}_2^{k}\}$. Here,
\begin{equation}
    w_k(P) \coloneqq \left| \{ i \in [n/k] : P_i \neq I_k \} \right|
\end{equation}
denotes the number of blocks on which $P$ acts nontrivially, and $I_k$ is the $2^k \times 2^k$ identity matrix.

Since $w_k(P) \le n/k$ for all $P$, we obtain
\begin{equation}
    \min_P m_P = (2^k+1)^{-n/k}
    = 2^{-n}(1+2^{-k})^{-n/k},
\end{equation}
where equality is attained when $P$ is nontrivial on every block. Now let $f(x) = (1+x^{-1})^x$ for $x>0$. Since $f(x)$ is monotone increasing and $\lim_{x\to\infty} f(x)=e$, it follows that
\begin{equation}
    e^{-n/(2^k k)} 2^{-n} < \min_P m_P < 2^{-n}.
\end{equation}
Therefore, if the block size $k$ is chosen so that $2^k k = \Omega(n)$, then $e^{-n/(2^k k)} = \Omega(1)$. In particular, when $k = \Omega(\log n)$, we have
\begin{equation}\label{eq:mp_lower}
    \min_P m_P = \Theta(2^{-n}).
\end{equation}

Since implementing a unitary in $\mathrm{Cl}(k)$ requires depth $O(k)$~\cite{MaslovRoetteler2018}, choosing $k = O(\log n)$ allows us to realize $\mathcal{C}_k$ with circuit depth $O(\log n)$.

\section{Performance guarantees of the learning algorithm}\label{appx:proof}
In this section, we give a detailed proof of Theorem~\ref{thm1} from the main text. We first recall the targeted learning problem.

\begin{definition}[{$(\epsilon,\delta,t)$-stabilizer group learning}]
    Let $\rho$ be an $n$-qubit state with stabilizer dimension $n-t$. An algorithm outputs an isotropic subspace $\widehat S$ such that, with probability at least $1-\delta$,
    \begin{enumerate}
        \item $\widehat S \supseteq \mathrm{Weyl}(\rho)$,
        \item $\mathbb{E}_{x\sim \widehat S}\!\left[\mathrm{tr}(\rho W_x)^2\right] \ge 1-\epsilon$.
    \end{enumerate}
\end{definition}

We analyze Algorithm~\ref{alg}, which learns $\mathrm{Weyl}(\rho)$ using only single-copy shallow-depth measurements. The proof has two parts: first, we show that random Clifford circuits from $\mathcal{C}_k$ reveal enough information to recover $\mathrm{Weyl}(\rho)$; second, we bound the number of computational difference samples needed to recover the hidden Pauli symmetries in each selected basis.

Before proving the theorem, we first collect several results about averages over uniformly random isotropic subspaces $S\sim \mathcal{I}_{n-t}(V)$ that will be used below.

\begin{fact}[Theorem 4 in~\cite{kueng2015qubit}]\label{fact:num_lag_intersection}
    Let $L_0 \subset V\coloneq\mathbb{F}_2^{2n}$ be a fixed Lagrangian subspace. Then, for each $0 \le l \le n$, the number of Lagrangian subspaces $L \subset V$ such that $\dim(L_0 \cap L)=l$ is
    \begin{equation}
        \left|\left\{L \subset V : \dim(L \cap L_0)=l \right\}\right|
        = \binom{n}{l}_2 \, 2^{(n-l)(n-l+1)/2}.
    \end{equation}
    In particular, the total number of Lagrangian subspaces of $V$ is
    \begin{equation}
        \prod_{i=1}^{n}(2^i+1).
    \end{equation}
\end{fact}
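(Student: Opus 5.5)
We prove the count by a double-counting argument in which the fixed Lagrangian $L_0$ is first put in standard position. Because $\mathrm{Sp}(V)$ acts transitively on Lagrangian subspaces and preserves intersection dimensions, we may take $L_0=\mathcal{Z}$.

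The plan is to parametrize Lagrangians $L$ with $\dim(L\cap L_0)=l$ in two stages.

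\emph{Stage 1: choose the intersection.} First choose $U:=L\cap L_0$, an $l$-dimensional subspace of $L_0$. There are $\binom{n}{l}_2$ choices. Given $U$, any Lagrangian $L\supseteq U$ satisfies $L\subseteq U^\perp$.

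\emph{Stage 2: reduce to a transversality count.} We pass to the quotient $\pi:U^\perp\to U^\perp/U\cong\mathbb{F}_2^{2(n-l)}$, a symplectic space of dimension $2(n-l)$. Lagrangians $L\supseteq U$ correspond bijectively to Lagrangians $\pi(L)$ of the quotient. The image $\pi(L_0\cap U^\perp)=\pi(L_0)$ is a Lagrangian $L_0'$ of the quotient, as noted in the preliminaries.

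We then check that $L\cap L_0=U$ exactly iff $\pi(L)\cap L_0'=0$. Since $U\subseteq L_0\cap U^\perp$ and $U\subseteq L$, we have $\pi(L)\cap\pi(L_0)=\pi(L\cap L_0)$. This uses the modular identity (Fact~\ref{fact:modular_identity}) applied as $(L_0+U)\cap L=L_0\cap L$ when $U\subseteq L_0$.

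So the count reduces to $\binom{n}{l}_2$ times $N(m)$, the number of Lagrangians transverse to a fixed Lagrangian in $\mathbb{F}_2^{2m}$, with $m=n-l$.

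\emph{Transverse Lagrangians.} We claim $N(m)=2^{m(m+1)/2}$. Take $L_0'=\mathcal{Z}$. A subspace transverse to $\mathcal{Z}$ of dimension $m$ is a graph $\{(x,Ax):x\in\mathbb{F}_2^m\}$ with $A\in\mathrm{Hom}(\mathbb{F}_2^m,\mathbb{F}_2^m)$. It is isotropic iff
\[
x\cdot Ay+y\cdot Ax=0 \quad\text{for all } x,y,
\]
i.e.\ iff $A$ is symmetric. The diagonal is automatically fine, since $[a,a]=0$ in characteristic $2$. Symmetric $m\times m$ binary matrices number $2^{m(m+1)/2}$, which gives the claim.

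Substituting $m=n-l$ yields the stated formula.

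\emph{Total count.} Summing over $l$ with property~\eqref{prop:gaissian_binom} gives the total. Reindexing $i=n-l$ turns the sum into $\sum_i\binom{n}{i}_2 2^{\binom{i}{2}}2^{i}$. Setting $t=2$ in the identity gives $\prod_{i=0}^{n-1}(1+2^{i+1})=\prod_{i=1}^n(2^i+1)$. This is consistent with Fact~\ref{fact:num_largrangian} and Lemma~\ref{lem:number-of-isotropic-subspaces}.

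The main obstacle is verifying cleanly that the quotient bijection preserves the exact intersection condition; the rest is routine.
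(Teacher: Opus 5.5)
Your proof is correct and is essentially the paper's own argument. The paper only cites this fact from Kueng--Gross, but its Lemmas~\ref{lem:zero_intersec} and~\ref{lem:iso_lag_intersection} prove the isotropic generalization by exactly your route: fix the intersection $T$, pass to $T^\perp/T$, and count transverse subspaces as graphs of maps whose relevant block must be a symmetric matrix. Setting $m=n$ there reproduces your count $\binom{n}{l}_2\,2^{(n-l)(n-l+1)/2}$. The total $\prod_{i=1}^n(2^i+1)$ follows from your $q$-binomial summation just as well as from the paper's ordered-basis count in Lemma~\ref{lem:number-of-isotropic-subspaces}.

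The step you flag as the main obstacle is immediate. Both $L$ and $L_0$ contain $\ker\pi=U$, so $\pi^{-1}(\pi(L))=L$ and $\pi^{-1}(\pi(L_0))=L_0$, hence $\pi(L)\cap\pi(L_0)=\pi(L\cap L_0)$. The modular identity you invoke is trivial here (since $L_0+U=L_0$) and is not what does the work.
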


\begin{definition}\label{def:generating_func}
    Let $L_0 \subset V\coloneq\mathbb{F}_2^{2n}$ be a fixed Lagrangian subspace, and let
    $L$ be sampled uniformly from the set of all Lagrangian subspaces $\mathcal{I}_n(V)$.
    We define the generating function by
    \begin{align}
        G_n(s)
        &\coloneq \sum_{l=0}^{n} \Pr_L\bigl(\dim(L\cap L_0)=l\bigr) s^l \\
        &= \sum_{l=0}^{n}
        \frac{\binom{n}{l}_2\,2^{(n-l)(n-l+1)/2}}{\prod_{i=1}^{n}(2^i+1)}\, s^l \\
        &= \frac{s^n}{\prod_{i=1}^n (2^{i}+1)}
        \prod_{i=0}^{n-1}\left(1+\frac{2^{i+1}}{s}\right),
    \end{align}
    where the second equality uses Fact~\ref{fact:num_lag_intersection}, and the last equality follows from Eq.~\eqref{prop:gaissian_binom}.
    In particular, we will use the special values
    \begin{align}
        G_n(2) &= \frac{2^{n+1}}{2^n+1}, \\
        G_n(4) &= \frac{6\cdot 4^n}{(2^n+1)(2^n+2)}.
    \end{align}
\end{definition}

\begin{lemma}\label{lem:zero_intersec}
    Let $L \subset V\coloneq\mathbb{F}_2^{2n}$ be a fixed Lagrangian subspace. Then, for every
    $0 \le m \le n$,
    \begin{equation}
        \bigl|\{S \in \mathcal{I}_m(V) : \dim(S \cap L)=0\}\bigr|
        =
        \binom{n}{m}_2\,
        2^{m(m+1)/2}\,2^{m(n-m)}.
    \end{equation}
\end{lemma}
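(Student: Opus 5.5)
We will count ordered bases, exactly as in the proof of Lemma~\ref{lem:number-of-isotropic-subspaces}, but with the extra constraint that the span of the chosen vectors meets $L$ only in $\{0\}$. The plan is to pick $v_1,\dots,v_m$ one at a time. At each step we count the vectors $v_{i+1}$ for which $U_{i+1}=U_i+\langle v_{i+1}\rangle$ is still isotropic and still satisfies $U_{i+1}\cap L=\{0\}$. Afterwards we divide by the number $\prod_{i=0}^{m-1}(2^m-2^i)$ of ordered bases of an $m$-dimensional space.

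\textbf{Counting step.} Suppose $U_i$ is isotropic of dimension $i$ with $U_i\cap L=\{0\}$. We need $v\in U_i^\perp$ with $(U_i+\langle v\rangle)\cap L=\{0\}$. This is equivalent to $v\notin U_i+(L\cap U_i^\perp)$: a nonzero element $u+v$ lies in $L$ exactly when $v\in u+L$, and since $U_i+L$ and $L$ are both in play, the element $u+v$ lies in $L\cap U_i^\perp$ because $u,v\in U_i^\perp$.

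Next we compute $\dim(L\cap U_i^\perp)$. Since $U_i\cap L=0$, the identities $(U_i+L)^\perp=U_i^\perp\cap L$ and $L^\perp=L$ give
\[
\dim(L\cap U_i^\perp)=2n-\dim(U_i+L)=2n-(i+n)=n-i.
\]
The sum $U_i+(L\cap U_i^\perp)$ is direct, so it has dimension $i+(n-i)=n$, and it sits inside $U_i^\perp$, which has dimension $2n-i$. Hence the number of admissible $v_{i+1}$ is
\[
2^{2n-i}-2^{n}=2^n\bigl(2^{n-i}-1\bigr).
\]
This count does not depend on the particular $U_i$, which is what makes the counting clean.

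\textbf{Assembling.} Taking products and dividing gives
\[
\frac{\prod_{i=0}^{m-1}2^n(2^{n-i}-1)}{\prod_{i=0}^{m-1}2^i(2^{m-i}-1)}
=2^{nm-m(m-1)/2}\binom{n}{m}_2 .
\]
Finally we check the exponent: $nm-\binom{m}{2}=m(n-m)+m(m+1)/2$. This matches the claimed formula.

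\textbf{Main obstacle.} The only delicate point is the equivalence in the counting step together with the dimension computation. We must verify that the forbidden set is exactly the subspace $U_i\oplus(L\cap U_i^\perp)$. The use of $L^\perp=L$ (Lagrangian) is essential here. We will double-check this against the case $m=n$: the formula gives $2^{n(n+1)/2}$, which agrees with Fact~\ref{fact:num_lag_intersection} at $l=0$.
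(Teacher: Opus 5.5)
Your proof is correct, and it takes a different route from the paper.

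The paper fixes a Lagrangian complement $M$ of $L$ and writes each admissible $S$ as the graph $\{a+\phi(a):a\in A\}$ of a linear map $\phi:A\to L$, where $A\subseteq M$ is $m$-dimensional. It then splits $\phi$ into two components. The component into a complement $B$ of $A^\perp\cap L$ must be a symmetric matrix, which gives $2^{m(m+1)/2}$. The component into $A^\perp\cap L$ is unconstrained, which gives $2^{m(n-m)}$. The factor $\binom{n}{m}_2$ counts the choices of $A$.

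You instead run the ordered-basis count of Lemma~\ref{lem:number-of-isotropic-subspaces} with an extra constraint. Your key step is that the forbidden set inside $U_i^\perp$ is
\[
U_i^\perp\cap(U_i+L)=U_i\oplus(U_i^\perp\cap L),
\]
which is exactly Fact~\ref{fact:modular_identity} applied with $U_i\subseteq U_i^\perp$. Its size is $2^n$ regardless of $U_i$. Cite this identity in place of your vague sentence about ``$U_i+L$ and $L$ both in play.'' Also say explicitly that this set contains $U_i$, so it enforces linear independence of $v_{i+1}$ as well as the intersection condition. Finally, note that every prefix span of an ordered basis of an admissible $S$ is itself isotropic and meets $L$ trivially, so dividing by $\prod_{i}(2^m-2^i)$ is exact.

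Your route is shorter and basis-free. It needs no auxiliary complement $M$ or $B$. It also avoids the step, implicit in the paper, of choosing a basis of $B$ dual to one of $A$ under $[\cdot,\cdot]$ so that isotropy becomes symmetry of a matrix. It also keeps the argument uniform with the earlier counting lemma.

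The paper's route gives an explicit bijection $S\leftrightarrow(A,\phi_B,\phi_{\mathrm{free}})$. This shows that the admissible $S$ over each $A$ form a linear space of maps of dimension $m(m+1)/2+m(n-m)$. Each factor of the formula therefore gets a structural meaning, rather than appearing after rewriting $2^{nm-\binom{m}{2}}\binom{n}{m}_2$.
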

\begin{proof}
    Fix a Lagrangian complement $M$ of $L$, so that
    \begin{equation}
        V=L\oplus M.
    \end{equation}
    Write each $x\in V$ uniquely as $x=x_L+x_M$ with $x_L\in L$ and $x_M\in M$. Let $S\in \mathcal I_m(V)$ satisfy $S\cap L=\{0\}$ and let $\pi_M:V\to M, \pi(x)=x_M$. Then $\pi_M|_S:S\to M$ is injective,
    since $\ker(\pi_M|_S)=S\cap L=\{0\}$. Thus
    \begin{equation}
        A\coloneq \pi_M(S)
    \end{equation}
    is an $m$-dimensional subspace of $M$. Moreover, $\pi_M|_S:S\to A$ is an isomorphism, so
    there is a unique linear map $\phi:A\to L$ such that
    \begin{equation}
        S=\Gamma(\phi)\coloneq \{a+\phi(a):a\in A\}.
    \end{equation}
    Conversely, every pair $(A,\phi)$ with $A\subseteq M$, $\dim A=m$, and
    $\phi\in \mathrm{Hom}(A,L)$ gives such a subspace. Hence it suffices to count those
    $\phi$ for which $\Gamma(\phi)$ is isotropic.
    
    For $a,b\in A$, since $A\subset M$ and $\phi(A)\subset L$, and both $L$ and $M$ are
    isotropic, we have
    \begin{equation}
        [a+\phi(a),\,b+\phi(b)]
        =[a,\phi(b)]+[\phi(a),b].
    \end{equation}
    Thus $\Gamma(\phi)$ is isotropic if and only if
    \begin{equation}
        [a,\phi(b)]=[b,\phi(a)]
        \qquad\text{for all }a,b\in A.
    \end{equation}
    
    Now $A^\perp\cap L$ is a subspace of $L$ of dimension $n-m$. Choose a complementary
    subspace $B\subset L$ such that
    \begin{equation}
        L=B\oplus (A^\perp\cap L),
    \end{equation}
    so $\dim B=m$. Then every $\phi\in \mathrm{Hom}(A,L)$ decomposes uniquely as
    \begin{equation}
        \phi=\phi_B+\phi_{\mathrm{free}},
    \end{equation}
    where $\phi_B\in \mathrm{Hom}(A,B)$ and
    $\phi_{\mathrm{free}}\in \mathrm{Hom}(A,A^\perp\cap L)$. Since
    \begin{equation}
        [A,A^\perp\cap L]=0,
    \end{equation}
    the map $\phi_{\mathrm{free}}$ does not affect the isotropy condition, so only $\phi_B$ matters.
    
    Fix bases of $A$ and $B$. Then $\phi_B$ is represented by an $m\times m$ matrix, and
    the condition
    \begin{equation}
        [a,\phi_B(b)]=[b,\phi_B(a)]
    \end{equation}
    means exactly that this matrix is symmetric. Hence the number of possible $\phi_B$ is the number of symmetric
    $m\times m$ matrices over $\mathbb{F}_2$, namely
    \begin{equation}
        2^{m(m+1)/2}.
    \end{equation}
    Also, $\phi_{\mathrm{free}}$ can be chosen arbitrarily in
    $\mathrm{Hom}(A, A^\perp \cap L)$. Therefore, the number of possible
    choices for $\phi_{\mathrm{free}}$ is
    \begin{equation}
        |\mathrm{Hom}(A, A^\perp \cap L)| = 2^{m(n-m)}.
    \end{equation}
    Therefore, for each fixed $A\subseteq M$ with $\dim A=m$, the number of isotropic subspaces
    $S$ with $\pi_M(S)=A$ and $S\cap L=\{0\}$ is
    \begin{equation}
        2^{m(m+1)/2}2^{m(n-m)}.
    \end{equation}
    
    Finally, the number of $m$-dimensional subspaces $A\subseteq M$ is $\binom{n}{m}_2$, so
    \begin{equation}
        \bigl|\{S\in\mathcal I_m(V):\dim(S\cap L)=0\}\bigr|
        =
        \binom{n}{m}_2\,2^{m(m+1)/2}\,2^{m(n-m)},
    \end{equation}
    as desired.
\end{proof}

Extending the Lagrangian counting result in Fact~\ref{fact:num_lag_intersection} to isotropic subspaces $S \in \mathcal{I}_m(V)$, we obtain the following.

\begin{lemma}\label{lem:iso_lag_intersection}
    Let $L \subset V\coloneq\mathbb{F}_2^{2n}$ be a fixed Lagrangian subspace. Then, for every
    $0 \le l \le m \le n$,
    \begin{equation}
        \bigl|\{S \in \mathcal{I}_m(V) : \dim(S \cap L)=l\}\bigr|
        =
        \binom{n}{l}_2 \binom{n-l}{m-l}_2\,
        2^{(m-l)(m-l+1)/2}\,2^{(m-l)(n-m)}.
    \end{equation}
\end{lemma}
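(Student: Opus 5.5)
The count will come from a double-counting reduction to Lemma~\ref{lem:zero_intersec}, applied in a smaller symplectic space. I will count pairs $(U,S)$ with $U=S\cap L$ of dimension $l$ by first choosing $U$ and then choosing $S$ given $U$.

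\textbf{Choosing $U$.} Since $S$ is isotropic and $U\subseteq L$, $U$ is an arbitrary $l$-dimensional subspace of $L\cong\mathbb{F}_2^n$. This contributes the factor $\binom{n}{l}_2$. Every subspace of $L$ is isotropic, so there is no constraint here.

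\textbf{Choosing $S$ given $U$.} Fix $U$. Any isotropic $S\supseteq U$ satisfies $S\subseteq U^\perp$, so we pass to the quotient $\pi:U^\perp\to U^\perp/U=:V'$. The space $V'$ is symplectic of dimension $2(n-l)$, i.e. $V'\cong\mathbb{F}_2^{2(n-l)}$. The map $S\mapsto \pi(S)$ is a bijection from isotropic $m$-dimensional subspaces $S$ with $U\subseteq S\subseteq U^\perp$ onto isotropic $(m-l)$-dimensional subspaces of $V'$. This is the standard correspondence theorem: the form descends to $V'$, and preimages of isotropic subspaces are isotropic. As noted in the preliminaries, $L':=\pi(L\cap U^\perp)=\pi(L)$ is Lagrangian in $V'$; here $L\subseteq U^\perp$ because $L$ is Lagrangian and $U\subseteq L$.

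The key check is the translation of the intersection condition: $S\cap L=U$ exactly if $\pi(S)\cap L'=\{0\}$. By the modular identity (Fact~\ref{fact:modular_identity}), with $U\subseteq L$, we have $S\cap(L+U)=S\cap L$. Hence
\[
\pi(S)\cap\pi(L)=\pi\bigl(S\cap (L+U)\bigr)=\pi(S\cap L).
\]
Since $S\cap L\supseteq U$, the set $\pi(S\cap L)$ is trivial exactly when $S\cap L=U$.

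\textbf{Combining.} Lemma~\ref{lem:zero_intersec}, applied in $V'$ with $n'=n-l$ and $m'=m-l$, gives the number of choices of $S$ for each fixed $U$:
\[
\binom{n-l}{m-l}_2\,2^{(m-l)(m-l+1)/2}\,2^{(m-l)(n-m)}.
\]
This number is independent of $U$. Each $S$ is counted exactly once, namely with $U=S\cap L$, so multiplying by $\binom{n}{l}_2$ yields the claimed formula.

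The only real obstacle is verifying the equality $\pi(S)\cap\pi(L)=\pi(S\cap L)$ above and that the quotient correspondence preserves isotropy and dimension. The rest is direct substitution.
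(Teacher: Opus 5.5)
Your proposal is correct and follows the paper's proof essentially step for step: you choose $U=S\cap L$ in $\binom{n}{l}_2$ ways, pass to the symplectic quotient $U^\perp/U$ of dimension $2(n-l)$, translate $S\cap L=U$ into $\pi(S)\cap\pi(L)=\{0\}$, and apply Lemma~\ref{lem:zero_intersec} with $n\leftarrow n-l$ and $m\leftarrow m-l$. Your explicit checks of the intersection equivalence and of $L\subseteq U^\perp$ fill in details the paper only asserts; the appeal to the modular identity is harmless but unnecessary, since $U\subseteq L$ already gives $L+U=L$.
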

\begin{proof}  
    Fix $T=S\cap L$ with $\dim(T)=l$. There are $\binom{n}{l}_2$ such subspaces $T\subset L$. For each fixed such $T$, we have
    \begin{equation}
        S \cap L=T \quad\Leftrightarrow\quad S_T\cap L_{T}=\{0\},
    \end{equation}
    where $S_T=S/T$ and $L_T=L/T$. Therefore, for fixed $T$,
    \begin{equation}
        |\{S\in\mathcal{I}_{m}(V):S\cap L=T, \ \dim(T)=l\}|
        =
        |\{S_T \in \mathcal{I}_{m-l}(W):S_T\cap L_{T}=\{0\}\}|,
    \end{equation}
    where $W=T^{\perp}/T$ and $\dim(W)=2(n-l)$. Applying Lemma~\ref{lem:zero_intersec} with
    $m\leftarrow m-l$ and $n\leftarrow n-l$ completes the proof.
\end{proof}

Using Lemma~\ref{lem:iso_lag_intersection}, we can compute higher moments of the intersection size with a fixed Lagrangian $L$ for a uniformly random $S \in \mathcal{I}_m(V)$, as stated below.

\begin{lemma}\label{lem:iso_Lag_high_inter}
    For a fixed Lagrangian $L\subset V\coloneq\mathbb{F}_2^{2n}$ and $c\in \mathbb{N}$, 
    \begin{equation}
        \mathbb{E}_{S\in \mathcal{I}_{n-t}(V)}[|S\cap L|^c]
        =
        \prod_{i=1}^{c}\frac{2^{-c+i}+2^{-t}}{2^{-c+i}+2^{-n}}.
    \end{equation}
\end{lemma}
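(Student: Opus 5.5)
The plan is to prove the formula by double counting, writing $|S\cap L|^c$ as a count of $c$-tuples. Set $m:=n-t$. We have $|S\cap L|^c=\#\{(x_1,\dots,x_c)\in L^c : x_1,\dots,x_c\in S\}$, and a tuple lies in $S$ exactly when its span $U=\langle x_1,\dots,x_c\rangle\subseteq L$ does. Hence
\begin{equation}
    \mathbb{E}_{S}\bigl[|S\cap L|^c\bigr]
    =\sum_{j=0}^{\min(c,n)} N_{c,j}\,\binom{n}{j}_2\, p_j ,
\end{equation}
where $N_{c,j}=\prod_{i=0}^{j-1}(2^c-2^i)$ is the number of $c$-tuples spanning a fixed $j$-dimensional space, $\binom{n}{j}_2$ counts the $j$-dimensional subspaces $U\subseteq L$, and $p_j=\Pr_S(U\subseteq S)$.

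Next I compute $p_j$. Any $U\subseteq L$ is isotropic, so the $S\in\mathcal I_m(V)$ with $S\supseteq U$ correspond bijectively to the $(m-j)$-dimensional isotropic subspaces of $U^\perp/U\cong\mathbb{F}_2^{2(n-j)}$. Lemma~\ref{lem:number-of-isotropic-subspaces} then gives $p_j=|\mathcal I_{m-j}(\mathbb{F}_2^{2(n-j)})|/|\mathcal I_m(V)|$. Equivalently, counting ordered bases as in that lemma's proof,
\begin{equation}
    p_j=\prod_{i=0}^{j-1}\frac{2^{m-i}-1}{4^{n-i}-1}.
\end{equation}
As a sanity check, $c=1$ gives $1+(2^n-1)\frac{2^m-1}{4^n-1}=\frac{2^n+2^m}{2^n+1}=\frac{1+2^{-t}}{1+2^{-n}}$, which matches the claim.

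The remaining step, which I expect to be the main obstacle, is to show that the finite sum equals the stated product. Substituting and simplifying, the $j$-th term is
\begin{equation}
    \prod_{i=0}^{j-1}\frac{(2^c-2^i)(2^{m-i}-1)}{(2^{j-i}-1)(2^{n-i}+1)},
\end{equation}
using $\binom{n}{j}_2/\prod_{i<j}(4^{n-i}-1)=1/\bigl(\prod_{i<j}(2^{j-i}-1)(2^{n-i}+1)\bigr)$. This is a terminating basic hypergeometric series in base $q=2$ with parameters $q^{-c}$ and $q^{-m}$ upstairs and $-q^{-n}$ downstairs, so I would first try to identify it with the $q$-Chu--Vandermonde (or $q$-Pfaff--Saalschütz) summation. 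If matching parameters proves awkward, the fallback is induction on $c$: I would show that the ratio of the sums for $c$ and $c-1$ equals the factor obtained when passing from $\prod_{i=1}^{c-1}$ to $\prod_{i=1}^{c}$ in the claimed product. For this I would use the Pascal-type recursion for $\prod_{i<j}(2^c-2^i)$ in $c$, or alternatively the generating function $\sum_l \Pr(\dim(S\cap L)=l)\,2^{cl}$ derived from Lemma~\ref{lem:iso_lag_intersection}, in analogy with the closed forms $G_n(2)$ and $G_n(4)$ in the Lagrangian case. Finally, I would check $c=2$ with $t=0$ against $G_n(4)$ as a consistency test.
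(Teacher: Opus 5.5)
Your proposal is correct, and the step you flag as the main obstacle does close via $q$-Chu--Vandermonde, though with the specific form given below. With $q=2$ and $m=n-t$, your $j$-th term is exactly
\[
\frac{(q^{-c};q)_j\,(q^{-m};q)_j}{(q;q)_j\,(-q^{-n};q)_j}\,\bigl(-q^{c+m-n}\bigr)^j .
\]
Note that $-q^{c+m-n}=\gamma q^{c}/b$ for $b=q^{-m}$ and $\gamma=-q^{-n}$. So use the terminating $q$-Chu--Vandermonde identity in the form with this argument, ${}_2\phi_1(q^{-N},b;\gamma;q,\gamma q^{N}/b)=(\gamma/b;q)_N/(\gamma;q)_N$, with $N=c$. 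It evaluates your sum to $(-2^{-t};2)_c/(-2^{-n};2)_c=\prod_{j=0}^{c-1}(1+2^{j-t})/(1+2^{j-n})$. Reindexing $i=c-j$ turns this into exactly the claimed product. The terminating identity is a rational identity in $q$, so specializing to $q=2$ is legitimate, and no induction on $c$ is needed.

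The paper instead takes your second fallback. It sums $2^{cl}$ against the full distribution of $\dim(S\cap L)$ from Lemma~\ref{lem:iso_lag_intersection} and uses $\binom{n}{l}_2\binom{n-l}{n-t-l}_2=\binom{n}{t}_2\binom{n-t}{l}_2$ together with $l\to n-t-l$. The whole sum then becomes one instance of the $q$-binomial theorem \eqref{prop:gaissian_binom} with variable $2^{1+t-c}$, after which the ratio of products telescopes.

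Each route buys something different:
\begin{itemize}
\item The paper's route uses the fact that the weight $2^{cl}$ is geometric in $l$, so every $c$ is handled by the most elementary $q$-series identity. The price is first establishing the entire intersection distribution (Lemmas~\ref{lem:zero_intersec} and \ref{lem:iso_lag_intersection}), which the paper needs anyway for the conditioning in Lemma~\ref{lem:main_chunk}.
\item Your moment expansion, in the spirit of the indicator expansions in Lemmas~\ref{lem:mu} and \ref{lem:gamma}, needs only the inclusion probability $p_j$ of a fixed isotropic subspace and the isotropic count. It leaves a sum of just $\min(c,m)+1$ terms, so low moments such as $c=2$ (Corollary~\ref{cor:f_t}) are a direct hand check. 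The cost is that general $c$ requires a genuine ${}_2\phi_1$ summation rather than the $q$-binomial theorem.
\end{itemize}
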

\begin{proof}    
    \begin{align}
        \mathbb{E}_{S\in \mathcal{I}_{n-t}(V)}[|S\cap L|^c] 
        &=\sum_l^{n-t} \Pr(\mathrm{dim}(S \cap L)=l)\,2^{cl}\\
        &=\frac{1}{\binom{n}{t}_2\prod_{i=t+1}^{n}(2^i+1)}\sum_{l=0}^{n-t} \binom{n}{l}_2 \binom{n-l}{n-t-l}_2 2^{(n-t-l)(n-t-l+1)/2} 2^{(n-t-l)t}2^{cl}\\
        &=\frac{1}{\prod_{i=t+1}^{n}(2^i+1)}\sum_{l=0}^{n-t} \binom{n-t}{l}_2 2^{(n-t-l)(n-t-l+1)/2}2^{(n-t-l)t}2^{cl}\\
        &=\frac{1}{\prod_{i=t+1}^{n}(2^i+1)}\sum_{l=0}^{n-t}\binom{n-t}{l}_2 2^{l(l+1)/2}2^{tl}2^{c(n-t-l)}\\
        &=2^{c(n-t)}\frac{\prod_{i=0}^{n-t-1} (2^{i+t+1-c}+1)}{\prod_{i=t+1}^{n}(2^i+1)}\\
        &=2^{c(n-t)}\frac{\prod_{i=t-c+1}^{t} (2^i+1)}{\prod_{i=n-c+1}^{n}(2^i+1)}\\
        &=\prod_{i=1}^{c}\frac{2^{-c+i}+2^{-t}}{2^{-c+i}+2^{-n}}
    \end{align}
    as desired. The second line follows from Lemma~\ref{lem:number-of-isotropic-subspaces} and Lemma~\ref{lem:iso_lag_intersection}, the third from
    Eq.~\eqref{prop:gaissian_binom_one}, the fourth by the change of variables
    $l \leftarrow n-t-l$, and the fifth from Eq.~\eqref{prop:gaissian_binom}.
\end{proof}

These higher-moment estimates will be used to control tail bounds for averages over $S \in \mathcal{I}_{n-t}(V)$. In particular, for $c=O(1)$, the corresponding higher moments remain $O(1)$.

\begin{corollary}\label{cor:f_t}
    \begin{equation}
    \mathbb{E}_{S\in \mathcal{I}_{n-t}(V)}|S\cap L|^2
    =
    1+\frac{3}{2^t}+\frac{2}{2^{2t}}-O(2^{-n}).
    \end{equation}
\end{corollary}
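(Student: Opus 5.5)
This is a direct specialization of Lemma~\ref{lem:iso_Lag_high_inter} to $c=2$, followed by an expansion in $2^{-n}$. Setting $c=2$, the product has two factors, $i=1$ and $i=2$:
\begin{equation}
    \mathbb{E}_{S\in \mathcal{I}_{n-t}(V)}|S\cap L|^2
    =
    \frac{2^{-1}+2^{-t}}{2^{-1}+2^{-n}}\cdot\frac{1+2^{-t}}{1+2^{-n}} .
\end{equation}
Multiplying numerator and denominator of the first factor by $2$ rewrites it as $(1+2^{1-t})/(1+2^{1-n})$.

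The numerator is then
\begin{equation}
    (1+2^{1-t})(1+2^{-t}) = 1+3\cdot 2^{-t}+2\cdot 2^{-2t},
\end{equation}
which is exactly the main term in the statement.

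The denominator is $(1+2^{1-n})(1+2^{-n}) = 1+O(2^{-n})$. Its reciprocal is therefore $1-O(2^{-n})$. Because the numerator is at most $6$ for every $t\ge 0$, multiplying through gives
\begin{equation}
    1+\tfrac{3}{2^t}+\tfrac{2}{2^{2t}}-O(2^{-n}).
\end{equation}
The error term is nonpositive, since the denominator exceeds $1$.

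No step here is a real obstacle; the corollary is essentially immediate once Lemma~\ref{lem:iso_Lag_high_inter} is available. The only points to check are that I have indexed the two product factors correctly and that the numerator is bounded uniformly in $t$, which keeps the $O(2^{-n})$ constant absolute. As a sanity check, $t=0$ gives $6\cdot4^n/((2^n+1)(2^n+2))$, which agrees with $G_n(4)$ from Definition~\ref{def:generating_func}.
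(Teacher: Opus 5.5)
Your proposal is correct and follows the paper's proof: you specialize Lemma~\ref{lem:iso_Lag_high_inter} to $c=2$, expand the numerator $(1+2^{1-t})(1+2^{-t})$ to $1+3\cdot 2^{-t}+2\cdot 2^{-2t}$, and absorb the denominator as a $1-O(2^{-n})$ factor. Your remark that the numerator is at most $6$ uniformly in $t$ makes the constant in $O(2^{-n})$ absolute, a point the paper leaves implicit.
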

\begin{proof}
    Substituting $c=2$ into Lemma~\ref{lem:iso_Lag_high_inter}, we obtain
    \begin{align}
        \mathbb{E}_{S\in \mathcal{I}_{n-t}(V)}|S\cap L|^2
        &=
        \frac{(2^{-1}+2^{-t})(1+2^{-t})}{(2^{-1}+2^{-n})(1+2^{-n})} \\
        &=
        \frac{(2^{-1}+2^{-t})(1+2^{-t})}{2^{-1}}\bigl(1-O(2^{-n})\bigr) \\
        &=
        1+\frac{3}{2^t}+\frac{2}{2^{2t}}-O(2^{-n}),
    \end{align}
    as claimed.
\end{proof}

\begin{lemma}\label{lem:dim_intersection}
    Let $L_1, L_2, L_1', L_2' \subset V\coloneq\mathbb{F}_2^{2n}$ be Lagrangian subspaces such that
    \begin{equation}
        \dim(L_1 \cap L_2)=\dim(L_1' \cap L_2').
    \end{equation}
    Then, for uniformly sampled isotropic subspace $S \in \mathcal{I}_m(V)$,
    \begin{equation}
        \mathbb{E}_S \, \bigl[|S\cap L_1|^2 |S\cap L_2|^2\bigr]
        =
        \mathbb{E}_S \, \bigl[|S\cap L_1'|^2 |S\cap L_2'|^2\bigr].
    \end{equation}
\end{lemma}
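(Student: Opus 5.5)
The plan is to use the transitivity of the symplectic group $\mathrm{Sp}(V)$ on pairs of Lagrangian subspaces with a prescribed intersection dimension, together with the $\mathrm{Sp}(V)$-invariance of the uniform distribution on $\mathcal{I}_m(V)$.

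\emph{Step 1: transitivity on pairs.} I will show that there exists $\phi\in\mathrm{Sp}(V)$ with $\phi(L_1)=L_1'$ and $\phi(L_2)=L_2'$. Let $l=\dim(L_1\cap L_2)=\dim(L_1'\cap L_2')$ and put $K=L_1\cap L_2$, which is isotropic. Pass to the symplectic quotient $W=K^\perp/K$, of dimension $2(n-l)$. The images $L_1/K$ and $L_2/K$ are Lagrangian in $W$ and intersect trivially, so they are transverse, and likewise for the primed pair.

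For a transverse pair $(A,B)$ the pairing $[\cdot,\cdot]:A\times B\to\mathbb{F}_2$ is nondegenerate. Therefore a basis $a_1,\dots,a_{n-l}$ of $A$ has a dual basis $b_j$ of $B$ with $[a_i,b_j]=\delta_{ij}$, and this is a symplectic basis of $W$. Mapping the symplectic basis built from $(L_1/K,L_2/K)$ to the one built from $(L_1'/K',L_2'/K')$ gives a symplectic isomorphism $W\to W'$ carrying the pair to the primed pair.

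\emph{Step 2: lifting to $V$.} I then lift this to $V$ by completing to a global symplectic basis. Choose a basis $k_1,\dots,k_l$ of $K$. Choose lifts $\tilde a_i\in L_1$ and $\tilde b_j\in L_2$ of the quotient basis vectors. Then choose $h_1,\dots,h_l$ with $[k_i,h_j]=\delta_{ij}$ that are orthogonal to all $\tilde a_i,\tilde b_j$ and to each other. This is possible because the span of $\{k,\tilde a,\tilde b\}$ is $L_1+L_2=K^\perp$, a standard Witt-type extension.

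In the resulting symplectic basis $\{k_i,\tilde a_i\}\cup\{h_i,\tilde b_i\}$ we have $L_1=\mathrm{span}\{k,\tilde a\}$ and $L_2=\mathrm{span}\{k,\tilde b\}$. Doing the same for the primed pair, the linear map sending one basis to the other lies in $\mathrm{Sp}(V)$ and maps $L_i\mapsto L_i'$. Alternatively, I could cite Witt's extension theorem directly for the isometry $L_1+L_2\to L_1'+L_2'$; the restriction of the form is degenerate there, but Witt's theorem still applies in characteristic 2 for alternating forms.

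\emph{Step 3: invariance of the expectation.} Since $\phi$ preserves the form and is a bijection, $S\mapsto\phi(S)$ is a bijection of $\mathcal{I}_m(V)$, so $\phi(S)$ is again uniform. Because $\phi$ is linear and injective, $|\phi(S)\cap\phi(L_i)|=|S\cap L_i|$. Hence
\[
\mathbb{E}_S\bigl[|S\cap L_1'|^2|S\cap L_2'|^2\bigr]=\mathbb{E}_S\bigl[|\phi(S)\cap \phi(L_1)|^2|\phi(S)\cap\phi(L_2)|^2\bigr]=\mathbb{E}_S\bigl[|S\cap L_1|^2|S\cap L_2|^2\bigr].
\]
The same argument gives invariance of any function of the intersection pattern.

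The main obstacle is Step 2, constructing the $h_j$ carefully. This amounts to choosing a complement to $K^\perp$ that is dual to $K$ and symplectically orthogonal to the lifted hyperbolic pairs, via a Gram–Schmidt-type correction: subtract components along $\tilde a,\tilde b$ and then fix the mutual pairings among the $h$'s.
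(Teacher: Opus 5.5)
Your proposal is correct and follows the same route as the paper. Both build a symplectic map $\phi$ with $\phi(L_1)=L_1'$ and $\phi(L_2)=L_2'$ by matching adapted symplectic bases: your $\{k_i,\tilde a_i\}\cup\{h_i,\tilde b_i\}$ corresponds to the paper's $L_1=\langle e_1,\dots,e_n\rangle$, $L_2=\langle e_1,\dots,e_\ell,f_{\ell+1},\dots,f_n\rangle$. Both then invoke the $\mathrm{Sp}(V)$-invariance of the uniform distribution on $\mathcal{I}_m(V)$. The one difference is that the paper simply asserts these adapted bases exist, whereas you justify it through the quotient $K^\perp/K$, the dual-basis construction for transverse Lagrangians, and the Witt-type choice of the $h_j$.
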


\begin{proof}
    Let
    \begin{equation}
        \ell := \dim(L_1 \cap L_2)=\dim(L_1' \cap L_2').
    \end{equation}
    Choose symplectic bases
    \begin{equation}
        \{e_1,\dots,e_n,f_1,\dots,f_n\}
        \qquad\text{and}\qquad
        \{e_1',\dots,e_n',f_1',\dots,f_n'\}
    \end{equation}
    of $V$ such that
    \begin{align}
        L_1&=\langle e_1,\dots,e_n\rangle, \\
        L_2&=\langle e_1,\dots,e_\ell,f_{\ell+1},\dots,f_n\rangle, \\
        L_1'&=\langle e_1',\dots,e_n'\rangle, \\
        L_2'&=\langle e_1',\dots,e_\ell',f_{\ell+1}',\dots,f_n'\rangle.
    \end{align}
    These bases satisfy
    \begin{equation}
        [e_i,e_j]=[f_i,f_j]=0, [e_i,f_j]=\delta_{ij},
    \end{equation}
    and similarly,
    \begin{equation}
        [e_i',e_j']=[f_i',f_j']=0, [e_i',f_j']=\delta_{ij},
    \end{equation}
    for all $i,j\in[n]$.
    
    Define a linear map $\phi:V\to V$ by
    \begin{equation}
        \phi(e_i)=e_i', \phi(f_i)=f_i',
    \end{equation}
    for each $i\in[n]$. Then for any $x,y\in V$,
    \begin{equation}
        [x,y]=[\phi(x),\phi(y)],
    \end{equation}
    so $\phi \in \mathrm{Sp}(V)$. By construction,
    \begin{equation}
        \phi(L_1)=L_1', \phi(L_2)=L_2'.
    \end{equation}
    
    Since $\phi$ preserves the symplectic form, it maps isotropic subspaces to isotropic
    subspaces and hence induces a bijection on $\mathcal{I}_m(V)$. Therefore,
    \begin{align}
        \mathbb{E}_S \, [|S\cap L_1'|^2 |S\cap L_2'|^2]
        &=
        \mathbb{E}_S \, [|S\cap \phi(L_1)|^2 |S\cap \phi(L_2)|^2] \\
        &=
        \mathbb{E}_S \, [|\phi^{-1}(S)\cap L_1|^2 |\phi^{-1}(S)\cap L_2|^2] \\
        &=
        \mathbb{E}_S \, [|S\cap L_1|^2 |S\cap L_2|^2],
    \end{align}
    as desired.
\end{proof}

We next state two lemmas that will serve as key ingredients in the proof.

\begin{lemma}\label{lem:mu}
    Let $L_1, L_2\subset V\coloneq\mathbb{F}_2^{2n}$ be Lagrangian subspaces such that
    $K\coloneq L_1\cap L_2$ with $\dim(K)=\kappa$. Then, for uniformly sampled
    $S\sim \mathcal{I}_m(V)$,
    \begin{align}
        \mu(n,\kappa,m)
        &\coloneq
        \mathbb{E}_{S}\bigl[\, |S\cap L_2|^2 \bigm| S\cap L_1=\{0\} \,\bigr] \\
        &=
        1+3(1-2^{\kappa-n})\frac{2^m-1}{2^n-1} \notag\\
        &\quad
        +2(1-2^{\kappa-n})(1-2^{\kappa-n+1})
        \frac{(2^m-1)(2^{m-1}-1)}{(2^n-1)(2^{n-1}-1)} \\
        &\le
        1+3\cdot 2^{m-n}+2^{2m-2n+1}.
    \end{align}
\end{lemma}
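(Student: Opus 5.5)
The plan is to expand $|S\cap L_2|^2$ as a count of ordered pairs and to evaluate each conditional probability by a symmetry argument combined with the counts of Lemma~\ref{lem:zero_intersec} and Lemma~\ref{lem:number-of-isotropic-subspaces}. Write
\[
|S\cap L_2|^2=\sum_{x,y\in L_2}\mathbf 1\{x,y\in S\}
\]
and group the ordered pairs $(x,y)$ by $U=\langle x,y\rangle$.
\begin{itemize}
\item $U=0$ contributes $1$.
\item Each one-dimensional $U=\langle x\rangle$ arises from the $3$ pairs $(0,x),(x,0),(x,x)$.
\item Each two-dimensional $U$ arises from $6$ ordered pairs of distinct nonzero vectors.
\end{itemize}
Hence
\[
\mu=1+3\sum_{x\in L_2\setminus 0}\Pr(x\in S\mid S\cap L_1=0)+6\sum_{U\subseteq L_2,\ \dim U=2}\Pr(U\subseteq S\mid S\cap L_1=0).
\]
Under the conditioning, only $U$ with $U\cap L_1=U\cap K=\{0\}$ contribute. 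There are $2^n-2^\kappa$ vectors $x\in L_2\setminus K$. There are $(2^n-2^\kappa)(2^n-2^{\kappa+1})/6$ two-dimensional subspaces of $L_2$ meeting $K$ trivially.

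The key step is to show that $\Pr(U\subseteq S\mid S\cap L_1=0)$ depends only on $\dim U$, for isotropic $U$ with $U\cap L_1=0$. I would argue this by transitivity of the stabilizer $P_{L_1}=\{\phi\in\mathrm{Sp}(V):\phi(L_1)=L_1\}$ on such $U$ of fixed dimension. Since $P_{L_1}$ preserves both the uniform measure on $\mathcal I_m(V)$ and the event $S\cap L_1=0$, the probability is then constant on each orbit.

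For transitivity I would use the graph description from the proof of Lemma~\ref{lem:zero_intersec}. Fix a Lagrangian complement $M$, so that $U=\Gamma(\phi)$ over some $A\subseteq M$. The maps $x\mapsto x+\sigma(x_M)$, where $\sigma:M\to L_1$ is symmetric with respect to the pairing, are symplectic, fix $L_1$ and $M$'s projection, and can be chosen to absorb any admissible $\phi$. This reduces $U$ to $A\subseteq M$. Then $\mathrm{GL}(M)$, acting on $L_1\cong M^*$ by the inverse transpose, is symplectic and moves $A$ to any other subspace of the same dimension. I expect this transitivity step to be the main obstacle. If it becomes messy, the fallback is a direct double count of pairs $(S,U)$ with $U\subseteq S$ and $S\cap L_1=0$, done in the quotient $U^\perp/U$.

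Given constancy, I would double count. Averaging over $S$ conditioned on $S\cap L_1=0$:
\begin{itemize}
\item $S$ contains $2^m-1$ nonzero vectors, and all of them lie outside $L_1$.
\item $S$ contains $\binom{m}{2}_2$ two-dimensional subspaces.
\end{itemize}
The candidate sets have sizes
\begin{itemize}
\item $|V\setminus L_1|=2^n(2^n-1)$ for vectors;
\item $\binom n2_2\,2^{3}\,2^{2(n-2)}$ for two-dimensional isotropic $U$ with $U\cap L_1=0$, by Lemma~\ref{lem:zero_intersec} with $m=2$.
\end{itemize}
Therefore
\[
\Pr(x\in S\mid\cdot)=\frac{2^m-1}{2^n(2^n-1)},\qquad \Pr(U\subseteq S\mid\cdot)=\frac{\binom m2_2}{\binom n2_2\,2^{2n-1}}.
\]
Multiplying the first by $3(2^n-2^\kappa)$ gives $3(1-2^{\kappa-n})\frac{2^m-1}{2^n-1}$. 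Multiplying the second by $(2^n-2^\kappa)(2^n-2^{\kappa+1})$ and simplifying the Gaussian binomials gives $2(1-2^{\kappa-n})(1-2^{\kappa-n+1})\frac{(2^m-1)(2^{m-1}-1)}{(2^n-1)(2^{n-1}-1)}$, which is the claimed identity.

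For the final inequality, I drop the factors $(1-2^{\kappa-n})$ and $(1-2^{\kappa-n+1})$, which are at most $1$. I then use $\frac{2^m-1}{2^n-1}\le 2^{m-n}$ and $\frac{2^{m-1}-1}{2^{n-1}-1}\le 2^{m-n}$, both valid for $m\le n$. This yields $\mu\le 1+3\cdot 2^{m-n}+2^{2m-2n+1}$.
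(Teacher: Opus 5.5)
Your proposal is correct. Its skeleton matches the paper's: expand $|S\cap L_2|^2$ over pairs, discard spans meeting $K$, count the $(2^n-2^\kappa)$ and $(2^n-2^\kappa)(2^n-2^{\kappa+1})$ surviving configurations, and reduce to $p_1,p_2$. The two arguments differ only in how $p_d=\Pr(U\subseteq S\mid S\cap L_1=\{0\})$ is obtained.

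\emph{The paper's route.} It fixes $U_d$ and counts the numerator directly. The map $S\mapsto S/U_d$ is a bijection onto $\mathcal{I}_{m-d}(U_d^\perp/U_d)$, the image $\overline{L}_1$ is Lagrangian there, and $S\cap L_1=\{0\}\iff \overline{S}\cap\overline{L}_1=\{0\}$. Dividing Lemma~\ref{lem:zero_intersec} with parameters $(n-d,m-d)$ by the same lemma with $(n,m)$ gives $p_d=2^{-dn+d(d-1)/2}\binom{n-d}{m-d}_2/\binom{n}{m}_2$. This value visibly depends only on $d$, so the constancy the paper merely asserts is justified by the computation itself.

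\emph{Your route.} You prove constancy first, via transitivity of the stabilizer of $L_1$ in $\mathrm{Sp}(V)$ on isotropic $d$-spaces transverse to $L_1$. You then double count, needing only $\binom{m}{d}_2$ and Lemma~\ref{lem:zero_intersec} at $m=d$. Your values $p_1=(2^m-1)/(2^n(2^n-1))$ and $p_2=\binom{m}{2}_2/(\binom{n}{2}_2 2^{2n-1})$ agree with the paper's.

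\emph{Trade-off.} Your approach avoids the quotient bookkeeping and makes the underlying symmetry explicit, at the cost of a group-theoretic step. The one point you should write out is that an admissible $\phi:A\to L_1$ extends to a symmetric $\sigma:M\to L_1$. To see this, extend a basis of $A$ to $M$ and complete the Gram matrix $[u,\sigma(v)]$ symmetrically; the prescribed entries are consistent precisely because $\phi$ is symmetric on $A\times A$. The paper's route is purely enumerative and is essentially the fallback you mention.
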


\begin{proof}
    Since
    \begin{equation}
        |S\cap L_2|=\sum_{x\in L_2}\mathbf{1}\{x\in S\},
    \end{equation}
    we have
    \begin{align}
        |S\cap L_2|^2
        &=
        \sum_{x,y\in L_2}\mathbf{1}\{x\in S\}\mathbf{1}\{y\in S\} \\
        &=
        \sum_{x,y\in L_2}\mathbf{1}\{\langle x,y\rangle\subseteq S\}.
    \end{align}
    Therefore,
    \begin{align}
        \mu(n,k,m)
        &=
        \sum_{x,y\in L_2}
        \Pr_S\!\bigl(\langle x,y\rangle\subseteq S \,\bigm|\, S\cap L_1=\{0\}\bigr).
    \end{align}
    We group the pairs $(x,y)\in L_2\times L_2$ according to the dimension
    $d\in\{0,1,2\}$ of the subspace $U_d\coloneq \langle x,y\rangle$.
    We also use the fact that if any one of $x$, $y$, or $x+y$ lies in $L_1\cap L_2=K$, then the corresponding term is zero, and that the conditional probability above depends only on $d$, not on the choice of $(x,y)$.
    For $d\in\{1,2\}$, let $U_d\subseteq L_2$ be any fixed $d$-dimensional subspace satisfying
    $U_d\cap L_1=\{0\}$, and define
    \begin{equation}
        p_d
        :=
        \Pr_S\!\bigl(U_d\subseteq S \,\bigm|\, S\cap L_1=\{0\}\bigr).
    \end{equation}
    Then
    \begin{equation}\label{eq:mu_expand_refined}
        \mu(n,\kappa,m)
        =
        1+3(2^n-2^\kappa)p_1+(2^n-2^\kappa)(2^n-2^{\kappa+1})p_2.
    \end{equation}

    It remains to compute $p_d$. By definition,
    \begin{equation}
        p_d
        =
        \frac{
            \bigl|\{S\in \mathcal{I}_m(V): U_d\subseteq S,\ S\cap L_1=\{0\}\}\bigr|
        }{
            \bigl|\{S\in \mathcal{I}_m(V): S\cap L_1=\{0\}\}\bigr|
        }.
    \end{equation}
    The denominator is given by Lemma~\ref{lem:zero_intersec}:
    \begin{equation}\label{eq:mu_denominator}
        \bigl|\{S\in \mathcal{I}_m(V): S\cap L_1=\{0\}\}\bigr|
        =
        \binom{n}{m}_2
        2^{m(m+1)/2}
        2^{m(n-m)}.
    \end{equation}

    For the numerator, since $U_d$ is isotropic, the quotient
    \begin{equation}
        W:=U_d^\perp/U_d
    \end{equation}
    is a symplectic vector space of dimension $2(n-d)$, and the map
    $S\mapsto S/U_d$ gives a bijection between
    \[
        \{S\in \mathcal{I}_m(V): U_d\subseteq S\}
        \quad\text{and}\quad
        \mathcal{I}_{m-d}(W).
    \]
    Set
    \begin{equation}
        \overline{L}_1:=(L_1 \cap U_d^{\perp}+U_d)/U_d \subseteq W.
    \end{equation}
    Since $L_1$ is Lagrangian and $U_d\cap L_1=\{0\}$, the subspace $\overline{L}_1$
    is Lagrangian in $W$. Moreover, for every isotropic $S$ containing $U_d$,
    \begin{equation}
        S\cap L_1=\{0\}
        \qquad\Longleftrightarrow\qquad
        \overline{S}\cap \overline{L}_1=\{0\},
    \end{equation}
    where $\overline{S}\coloneq S/U_d$.

    Indeed, since $U_d\subseteq S$,
    \begin{align}
        \overline{S} \cap \overline{L}_1 
        &= (S/U_d) \cap ((L_1 \cap U_d^{\perp}+U_d)/U_d) \\
        &= (S\cap(L_1 \cap U_d^{\perp}+U_d))/U_d \\
        &= (S \cap L_1 + U_d)/U_d.
    \end{align}
    If $S\cap L_1=\{0\}$, then clearly $\overline{S}\cap \overline{L}_1=\{0\}$.
    Conversely, if $\overline{S}\cap \overline{L}_1=\{0\}$, then
    $S\cap L_1 \subseteq U_d$, and since $L_1\cap U_d=\{0\}$, it follows that
    $S\cap L_1=\{0\}$.

    Therefore,
    \begin{align}
        &\bigl|\{S\in \mathcal{I}_m(V): U_d\subseteq S,\ S\cap L_1=\{0\}\}\bigr| \\
        &= \bigl|\{\overline{S}\in \mathcal{I}_{m-d}(W): \overline{S}\cap \overline{L}_1=\{0\}\}\bigr|.
    \end{align}
    Applying Lemma~\ref{lem:zero_intersec} in $W$, with $n$ replaced by $n-d$
    and $m$ replaced by $m-d$, we obtain
    \begin{equation}\label{eq:mu_numerator}
        \bigl|\{S\in \mathcal{I}_m(V): U_d\subseteq S,\ S\cap L_1=\{0\}\}\bigr|
        =
        \binom{n-d}{m-d}_2
        2^{(m-d)(m-d+1)/2}
        2^{(m-d)(n-m)}.
    \end{equation}
    Combining \eqref{eq:mu_denominator} and \eqref{eq:mu_numerator}, we obtain
    \begin{equation}\label{eq:pd_general_refined}
        p_d
        =
        2^{-dn+d(d-1)/2}
        \frac{\binom{n-d}{m-d}_2}{\binom{n}{m}_2}.
    \end{equation}
    In particular,
    \begin{align}
        p_1&=2^{-n}\frac{2^m-1}{2^n-1},\\
        p_2&=2^{-2n+1}\frac{(2^m-1)(2^{m-1}-1)}{(2^n-1)(2^{n-1}-1)}.
    \end{align}

    Substituting these into \eqref{eq:mu_expand_refined} gives
    \begin{align}
        \mu(n,\kappa,m)
        &=
        1+3(1-2^{\kappa-n})\frac{2^m-1}{2^n-1} \\
        &\quad
        +2(1-2^{\kappa-n})(1-2^{\kappa-n+1})
        \frac{(2^m-1)(2^{m-1}-1)}{(2^n-1)(2^{n-1}-1)} \\
        &\le
        1+3\cdot 2^{m-n}+2^{2m-2n+1}.
    \end{align}
    This proves the claim.
\end{proof}
\begin{lemma}\label{lem:gamma}
    Let $L_1, L_2\subset V\coloneq\mathbb{F}_2^{2n}$ be Lagrangian subspaces such that
    $K\coloneq L_1\cap L_2$ with $\dim(K)=\kappa$. Then, 
    \begin{align}
        \gamma(n,m,\kappa)
        &\coloneq
        \mathbb{E}_{S\sim \mathcal{I}_{m}(L_1)}\left[|S\cap K|^2 \right]\\
        &=
        1+3(2^{\kappa}-1)\frac{2^m-1}{2^n-1}
        +(2^{\kappa}-1)(2^{\kappa}-2)\frac{(2^m-1)(2^{m-1}-1)}{(2^n-1)(2^{n-1}-1)}\\
        &\le 1+3\cdot2^{\kappa+m-n}+4^{\kappa+m-n}.
    \end{align}
\end{lemma}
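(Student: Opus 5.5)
The plan mirrors the proof of Lemma~\ref{lem:mu}, with an unconditioned average. Here $S$ is a uniformly random $m$-dimensional subspace of the Lagrangian $L_1\cong\mathbb{F}_2^n$. Every subspace of $L_1$ is automatically isotropic, so $\mathcal{I}_m(L_1)$ is simply the Grassmannian of $m$-dimensional subspaces of $L_1$, and the only counting needed is with Gaussian binomials. We expand
\begin{equation}
    |S\cap K|^2=\sum_{x,y\in K}\mathbf{1}\{x\in S\}\mathbf{1}\{y\in S\}
    =\sum_{x,y\in K}\mathbf{1}\{\langle x,y\rangle\subseteq S\},
\end{equation}
and take expectations term by term. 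The pairs $(x,y)\in K\times K$ are grouped by $d=\dim\langle x,y\rangle\in\{0,1,2\}$.

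The case $d=0$ is the single pair $x=y=0$ and contributes $1$. For $d=1$, the pairs with $\langle x,y\rangle$ one-dimensional number $3(2^\kappa-1)$: choose a nonzero $v\in K$, and then $(x,y)\in\{(v,0),(0,v),(v,v)\}$. For $d=2$, the ordered pairs of linearly independent vectors in $K$ number $(2^\kappa-1)(2^\kappa-2)$. By the transitivity of $\mathrm{GL}(L_1)$ on $d$-dimensional subspaces, the probability $q_d:=\Pr_S(U_d\subseteq S)$ for a fixed $d$-dimensional $U_d\subseteq L_1$ depends only on $d$. It equals the ratio $\binom{n-d}{m-d}_2/\binom{n}{m}_2$, because subspaces containing $U_d$ correspond to $(m-d)$-dimensional subspaces of $L_1/U_d$. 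Evaluating gives $q_1=\frac{2^m-1}{2^n-1}$ and $q_2=\frac{(2^m-1)(2^{m-1}-1)}{(2^n-1)(2^{n-1}-1)}$. Substituting these yields exactly the claimed closed form.

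For the upper bound, we use the elementary inequalities $2^\kappa-1\le 2^\kappa$ and $(2^\kappa-1)(2^\kappa-2)\le 4^\kappa$. We also use $\frac{2^m-1}{2^n-1}\le 2^{m-n}$ and $\frac{2^{m-1}-1}{2^{n-1}-1}\le 2^{m-n}$, both of which hold since $m\le n$ (the case $m=0$ is trivial, and the second inequality is fine for $m\ge1$). Combining these gives $1+3\cdot2^{\kappa+m-n}+4^{\kappa+m-n}$.

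No step is a real obstacle. The only care needed is in the counting of ordered pairs by $d$, specifically the factor $3$ for $d=1$, and in checking the edge cases $m\in\{0,1\}$ for the $q_2$ bound.
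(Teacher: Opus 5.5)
Your proposal is correct and follows essentially the same route as the paper: you expand $|S\cap K|^2$ as a double sum over $K\times K$, group pairs by the dimension of their span (counts $1$, $3(2^\kappa-1)$, $(2^\kappa-1)(2^\kappa-2)$), and evaluate $\Pr(U_d\subseteq S)=\binom{n-d}{m-d}_2/\binom{n}{m}_2$. Your explicit check of the final inequality, including the edge cases in $m$, fills in a step the paper states without comment.
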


\begin{proof}
    Since
    \begin{equation}
        |S \cap K|=\sum_{x\in K}\mathbf{1}\{x\in S\},
    \end{equation}
    we have
    \begin{align}
        |S \cap K|^2
        &=\sum_{x,y \in K}\mathbf{1}\{x\in S\}\mathbf{1}\{y\in S\}\\
        &=1+3\sum_{x \in K\setminus \{0\}}\mathbf{1}\{x\in S\}
        +\sum_{x \neq y \in K\setminus \{0\}}\mathbf{1}\{x\in S\}\mathbf{1}\{y\in S\}.
    \end{align}
    Taking expectation, we obtain
    \begin{align}
        \gamma(n,m,k)
        &=
        1+3\sum_{x \in K\setminus \{0\}}
        \mathbb{E}_{S\sim \mathcal{I}_m(L_1)} \left[\mathbf{1}\{x\in S\}\right] \\
        &\quad
        +\sum_{x \neq y \in K\setminus \{0\}}
        \mathbb{E}_{S\sim \mathcal{I}_m(L_1)}
        \left[\mathbf{1}\{x\in S\}\mathbf{1}\{y\in S\}\right].
    \end{align}
    Since the probability depends only on the dimension of the span,
    this becomes
    \begin{align}
        \gamma(n,m,\kappa)
        &=
        1+3(2^\kappa-1)\Pr_S (\langle x \rangle \subseteq S) \\
        &\quad +(2^\kappa-1)(2^\kappa-2)\Pr_S(\langle x , y\rangle \subseteq S),
        \label{eq:main2}
    \end{align}
    where $x\neq y \in K \setminus \{0\}$.

    Now let $U_d\subseteq L_1$ be a fixed $d$-dimensional subspace. The number of
    $m$-dimensional subspaces of $L_1$ containing $U_d$ is
    $\binom{n-d}{m-d}_2$, and the total number of $m$-dimensional subspaces of $L_1$
    is $\binom{n}{m}_2$. Therefore,
    \begin{equation}
        \Pr_S(U_d \subseteq S)
        =
        \frac{\binom{n-d}{m-d}_2}{\binom{n}{m}_2}.
    \end{equation}
    In particular,
    \begin{align}
        \Pr_S(U_1 \subseteq S)
        &=\frac{2^m-1}{2^n-1},\\
        \Pr_S(U_2 \subseteq S)
        &=\frac{(2^m-1)(2^{m-1}-1)}{(2^n-1)(2^{n-1}-1)}.
    \end{align}
    Substituting these into Eq.~\eqref{eq:main2} gives
    \begin{align}
        \gamma(n,m,\kappa)
        &=
        1+3(2^{\kappa}-1)\frac{2^m-1}{2^n-1}
        +(2^{\kappa}-1)(2^{\kappa}-2)\frac{(2^m-1)(2^{m-1}-1)}{(2^n-1)(2^{n-1}-1)}\\
        &\le 1+3\cdot2^{\kappa+m-n}+4^{\kappa+m-n}.
    \end{align}
    This proves the claim.
\end{proof}

\subsection{Part I: Recovery of the Weyl support}

In the first part, we show that, when $t=O(\log n)$, with high probability over independently sampled
$C_1,\dots,C_m \sim \mathcal{C}_k$, the resulting measurement bases are sufficient to recover the full stabilizer group. Concretely, we show that
\begin{equation}
    \sum_{i=1}^{m} \bigl(\mathrm{Weyl}(\rho)\cap C_i^{\dagger}(\mathcal{Z})\bigr)
    =
    \mathrm{Weyl}(\rho)
\end{equation}
holds with high probability.

This conclusion does not hold in the worst case for non-adaptive shallow-depth measurements. For example, as noted in the main text, $\mathrm{Weyl}(\rho_{\mathrm{GHZ}})$ may fail to be recovered from polynomially many sampled bases. However, such worst-case obstructions occur only for an exponentially small fraction of $\mathrm{Weyl}(\rho)\in\mathcal{I}_{n-t}(V)$. For almost all target subspaces with $t=O(\log n)$, the following lemma provides the required recovery guarantee.

\begin{lemma}[Intersection with $\mathcal{Z}$]\label{lem:pr_lower_overlap}
    Assume that $t=O(\log n)$ and $k=\Omega(\log n)$. Let $C$ be sampled independently from $\mathcal{C}_k$. Then, for all but a
    $2^{-\Omega(n)}$ fraction of $S \in \mathcal{I}_{n-t}(V)$, the following holds:
    for every codimension-one subspace $T \subset S$,
    \begin{equation}
        \Pr_{C}\!\left(C(S \setminus T) \cap \mathcal{Z} \neq \emptyset\right)
        \ge \Omega(2^{-t}).
    \end{equation}
\end{lemma}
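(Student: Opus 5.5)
We will prove the lemma by a second-moment (Paley--Zygmund) argument over the random Clifford $C$, for fixed $S$ and $T$. Define the random variable
\begin{equation}
    X_T(C) \coloneq |C(S)\cap\mathcal{Z}| - |C(T)\cap\mathcal{Z}|
    = \sum_{x\in S\setminus T}\mathbf{1}\{C(x)\in\mathcal{Z}\},
\end{equation}
so the event in question is $\{X_T>0\}$. By Paley--Zygmund, $\Pr_C(X_T>0)\ge \mathbb{E}[X_T]^2/\mathbb{E}[X_T^2]$. The first moment is $\mathbb{E}[X_T]=\sum_{x\in S\setminus T} m_x$. By Eq.~\eqref{eq:mp_lower}, each $m_x\ge \Omega(2^{-n})$ when $k=\Omega(\log n)$. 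Since $|S\setminus T|=2^{n-t-1}$, this gives $\mathbb{E}[X_T]=\Omega(2^{-t})$ uniformly in $T$ and with no assumption on $S$. It therefore suffices to show $\mathbb{E}_C[X_T^2]\le O(2^{-t})$ for all $T\subset S$, for all but a $2^{-\Omega(n)}$ fraction of $S$.

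For the second moment we will use the bound $X_T^2\le |C(S)\cap\mathcal{Z}|^2\cdot\mathbf{1}\{X_T>0\}$, or more crudely $X_T\le |S\cap C^\dagger(\mathcal{Z})|-|T\cap C^\dagger(\mathcal Z)|$. We plan to control
\begin{equation}
    \mathbb{E}_C[X_T^2]=\sum_{x,y\in S\setminus T}\Pr_C\bigl(C(x),C(y)\in\mathcal{Z}\bigr).
\end{equation}
The diagonal terms contribute $\mathbb{E}[X_T]$, which is already of order $2^{-t}$, with an upper bound of $O(2^{-t})$ obtained from $m_x\le$ (block-dependent) and averaging. The difficulty is the off-diagonal pairs: for $\mathcal{C}_k$ the joint probability depends on the block-support pattern of $x$, $y$, and $x+y$. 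It is not a function of $\dim\langle x,y\rangle$ alone, as it is for $\mathrm{Cl}(n)$.

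Our strategy is to exchange the order of averaging. Instead of bounding this for each $S$, we will bound the average over $S\sim\mathcal{I}_{n-t}(V)$ of $F(S)\coloneq\max_T \mathbb{E}_C[X_T^2]\le \mathbb{E}_C|S\cap C^\dagger(\mathcal{Z})|^2$. The key observation is that $C^\dagger(\mathcal{Z})=L$ is a Lagrangian subspace. Then $\mathbb{E}_S\mathbb{E}_C|S\cap L|^2=\mathbb{E}_C\mathbb{E}_S|S\cap L|^2=1+3\cdot2^{-t}+O(4^{-t})$ by Corollary~\ref{cor:f_t}. However, this includes the trivial contribution $1$ from $x=y=0$ and from pairs in $T$, so it does not directly give $O(2^{-t})$.

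We will therefore handle the subtraction of $T$ more carefully. Write $\mathbb{E}_C[X_T^2]\le \mathbb{E}_C\bigl[(|S\cap L|-1)\,|S\cap L|\bigr]$, which holds because $X_T\le |S\cap L|-|T\cap L|\le|S\cap L|/2$ when nonzero. By Corollary~\ref{cor:f_t} together with Lemma~\ref{lem:iso_Lag_high_inter} for $c=1$, this averages to $O(2^{-t})$ over $S$.

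For the high-probability (not merely average) statement, we will need concentration of $F(S)$ over $S$. We plan to use higher moments ($c=O(1)$, Lemma~\ref{lem:iso_Lag_high_inter}) together with the joint-moment Lemma~\ref{lem:dim_intersection}, and Lemmas~\ref{lem:mu} and~\ref{lem:gamma}, to compute $\mathrm{Var}_S$ of $\mathbb{E}_C|S\cap L|^2$. This reduces to $\mathbb{E}_{C,C'}\mathbb{E}_S[|S\cap L|^2|S\cap L'|^2]$, which depends only on $\kappa=\dim(L\cap L')$. The distribution of $\kappa$ for independent $C,C'\sim\mathcal{C}_k$ is a product over blocks, and the block count $n/k$ supplies the exponential smallness. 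The cases are split by whether $S\cap L=\{0\}$ (Lemma~\ref{lem:mu}) or $S\cap L$ lies in $K$ (Lemma~\ref{lem:gamma}).

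The main obstacle is this concentration step. We must show that the variance over $S$ is $2^{-\Omega(n)}$ relative to $4^{-t}$, so that Chebyshev (or a higher-moment Markov bound) excludes only a $2^{-\Omega(n)}$ fraction of $S$. This in turn requires showing that $\kappa$ is typically small under $\mathcal{C}_k$. That is exactly where the block structure enters, and where the GHZ-type exceptions are confined to the discarded fraction.
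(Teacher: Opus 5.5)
Your proposal is correct and follows essentially the same route as the paper: Paley--Zygmund with $\mathbb{E}_C[X_C]\ge 2^{n-t-1}\min_P m_P=\Omega(2^{-t})$, and a $T$-independent second-moment bound (the paper uses $\mathbb{E}_C[X_C^2]\le \beta(S)-1$ with $\beta(S)=\mathbb{E}_C|C(S)\cap\mathcal{Z}|^2$, which dominates your $\mathbb{E}_C[(|S\cap L|-1)|S\cap L|]$). It then applies Chebyshev over $S$ after showing $\mathrm{Var}_S[\beta(S)]=O((2/(2^k+1))^{n/k})$ via Lemma~\ref{lem:dim_intersection}, Lemmas~\ref{lem:mu} and~\ref{lem:gamma}, and the block-product generating function for $\kappa$. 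For the step you left open, the two lemmas combine not through a case split but by conditioning on $M=S\cap A$ and factoring $|S\cap B|=|M\cap K|\,|S_M\cap B_M|$ in $M^\perp/M$: Lemma~\ref{lem:mu} bounds the quotient factor by $f_t=1+3\cdot 2^{-t}+2\cdot 4^{-t}$ and Lemma~\ref{lem:gamma} handles $M$, giving $\mathbb{E}_S[|S\cap A|^2|S\cap B|^2]\le f_t^2+O(2^{\kappa-n})+O(4^{\kappa-n})$ (Lemma~\ref{lem:main_chunk}). So what you need is not merely that $\kappa$ is typically small, but that $\mathbb{E}[2^{\kappa-n}]=(2/(2^k+1))^{n/k}$ and $\mathbb{E}[4^{\kappa-n}]=(6/((2^k+1)(2^k+2)))^{n/k}$ are exponentially small.
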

\begin{proof}
    Define $X_C=X_C(S,T):=|C(S \setminus T)\cap \mathcal{Z}|$. Then
    \begin{equation}
        \Pr_C\!\left(C(S \setminus T)\cap \mathcal{Z} \neq \emptyset\right)
        =\Pr_C(X_C>0).
    \end{equation}
    Since $X_C$ is nonnegative and finite, Paley--Zygmund's inequality gives
    \begin{equation}
        \Pr_C(X_C>0)
        \ge 
        \frac{\mathbb{E}_C[X_C]^2}{\mathbb{E}_C[X_C^2]}.
    \end{equation}
    For the numerator, Lemma~\ref{lem:E_XU} gives $\mathbb{E}_C[X_C]\ge \Omega(2^{-t})$, and hence
    \begin{equation}\label{eq:E_XU}
        \mathbb{E}_C[X_C]^2 \ge \Omega(2^{-2t}).
    \end{equation}
    For the denominator, Corollary~\ref{cor:EXU2_exp_small_failure} implies that, for all but a $2^{-\Omega(n)}$ fraction of $S\in \mathcal{I}_{n-t}(V)$,
    \begin{equation}\label{eq:E_XU^2}
        \mathbb{E}_C[X_C^2]\le O(2^{-t}).
    \end{equation}
    Combining Eq.~\eqref{eq:E_XU} and Eq.~\eqref{eq:E_XU^2} proves the claim.
\end{proof}

\begin{corollary}[Recovery of the Weyl support]\label{cor:weyl-support-recovery}
    Assume that $t=O(\log n)$ and $k=\Omega(\log n)$. Let
    $C_1,\dots,C_m$ be sampled independently from $\mathcal{C}_k$, where
    $m=O(n2^t)$. Then, with high probability over the choice of
    $C_1,\dots,C_m$, the following holds for all but a $2^{-\Omega(n)}$
    fraction of $S\in\mathcal{I}_{n-t}(V)$:
    \begin{equation}
        \sum_{i=1}^{m} \bigl(S \cap C_i^{\dagger}(\mathcal{Z})\bigr)=S.
    \end{equation}
\end{corollary}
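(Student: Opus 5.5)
The plan is to reduce the spanning statement to the per-hyperplane guarantee of Lemma~\ref{lem:pr_lower_overlap}, and then combine a union bound over the hyperplanes of $S$ with the independence of the $C_i$.

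\textbf{Step 1: reformulate spanning.} Fix $S\in\mathcal{I}_{n-t}(V)$ outside the exceptional $2^{-\Omega(n)}$ fraction of Lemma~\ref{lem:pr_lower_overlap}. Write $U_m:=\sum_{i=1}^m \bigl(S\cap C_i^\dagger(\mathcal{Z})\bigr)$. This is a subspace of $S$, and it is a proper subspace if and only if it lies in some codimension-one subspace $T\subset S$. The condition $U_m\subseteq T$ means that for every $i$ we have $S\cap C_i^\dagger(\mathcal{Z})\subseteq T$. Equivalently, $C_i(S\setminus T)\cap\mathcal{Z}=\emptyset$ for every $i$, since $C_i$ is a bijection and $C_i(S)\cap\mathcal{Z}=C_i\bigl(S\cap C_i^\dagger(\mathcal{Z})\bigr)$. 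Hence
\begin{equation}
\Pr_{C_1,\dots,C_m}\bigl(U_m\neq S\bigr)\le \sum_{T}\prod_{i=1}^m \Pr_{C_i}\bigl(C_i(S\setminus T)\cap\mathcal{Z}=\emptyset\bigr),
\end{equation}
where the sum runs over the $2^{n-t}-1$ hyperplanes $T$ of $S$, and the product uses the independence of the $C_i$.

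\textbf{Step 2: plug in the lemma.} By Lemma~\ref{lem:pr_lower_overlap}, each factor is at most $1-c\,2^{-t}$ for a constant $c>0$, and this bound is uniform over all $T$ for our fixed good $S$. The failure probability is therefore at most
\begin{equation}
2^{n-t}(1-c2^{-t})^m\le 2^{n}e^{-cm2^{-t}}.
\end{equation}
Choosing $m=C' n2^t$ with $C'$ a large enough constant makes this $2^{-\Omega(n)}$, so the recovery succeeds with high probability. The probability can be pushed to $1-\delta$ by adding $O(2^t\log(1/\delta))$ to $m$.

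\textbf{Step 3: quantifiers.} The statement should be read as follows: for each fixed $S$ in the good set, the event holds with high probability over the $C_i$. This is the natural reading, because $\mathrm{Weyl}(\rho)$ is fixed before the algorithm samples the $C_i$. If one instead wants a single draw of $C_1,\dots,C_m$ that works simultaneously for almost all $S$, Markov's inequality applied to the fraction of good $S$ that fail gives this at negligible additional cost.

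The only substantive ingredient is the uniform lower bound $\Omega(2^{-t})$ over all hyperplanes $T$, which is imported from Lemma~\ref{lem:pr_lower_overlap}. The main point to verify here is that the constant in that bound does not depend on $T$. Given that, the union bound over the $2^{n-t}$ hyperplanes is what forces the factor $n$ in $m=O(n2^t)$.
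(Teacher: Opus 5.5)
Your proof is correct, but it is organized differently from the paper's.

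The paper argues sequentially. It tracks the accumulated span $T_{i-1}=\sum_{j<i}\bigl(S\cap C_j^\dagger(\mathcal{Z})\bigr)$ and picks a hyperplane $T\supseteq T_{i-1}$ of $S$, chosen as a function of the history. Lemma~\ref{lem:pr_lower_overlap} then shows that $C_i$ enlarges the span with probability $\Omega(2^{-t})$. A Chernoff bound collects the $n-t$ required increments within $m=O(n2^t)$ rounds. In that accounting, the factor $n$ is the number of generators, which mirrors how the algorithm actually accumulates $S$.

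You instead characterize failure as the total span lying inside some fixed hyperplane. You use independence of the $C_i$ for each fixed $T$ and take a union bound over the $2^{n-t}-1$ hyperplanes, so your factor $n$ is $\log_2$ of the number of hyperplanes. The two accountings give the same order, so nothing is lost.

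Your route is slightly cleaner. It needs only independence and a union bound. The paper's ``standard Chernoff-bound argument'' implicitly requires a stochastic-domination or martingale version, because the hyperplane used in round $i$ depends on $C_1,\dots,C_{i-1}$. Your route also yields an explicit failure bound $2^{n}e^{-cm2^{-t}}$, i.e.\ $m=O\bigl(2^t(n+\log(1/\delta))\bigr)$.

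Both arguments need the lemma's bound to be uniform in $T$. It is: the exceptional set of $S$ is defined through $\beta(S)=\mathbb{E}_C\bigl[|C(S)\cap\mathcal{Z}|^2\bigr]$, and $\mathbb{E}_C[X_C^2]\le \beta(S)-1$ holds for every hyperplane $T$.

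Finally, your Markov argument in Step 3 handles the quantifier order as literally stated: one draw of $C_1,\dots,C_m$ that works for almost all $S$ simultaneously. The paper's proof only establishes the fixed-$S$ version.
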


\begin{proof}
    Fix $S\in\mathcal{I}_{n-t}(V)$ for which Lemma~\ref{lem:pr_lower_overlap}
    holds for every codimension-one subspace of $S$. For each $i\in[m]$, define
    \begin{equation}
        S_i:=S\cap C_i^{\dagger}(\mathcal{Z}),
        \qquad
        T_i:=\sum_{j=1}^{i} S_j,
    \end{equation}
    with the convention $T_0=\{0\}$.

    Suppose that $T_{i-1}\subsetneq S$. Since $T_{i-1}$ is a proper subspace of $S$,
    there exists a codimension-one subspace $T\subset S$ such that
    $T_{i-1}\subseteq T$. If
    \begin{equation}
        C_i(S\setminus T)\cap\mathcal{Z}\neq\emptyset,
    \end{equation}
    then there exists $x\in S\setminus T$ such that
    $x\in S\cap C_i^{\dagger}(\mathcal{Z})=S_i$. In particular, $x\notin T_{i-1}$, and hence
    $T_i\supsetneq T_{i-1}$. Therefore,
    \begin{align}
        \Pr\!\left(T_i\supsetneq T_{i-1}\mid T_{i-1}\subsetneq S\right)
        &\ge
        \Pr\!\left(C_i(S\setminus T)\cap\mathcal{Z}\neq\emptyset\right)\\
        &\ge \Omega(2^{-t}),
    \end{align}
    where the last line follows from Lemma~\ref{lem:pr_lower_overlap}.

    Thus, as long as $T_{i-1}\neq S$, the dimension increases by at least one with
    probability at least $\Omega(2^{-t})$. Since $\dim(S)=n-t$, it follows by a
    standard Chernoff-bound argument that after $m=O(n2^t)$ independent samples,
    we have $\dim(T_m)=n-t$ with high probability. Because $T_m\subseteq S$, this implies
    $T_m=S$, i.e.,
    \begin{equation}
        \sum_{i=1}^{m} \bigl(S \cap C_i^{\dagger}(\mathcal{Z})\bigr)=S.
    \end{equation}
    This proves the claim.
\end{proof}

In the proof of Lemma~\ref{lem:pr_lower_overlap}, we use a lower bound on the first moment of $X_C$ and an upper bound on its second moment. We state these bounds in the following lemmas.
\begin{lemma}\label{lem:E_XU}
    Let $S\in \mathcal{I}_{n-t}(V)$ and let $T\subset S$ be a codimension-one subspace. Define
    $X_C:=|C(S \setminus T) \cap \mathcal{Z}|$, where $C$ is sampled from
    $\mathcal{C}_k$ with $k=\Omega(\log n)$. Then
    \begin{equation}
        \mathbb{E}_C[X_C]\ge \Omega(2^{-t}).
    \end{equation}
\end{lemma}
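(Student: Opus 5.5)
By linearity of expectation, I will write $X_C$ as a sum of indicator variables and then apply the uniform lower bound on the Pauli collision coefficient from Eq.~\eqref{eq:mp_lower}. Concretely,
\begin{equation}
    X_C=\sum_{x\in S\setminus T}\mathbf{1}\{C(x)\in\mathcal{Z}\},
\end{equation}
so that
\begin{equation}
    \mathbb{E}_C[X_C]=\sum_{x\in S\setminus T} m_{W_x}.
\end{equation}
Here $m_{W_x}$ is the Pauli collision coefficient of the (unsigned) Pauli operator $W_x$ under $\mathcal{C}_k$. Unsigned Paulis are identified with vectors and signs are irrelevant to membership in $\mathcal{Z}$, so $\mathbf{1}\{C(x)\in\mathcal{Z}\}$ is exactly the event appearing in the definition of $m_P$.

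Next, I will count the terms. Since $T\subset S$ has codimension one and $\dim S=n-t$, we have
\begin{equation}
    |S\setminus T|=|S|-|T|=2^{n-t}-2^{n-t-1}=2^{n-t-1}.
\end{equation}
Every $x\in S\setminus T$ is nonzero, so each term is governed by the computation in Appendix~\ref{appx:clifford ensemble}: $m_{W_x}=(2^k+1)^{-w_k(W_x)}\ge (2^k+1)^{-n/k}=\min_P m_P$. By Eq.~\eqref{eq:mp_lower}, this minimum is $\Theta(2^{-n})$ whenever $k=\Omega(\log n)$. More precisely,
\begin{equation}
    \min_P m_P> e^{-n/(2^k k)}\,2^{-n},
\end{equation}
and $e^{-n/(2^kk)}=\Omega(1)$ once $2^k k=\Omega(n)$.

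Combining the two estimates gives
\begin{equation}
    \mathbb{E}_C[X_C]\ge 2^{n-t-1}\cdot e^{-n/(2^kk)}\,2^{-n}=\tfrac12 e^{-n/(2^kk)}\,2^{-t}=\Omega(2^{-t}).
\end{equation}
This bound is uniform in $S$ and $T$, so no averaging over $S$ is needed for the first moment.

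The only point needing care is the constant hidden in $k=\Omega(\log n)$. I will make sure the implied constant guarantees $2^k k\ge c\,n$ for some fixed $c>0$; for instance, $k\ge\log_2 n$ already gives $n/(2^kk)\le 1/k\le 1$. With that choice the prefactor $e^{-n/(2^kk)}$ is an absolute constant, and the lemma follows. The genuinely hard part of Lemma~\ref{lem:pr_lower_overlap} is instead the second moment, which must use the randomness of $S$.
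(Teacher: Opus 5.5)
Your proof is correct and is the paper's argument: linearity of expectation gives $\mathbb{E}_C[X_C]=\sum_{P\in S\setminus T} m_P \ge 2^{n-t-1}\min_P m_P$, and Eq.~\eqref{eq:mp_lower} finishes it. Your remark that the constant hidden in $k=\Omega(\log n)$ must actually guarantee $2^k k=\Omega(n)$ (e.g.\ $k\ge\log_2 n$) is a useful precision. The paper states this condition in its Clifford-ensemble appendix but then abbreviates it as $k=\Omega(\log n)$.
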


\begin{proof}
    Using the Pauli collision probability $m_P$ and Eq.~\eqref{eq:mp_lower}, we obtain
    \begin{align}
        \mathbb{E}_C[X_C]
        &= \mathbb{E}_C\bigl[|C(S \setminus T)\cap \mathcal{Z}|\bigr] \\
        &= \mathbb{E}_C\left[\sum_{P \in S\setminus T} \mathbf{1}\{C(P)\in \mathcal{Z}\}\right] \\
        &= \sum_{P \in S\setminus T} \mathbb{E}_C \bigl[\mathbf{1}\{C(P)\in \mathcal{Z}\}\bigr] \\
        &= \sum_{P\in S\setminus T} m_P \\
        &\ge 2^{n-t-1}\min_{P} m_P \\
        &= \Omega(2^{-t}).
    \end{align}
    The final bound follows from Eq.~\eqref{eq:mp_lower}.
\end{proof}

\begin{fact}[Chebyshev's inequality and a simple consequence]\label{fact:chebyshev}
Let $X$ be a real-valued random variable with finite variance and
$\mathbb{E}[X]>0$. Then, for any $a>0$,
\[
    \Pr\!\left(|X-\mathbb{E}[X]|\ge a\right)
    \le
    \frac{\mathrm{Var}(X)}{a^2}.
\]

In particular, for any constant $c>1$,
\[
    \Pr\!\left(X \ge c\,\mathbb{E}[X]\right)
    \le
    \Pr\!\left(|X-\mathbb{E}[X]|\ge (c-1)\mathbb{E}[X]\right)
    \le
    \frac{\mathrm{Var}(X)}{(c-1)^2\mathbb{E}[X]^2}.
\]
Therefore,
\[
    \Pr\!\left(X < c\,\mathbb{E}[X]\right)
    \ge
    1-\frac{\mathrm{Var}(X)}{(c-1)^2\mathbb{E}[X]^2}.
\]
Equivalently, for any fixed constant $c>1$,
\[
    \Pr\!\left(X < O(\mathbb{E}[X])\right)
    \ge
    1-O\!\left(\frac{\mathrm{Var}(X)}{\mathbb{E}[X]^2}\right).
\]
\end{fact}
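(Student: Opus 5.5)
The plan is to derive everything from Markov's inequality applied to a nonnegative auxiliary random variable, and then obtain the one-sided consequences by an inclusion of events and complementation.

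For the first inequality, I would set $Y := (X-\mathbb{E}[X])^2$. It is nonnegative and, since $X$ has finite variance, $\mathbb{E}[Y]=\mathrm{Var}(X)<\infty$. For $a>0$ the events $\{|X-\mathbb{E}[X]|\ge a\}$ and $\{Y\ge a^2\}$ coincide. Markov's inequality, $\Pr(Y\ge s)\le \mathbb{E}[Y]/s$ for $s>0$, then gives
\[
    \Pr\!\left(|X-\mathbb{E}[X]|\ge a\right)=\Pr\!\left(Y\ge a^2\right)\le \frac{\mathrm{Var}(X)}{a^2}.
\]
Markov's inequality itself follows from the pointwise bound $s\,\mathbf{1}\{Y\ge s\}\le Y$ by taking expectations.

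For the second display, I fix a constant $c>1$ and use $\mathbb{E}[X]>0$, so that $a:=(c-1)\mathbb{E}[X]>0$ is admissible in the first step. If $X\ge c\,\mathbb{E}[X]$, then $X-\mathbb{E}[X]\ge (c-1)\mathbb{E}[X]=a>0$, and hence $|X-\mathbb{E}[X]|\ge a$. This event inclusion gives the first inequality of the display, and the first part with this $a$ gives the bound $\mathrm{Var}(X)/((c-1)^2\mathbb{E}[X]^2)$. Passing to the complementary event $\{X<c\,\mathbb{E}[X]\}$ yields the third display.

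The final asymptotic form is a restatement of the third display. For any fixed constant $c>1$, the factor $(c-1)^{-2}$ is an absolute constant that can be absorbed into $O(\cdot)$, and $c\,\mathbb{E}[X]$ is $O(\mathbb{E}[X])$.

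No step is a genuine obstacle. The only points needing care are that positivity of $\mathbb{E}[X]$ and $c>1$ are what make $a$ strictly positive, which is required for the division in Chebyshev, and that the inclusion uses the one-sided event rather than the two-sided one.
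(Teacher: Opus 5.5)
Your proposal is correct and takes essentially the same route as the paper. The paper takes Chebyshev's inequality as standard and gets the one-sided bound from the event inclusion $\{X\ge c\,\mathbb{E}[X]\}\subseteq\{|X-\mathbb{E}[X]|\ge (c-1)\mathbb{E}[X]\}$ followed by complementation, exactly as you do; your derivation of Chebyshev via Markov on $(X-\mathbb{E}[X])^2$ just supplies the step the paper leaves implicit.
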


\begin{lemma}\label{lem:E_XU^2}
    Let $S\in \mathcal{I}_{n-t}(V)$, and let $T\subset S$ be a codimension-one subspace. Define $X_C = \left|C(S\setminus T)\cap \mathcal{Z}\right|$, where $C$ is sampled from $\mathcal{C}_k$. Then
    \begin{equation}
        \Pr_S\!\left(\mathbb{E}_C[X_C^2] \le O(2^{-t})\right)
        \ge 1-O\left(2^{2t}\left(\frac{2}{2^k+1}\right)^{n/k}\right).
    \end{equation}
\end{lemma}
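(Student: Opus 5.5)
The plan is to remove the dependence on $T$ first, and then control the remaining quantity over $S$ with Chebyshev's inequality (Fact~\ref{fact:chebyshev}).

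\textbf{Reduction.} Write $L_C\coloneq C^\dagger(\mathcal{Z})$, which is a Lagrangian subspace. Then $X_C=|(S\setminus T)\cap L_C|$. Since $0\in T$, we have $X_C\le |S\cap L_C|-1$. Because $(a-1)^2\le a^2-1$ for every $a\ge 1$, this gives
\begin{equation*}
    \mathbb{E}_C[X_C^2]\le Y(S)\coloneq \mathbb{E}_C\bigl[|S\cap L_C|^2\bigr]-1 .
\end{equation*}
The right-hand side does not depend on $T$, so the bound holds uniformly over all codimension-one $T\subset S$. It therefore suffices to show $\Pr_S\bigl(Y(S)\le O(2^{-t})\bigr)\ge 1-O\bigl(2^{2t}(2/(2^k+1))^{n/k}\bigr)$.

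\textbf{First moment.} Exchanging the two expectations and applying Corollary~\ref{cor:f_t} for each fixed $C$ gives
\begin{equation*}
    \mathbb{E}_S[Y]=3\cdot 2^{-t}+2\cdot 2^{-2t}-O(2^{-n})=\Theta(2^{-t}).
\end{equation*}

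\textbf{Second moment.} Take $C,C'$ i.i.d. from $\mathcal{C}_k$. Then
\begin{equation*}
    \mathbb{E}_S[Y^2]=\mathbb{E}_{C,C'}\,\mathbb{E}_S\bigl[(|S\cap L_C|^2-1)(|S\cap L_{C'}|^2-1)\bigr].
\end{equation*}
By Lemma~\ref{lem:dim_intersection}, the inner expectation depends only on $\kappa\coloneq\dim(L_C\cap L_{C'})$; call it $F(\kappa)$. To compute $F(\kappa)$, condition on the subspace $S\cap L_1$, with $L_1=L_C$ and $L_2=L_{C'}$. Given $S\cap L_1=U$ with $\dim U=l$, pass to the quotient $U^\perp/U$. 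There, $S/U$ is a uniform isotropic subspace with trivial intersection with the image of $L_1$, and its intersection with $L_2$ is governed by Lemma~\ref{lem:mu}. The part of $S\cap L_2$ lying inside $L_1$, namely $S\cap K$ with $K=L_1\cap L_2$, is controlled by Lemma~\ref{lem:gamma} applied to a uniform subspace of $L_1$. The distribution of $l$ is given by Lemma~\ref{lem:iso_lag_intersection}. Combining these, I expect a bound of the form
\begin{equation*}
    F(\kappa)\le \bigl(\mathbb{E}_S[|S\cap L|^2]-1\bigr)^2+O\bigl(2^{\kappa-n}\bigr),
\end{equation*}
which reflects that the only source of correlation is the shared subspace $K$. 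This gives
\begin{equation*}
    \mathrm{Var}_S(Y)\le O\bigl(2^{-n}\,\mathbb{E}_{C,C'}[2^{\kappa}]\bigr).
\end{equation*}

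\textbf{Averaging over the block ensemble.} The ensemble has product structure: $L_C=\bigoplus_{i}L^{(i)}$ with independent uniform Lagrangians $L^{(i)}\subset\mathbb{F}_2^{2k}$. Hence $\kappa=\sum_i\kappa_i$, and the blocks are independent. By the special value $G_k(2)=2^{k+1}/(2^k+1)$ from Definition~\ref{def:generating_func},
\begin{equation*}
    \mathbb{E}[2^\kappa]=\Bigl(\frac{2^{k+1}}{2^k+1}\Bigr)^{n/k}=2^n\Bigl(\frac{2}{2^k+1}\Bigr)^{n/k}.
\end{equation*}
So $\mathrm{Var}_S(Y)/\mathbb{E}_S[Y]^2=O\bigl(2^{2t}(2/(2^k+1))^{n/k}\bigr)$, and Fact~\ref{fact:chebyshev} with a constant $c>1$ yields the claimed probability bound.

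\textbf{Main obstacle.} The hard step is the second-moment computation: showing that $F(\kappa)$ exceeds the square of the mean by at most $O(2^{\kappa-n})$ uniformly in $\kappa$. The difficulty is that the cross terms mixing $S\cap L_1$, $S\cap L_2$ and $K$ must be organized through the quotient by $S\cap L_1$, and the error terms $2^{\kappa-n}$ appearing in Lemmas~\ref{lem:mu} and~\ref{lem:gamma} must carry a constant that does not grow with $2^t$. If this direct bound turns out to be too lossy, I would fall back on splitting the expectation into the typical event $\kappa\le$ a constant, where $F$ is close to its independent value, and the rare event of large $\kappa$, where the crude bound $F(\kappa)\le O(2^{\kappa-n})\cdot\mathrm{poly}(2^{-t})$ would then be paired with the generating-function estimate above.
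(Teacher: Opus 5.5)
Your proposal is correct and follows the paper's proof essentially step for step. It uses the same reduction $\mathbb{E}_C[X_C^2]\le\beta(S)-1$ and the first moment from Corollary~\ref{cor:f_t}. Your second moment, obtained by conditioning on $M=S\cap A$, passing to $M^\perp/M$ and combining Lemmas~\ref{lem:mu} and~\ref{lem:gamma}, is exactly the paper's Lemma~\ref{lem:main_chunk}, followed by averaging over the blocks with $G_k(2)$ and Chebyshev. The bound on $F(\kappa)$ that you only ``expect'' is what that computation delivers, with absolute constants because $\mathbb{E}_S|S\cap A|^3$ and $\mathbb{E}_S|S\cap A|^4$ are $O(1)$ by Lemma~\ref{lem:iso_Lag_high_inter}, so your fallback is unnecessary.
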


\begin{proof}
    We first bound $\mathbb{E}_C[X_C^2]$ as
    \begin{align}
        \mathbb{E}_C[X_C^2]
        &=
        \mathbb{E}_C \bigl[|C(S\setminus T) \cap \mathcal{Z}|^2\bigr] \\
        &\le
        \mathbb{E}_C \bigl[|C(S) \cap \mathcal{Z}|^2\bigr]-1.
    \end{align}
    Let $\beta(S)\coloneq \mathbb{E}_C [|C(S)\cap \mathcal{Z}|^2]$. Then the following hold:
    \begin{enumerate}
        \item $\mathbb{E}_S[\beta(S)]
        =
        f_t-O(2^{-n})$ by Lemma~\ref{lem:beta},
        \item $\mathbb{E}_S[\beta(S)^2]
        \le
        f_t^2+O\left(\left(\frac{2}{2^k+1}\right)^{n/k}\right)$ by Lemma~\ref{lem:beta^2},
    \end{enumerate}
    where $f_t=1+3/2^t+2/2^{2t}$. It follows that
    \begin{equation}
        \mathrm{Var}[\beta(S)]
        \le
        f_t^2+O\left(\left(\frac{2}{2^k+1}\right)^{n/k}\right)-\left(f_t-O(2^{-n})\right)^2
        =
        O\left(\left(\frac{2}{2^k+1}\right)^{n/k}\right).
    \end{equation}
    Since $f_t-1=\Theta(2^{-t})$, Fact~\ref{fact:chebyshev} gives
    \begin{align}
        \Pr_S\!\left(\mathbb{E}_C[X_C^2]\le O(2^{-t})\right)
        &\ge \Pr_S\!\left(\beta(S)-1<O(2^{-t})\right)\\
        &\ge 1-O\bigl(2^{2t}\mathrm{Var}[\beta(S)]\bigr)\\
        &\ge 1-O\left(2^{2t}\left(\frac{2}{2^k+1}\right)^{n/k}\right).
    \end{align}
    This proves the claim.
\end{proof}

\begin{corollary}\label{cor:EXU2_exp_small_failure}
    Assume that $t=O(\log n)$.
    Let $S\in \mathcal{I}_{n-t}(V)$, and let $T\subset S$ be a codimension-one subspace.
    Define $X_C = \left|C(S\setminus T) \cap \mathcal{Z}\right|$, where $C$ is sampled from $\mathcal{C}_k$. Then
    \begin{equation}\label{eq:main_inequal}
        \Pr_S\!\left(\mathbb{E}_C[X_C^2] \le O(2^{-t})\right)
        >
        1-2^{-\Omega(n)}.
    \end{equation}
    Note that Eq.~\eqref{eq:main_inequal} holds for all $k\in\mathbb[n]$.
\end{corollary}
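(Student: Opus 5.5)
This is a direct specialization of Lemma~\ref{lem:E_XU^2}. That lemma already gives
\[
\Pr_S\!\left(\mathbb{E}_C[X_C^2]\le O(2^{-t})\right)\ge 1-O\!\left(2^{2t}\left(\tfrac{2}{2^k+1}\right)^{n/k}\right),
\]
so the only task is to show that the error term is $2^{-\Omega(n)}$ for every $k\in[n]$ when $t=O(\log n)$. I would first bound the per-block factor uniformly in $k$. For every integer $k\ge 1$ we have $2^k+1\ge \tfrac{3}{2}\cdot 2^k$, since at $k=1$ this reads $3\ge 3$ and the ratio only improves as $k$ grows. Hence
\[
\left(\tfrac{2}{2^k+1}\right)^{n/k}\le \left(\tfrac{4}{3}\,2^{-k}\right)^{n/k}=2^{-n}\left(\tfrac43\right)^{n/k}.
\]

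For $k\ge 2$ this gives at most $2^{-n}(4/3)^{n/2}=2^{-n(1-\frac12\log_2(4/3))}=2^{-\Omega(n)}$. For $k=1$ I would compute directly: $(2/3)^n=2^{-n\log_2(3/2)}=2^{-\Omega(n)}$. Taking the minimum exponent over these two cases yields a uniform bound $\left(\tfrac{2}{2^k+1}\right)^{n/k}\le 2^{-c n}$ with $c=\log_2(3/2)>0$. The case $k\ge2$ indeed gives the better exponent $1-\tfrac12\log_2(4/3)\approx 0.79$, compared with $\approx 0.58$ for $k=1$.

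Finally, I would absorb the prefactor. Since $t\le a\log n$ for some constant $a$, we have $2^{2t}\le n^{2a}=2^{2a\log n}$. The failure probability is therefore $O(2^{2a\log n-cn})=2^{-\Omega(n)}$ for $n$ large, and for small $n$ the bound holds trivially by adjusting the hidden constant. Note that $k$ enters only through the per-block factor, which we bounded uniformly, so the statement holds for all $k\in[n]$.

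The only step needing care is the uniformity in $k$, since the exponent $n/k$ shrinks as $k$ grows. The point is that the base decreases like $2^{-k}$, so the product $\left(\tfrac{2}{2^k+1}\right)^{n/k}$ behaves like $2^{-n}$ times a factor of at most $(4/3)^{n/k}$, and the two-case split above handles this cleanly. If the $O(2^{-t})$ constant in Lemma~\ref{lem:E_XU^2} were to depend on $k$, I would also note that it comes from the Chebyshev step with a fixed $c>1$, so it is independent of $k$.
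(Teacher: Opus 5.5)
Your route is the one the paper intends: the corollary is obtained by inserting $t=O(\log n)$ into the failure probability $O\bigl(2^{2t}(\tfrac{2}{2^k+1})^{n/k}\bigr)$ of Lemma~\ref{lem:E_XU^2}, bounding the $k$-dependent factor by $2^{-\Omega(n)}$ uniformly in $k$, and absorbing $2^{2t}=n^{O(1)}$. However, \emph{the step you use for uniformity in $k$ is false}.

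The inequality $2^k+1\ge\frac32\cdot 2^k$ is equivalent to $2^k\le 2$, so it holds only at $k=1$. The ratio $(2^k+1)/2^k=1+2^{-k}$ decreases with $k$; it does not improve. Consequently your bound $\bigl(\frac{2}{2^k+1}\bigr)^{n/k}\le 2^{-n}\bigl(\frac43\bigr)^{n/k}$ fails for every $k\ge 2$. At $k=2$ it asserts $(2/5)^{n/2}\le(1/3)^{n/2}$, which is false. Your claimed exponent $\approx 0.79$ for $k\ge2$ is also wrong: at $k=2$ the true exponent is $\frac12\log_2\frac52\approx 0.66$. Writing $\bigl(\frac{2}{2^k+1}\bigr)^{n/k}=2^{-n}\bigl(\frac{2}{1+2^{-k}}\bigr)^{n/k}$ shows that the base of the correction factor \emph{increases} from $\frac43$ toward $2$ as $k$ grows. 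That is the opposite of the heuristic in your last paragraph.

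The conclusion you state is nevertheless true, and the repair is short. One option is to use $2^k+1>2^k$. For $k\ge 2$ this gives $\bigl(\frac{2}{2^k+1}\bigr)^{n/k}<2^{-n+n/k}\le 2^{-n/2}$. Together with $(2/3)^n$ at $k=1$, this yields a uniform $2^{-n/2}$ bound.

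If you want your constant $c=\log_2\frac32$, note that $\bigl(\frac{2}{2^k+1}\bigr)^{1/k}\le\frac23$ is equivalent to $(4/3)^k+(2/3)^k\ge 2$. The left side equals $2$ at $k=1$, and passing from $k$ to $k+1$ increases it by $\frac13\bigl[(4/3)^k-(2/3)^k\bigr]>0$. So $k=1$ is genuinely the worst case. With either fix, the rest of your argument goes through as written: absorbing $2^{2t}\le n^{2a}$ into $2^{-\Omega(n)}$, and observing that the hidden constants do not depend on $k$.
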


\begin{lemma}\label{lem:beta}
    Let $S \in \mathcal{I}_{n-t}(V)$ and let $\beta(S)=\mathbb{E}_C[|C(S)\cap \mathcal{Z}|^2]$, where $C$ is sampled from $\mathcal{C}_k$. Then
    \begin{equation}
        \mathbb{E}_S[\beta(S)]=f_t-O(2^{-n}),
    \end{equation}
    where $f_t=1+3/2^{t}+2/2^{2t}$.
\end{lemma}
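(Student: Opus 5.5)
The plan is to swap the two expectations with Fubini and then use unitary invariance of the uniform measure on $\mathcal{I}_{n-t}(V)$. Write
\begin{equation}
    \mathbb{E}_S[\beta(S)]=\mathbb{E}_C\,\mathbb{E}_S\bigl[|C(S)\cap\mathcal{Z}|^2\bigr].
\end{equation}
Each fixed Clifford $C$ acts on $V$ through a symplectic map $\phi_C\in\mathrm{Sp}(V)$. Such a map preserves the symplectic form, so it sends isotropic subspaces of dimension $n-t$ bijectively to isotropic subspaces of dimension $n-t$. This is the same argument used in the proof of Lemma~\ref{lem:dim_intersection}. Consequently, if $S$ is uniform on $\mathcal{I}_{n-t}(V)$, then $C(S)$ is also uniform on $\mathcal{I}_{n-t}(V)$, for every fixed $C$.

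With that invariance, the inner expectation no longer depends on $C$:
\begin{equation}
    \mathbb{E}_S\bigl[|C(S)\cap\mathcal{Z}|^2\bigr]=\mathbb{E}_{S}\bigl[|S\cap\mathcal{Z}|^2\bigr].
\end{equation}
The average over $C\sim\mathcal{C}_k$ is therefore trivial. This also shows that the block size $k$ plays no role in this first-moment statement.

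The remaining quantity is the second moment of $|S\cap L|$ for the fixed Lagrangian $L=\mathcal{Z}$. Corollary~\ref{cor:f_t}, which is Lemma~\ref{lem:iso_Lag_high_inter} with $c=2$, gives it as
\begin{equation}
    1+3/2^t+2/2^{2t}-O(2^{-n})=f_t-O(2^{-n}).
\end{equation}

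The only point needing care is the identification of Clifford conjugation on unsigned Paulis with a symplectic linear map on $V$. This is standard, because phases are ignored and $W_a\mapsto CW_aC^\dagger$ induces a linear map preserving commutation. A related check is that $C(S)$ denotes the image subspace $\phi_C(S)$, so $|C(S)\cap\mathcal{Z}|=|S\cap\phi_C^{-1}(\mathcal{Z})|$, and $\phi_C^{-1}(\mathcal{Z})$ is again Lagrangian. Either description gives the same expectation.
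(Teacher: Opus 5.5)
Your proposal is correct and follows essentially the same route as the paper's proof. You swap the expectations, use that conjugation by a fixed Clifford permutes $\mathcal{I}_{n-t}(V)$, so $C(S)$ is again uniform, reduce to $\mathbb{E}_S[|S\cap\mathcal{Z}|^2]$, and conclude with Corollary~\ref{cor:f_t} because $\mathcal{Z}$ is Lagrangian.
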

\begin{proof}
    For any fixed Clifford circuit $C$, the map $S \mapsto CSC^{\dagger}$ is a bijection on $\mathcal{I}_{n-t}(V)$. Hence, if $S$ is sampled uniformly from $\mathcal{I}_{n-t}(V)$, then $CSC^{\dagger}$ has the same distribution as $S$. Since every $C \in \mathcal{C}_k$ is Clifford, we obtain
    \begin{align}
        \mathbb{E}_S[\beta(S)]
        &= \mathbb{E}_S \mathbb{E}_C[|C(S)\cap \mathcal{Z}|^2] \\
        &= \mathbb{E}_C \mathbb{E}_S[|C(S)\cap \mathcal{Z}|^2] \\
        &= \mathbb{E}_C \mathbb{E}_S[|S\cap \mathcal{Z}|^2] \\
        &= \mathbb{E}_S[|S\cap \mathcal{Z}|^2]\\
        &=f_t-O(2^{-n}).
    \end{align}
    In the last line, since $\mathcal{Z}$ is Lagrangian, the claim follows from Corollary~\ref{cor:f_t}.
\end{proof}

\begin{lemma}\label{lem:main_chunk}
    Let $A,B \subset V\coloneq\mathbb{F}_2^{2n}$ be Lagrangian subspaces, and let
    $K\coloneq A\cap B$ with $\dim(K)=\kappa$.
    If $S$ is sampled uniformly from $\mathcal{I}_{n-t}(V)$, then
    \begin{equation}
        \mathbb{E}_S\bigl[|S\cap A|^2 |S\cap B|^2\bigr]
        \le
        f_t^2 + O(2^{\kappa-n}) + O(4^{\kappa-n}).
    \end{equation}
\end{lemma}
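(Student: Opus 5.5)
The plan is to condition on the intersection $U\coloneq S\cap A$, factor $|S\cap B|$ through the quotient by $U$, and then apply Lemma~\ref{lem:mu} and Lemma~\ref{lem:gamma} in turn. The $\kappa$-dependence will enter only through the part of $S\cap B$ that lies inside $U\cap K$.

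\emph{Step 1 (conditioning on $U$).} Fix an isotropic $U\subseteq A$ with $\dim U=l$ and condition on the event $S\cap A=U$. On this event $|S\cap A|^2=4^l$. Let $W\coloneq U^\perp/U$, of dimension $2(n-l)$, and let $\pi:U^\perp\to W$ be the quotient map. Put $\overline S\coloneq S/U$, and let $\overline A,\overline B$ be the images of $A\cap U^\perp$ and $B\cap U^\perp$. By the quotient fact in the preliminaries, both $\overline A$ and $\overline B$ are Lagrangian in $W$.

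As in the proof of Lemma~\ref{lem:mu}, conditioning on $S\cap A=U$ is equivalent to requiring $\overline S\cap\overline A=\{0\}$. Moreover, $\overline S$ is then uniform among $(n-t-l)$-dimensional isotropic subspaces of $W$ satisfying this condition, because $S\mapsto S/U$ is a bijection.

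\emph{Step 2 (factoring $|S\cap B|$).} The map $\pi$ restricted to $S\cap B$ has kernel $S\cap B\cap U=U\cap K$, since $U\subseteq A$. Its image lies in $\overline S\cap\overline B$. Hence
\[
|S\cap B|\le |U\cap K|\,\cdot\,|\overline S\cap\overline B|.
\]
Applying Lemma~\ref{lem:mu} in $W$, with $n\leftarrow n-l$ and $m\leftarrow n-t-l$, gives
\[
\mathbb{E}\bigl[|\overline S\cap\overline B|^2\bigm| S\cap A=U\bigr]\le 1+3\cdot 2^{-t}+2\cdot 4^{-t}=f_t .
\]
This bound holds uniformly in $\dim(\overline A\cap\overline B)$, which is why that quantity never needs to be controlled. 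Therefore
\[
\mathbb{E}_S\bigl[|S\cap A|^2|S\cap B|^2\bigr]\le f_t\,\mathbb{E}_S\bigl[|S\cap A|^2\,|S\cap A\cap K|^2\bigr].
\]

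\emph{Step 3 (averaging over $U$).} We need the distribution of $S\cap A$ given $\dim(S\cap A)=l$. The stabilizer of $A$ in $\mathrm{Sp}(V)$ contains the maps $g\oplus (g^{-1})^{T}$ for $g\in\mathrm{GL}(A)$, written with respect to a Lagrangian complement of $A$. These act transitively on the $l$-dimensional subspaces of $A$ and preserve the uniform measure on $\mathcal{I}_{n-t}(V)$. Hence, conditionally on its dimension being $l$, $S\cap A$ is uniform in $\mathcal{I}_l(A)$ (every subspace of the Lagrangian $A$ is isotropic).

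Lemma~\ref{lem:gamma} then gives
\[
\mathbb{E}\bigl[|U\cap K|^2\bigm|\dim U=l\bigr]\le 1+3\cdot2^{\kappa+l-n}+4^{\kappa+l-n}.
\]
Multiplying by $4^l$ and averaging over $l$, the right-hand side of Step 2 is at most
\[
f_t\Bigl(\mathbb{E}|S\cap A|^2+3\cdot 2^{\kappa-n}\,\mathbb{E}|S\cap A|^3+4^{\kappa-n}\,\mathbb{E}|S\cap A|^4\Bigr).
\]
By Corollary~\ref{cor:f_t}, $\mathbb{E}|S\cap A|^2\le f_t$. By Lemma~\ref{lem:iso_Lag_high_inter} with $c=3,4$, the third and fourth moments are $O(1)$, and $f_t=O(1)$. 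This yields $f_t^2+O(2^{\kappa-n})+O(4^{\kappa-n})$, as claimed.

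\emph{Main obstacle.} The delicate part is Step 2: checking that the conditional law of $\overline S$ is exactly the one treated in Lemma~\ref{lem:mu}, and that the kernel of $\pi$ on $S\cap B$ is precisely $U\cap K$. If the transitivity argument in Step 3 turns out to be awkward, the fallback is to use Lemma~\ref{lem:dim_intersection}-style canonical symplectic bases together with the counting of Lemma~\ref{lem:iso_lag_intersection}.
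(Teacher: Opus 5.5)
Your proposal is correct and matches the paper's proof step for step. You condition on $M=S\cap A$ and factor $|S\cap B|$ through $M^\perp/M$ with kernel $M\cap K$; the paper obtains equality there, but your inequality suffices. You then bound the quotient factor by $f_t$ via Lemma~\ref{lem:mu}, average $|M\cap K|^2$ with Lemma~\ref{lem:gamma}, and control the resulting third and fourth moments with Lemma~\ref{lem:iso_Lag_high_inter}. Your transitivity argument via the embedding $\mathrm{GL}(A)\subset\mathrm{Stab}_{\mathrm{Sp}(V)}(A)$ usefully justifies why $S\cap A$ is uniform given its dimension, a fact the paper only asserts.
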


\begin{proof}
    We first condition on the dimension of $S\cap A$. Writing
    \begin{equation}
        a:=\dim(S\cap A),
    \end{equation}
    we obtain
    \begin{align}
        \mathbb{E}_S\bigl[|S\cap A|^2 |S\cap B|^2\bigr]
        &=
        \sum_{a=0}^{n-t}
        \Pr_S\!\bigl(\dim(S\cap A)=a\bigr)\,
        \mathbb{E}_S\!\bigl[
            |S\cap A|^2 |S\cap B|^2
            \,\bigm|\,
            \dim(S\cap A)=a
        \bigr] \\
        &=
        \sum_{a=0}^{n-t}
        \Pr_S\!\bigl(\dim(S\cap A)=a\bigr)\,
        2^{2a}\,
        \mathbb{E}_S\!\bigl[
            |S\cap B|^2
            \,\bigm|\,
            \dim(S\cap A)=a
        \bigr].
        \label{eq:E_a_refined}
    \end{align}

    We next analyze the conditional expectation inside the sum.
    Conditioning further on the actual subspace $M=S\cap A$, we have
    \begin{equation}
        \mathbb{E}_S\!\bigl[
            |S\cap B|^2
            \,\bigm|\,
            \dim(S\cap A)=a
        \bigr]
        =
        \mathbb{E}_{M\sim \mathcal{I}_a(A)}
        \left[
            \mathbb{E}_S\!\bigl[
                |S\cap B|^2
                \,\bigm|\,
                S\cap A=M
            \bigr]
        \right].
        \label{eq:main_chunk_condition_on_M}
    \end{equation}
    Here we used that, conditional on $\dim(S\cap A)=a$, the intersection
    $M=S\cap A$ is uniformly distributed over the $a$-dimensional subspaces of $A$.

    Fix such a subspace $M\subset A$ with $\dim(M)=a$, and define
    \begin{equation}
        W\coloneq M^\perp/M,
        \qquad
        S_M\coloneq S/M,
        \qquad
        A_M\coloneq A/M,
        \qquad
        B_M\coloneq (B\cap M^\perp + M)/M.
    \end{equation}
    \begin{figure}[t]
        \centering
        \includegraphics[width=0.6\linewidth, trim=1cm 9cm 3cm 0cm, clip]{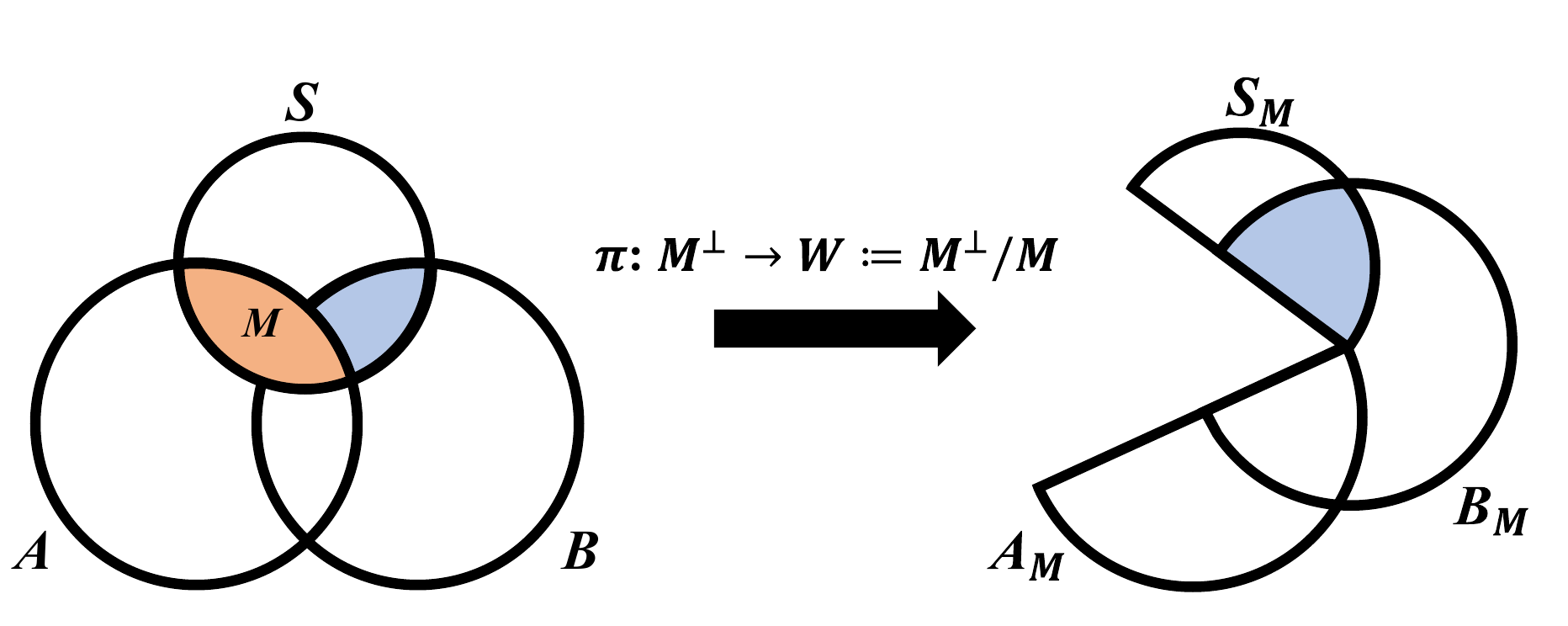}
        \caption{\label{fig:3}
        Quotient reduction.
        Left: for a fixed isotropic subspace $S$, let $M=S\cap A$.
        Right: after passing to the quotient space $M^\perp/M$, the subspaces $S_M=S/M$, $A_M=A/M$, and $B_M=(B\cap M^\perp+M)/M$ encode the remaining intersection structure. In particular, conditioning on $S\cap A=M$ reduces the analysis of $S\cap B$ to that of $S_M\cap B_M$ in the quotient symplectic space.}
    \end{figure}
    Since $A$ and $B$ are Lagrangian subspaces of $V$, both $A_M$ and $B_M$ are Lagrangian subspaces of the symplectic space $W$~[Fig.~\ref{fig:3}]. Applying the first isomorphism theorem to the restriction of the quotient map $\pi:M^\perp\to W$ to $S\cap B$, we obtain
    \begin{equation}
        (S\cap B)/(M \cap K)
        \cong
        (S\cap B+M)/M=S_M \cap B_M.
    \end{equation}
    Consequently,
    \begin{equation}
        |S\cap B|
        =
        |M\cap K|\,|S_M\cap B_M|.
    \end{equation}
    Therefore,
    \begin{align}
        \mathbb{E}_S\!\bigl[
            |S\cap B|^2
            \,\bigm|\,
            S\cap A=M
        \bigr]
        &=
        |M\cap K|^2\,
        \mathbb{E}_S\!\bigl[
            |S_M\cap B_M|^2
            \,\bigm|\,
            S\cap A=M
        \bigr].
    \end{align}
    Indeed, the map $S\mapsto S/M$ gives a bijection between $\{S\in\mathcal{I}_{n-t}(V):S \cap A =M\}$ and $\{S_M\in\mathcal{I}_{n-t-a}(W):S_M \cap A_M =\{0\}\}$. Since the original distribution is uniform, the induced distribution on the quotient is also uniform.
    Hence
    \begin{equation}
        \mathbb{E}_S\!\bigl[
            |S_M\cap B_M|^2
            \,\bigm|\,
            S\cap A=M
        \bigr]
        =
        \mathbb{E}_{S_M\sim \mathcal{I}_{n-t-a}(W)}
        \!\bigl[
            |S_M\cap B_M|^2
            \,\bigm|\,
            S_M\cap A_M=\{0\}
        \bigr].
    \end{equation}
    Applying Lemma~\ref{lem:mu} in the quotient space $W$ gives
    \begin{equation}
        \mathbb{E}_{S_M\sim \mathcal{I}_{n-t-a}(W)}
        \!\bigl[
            |S_M\cap B_M|^2
            \,\bigm|\,
            S_M\cap A_M=\{0\}
        \bigr]
        \le f_t,
    \end{equation}
    where $f_t=1+3/2^t+2/2^{2t}$.
    Consequently,
    \begin{equation}
        \mathbb{E}_S\!\bigl[
            |S\cap B|^2
            \,\bigm|\,
            S\cap A=M
        \bigr]
        \le
        f_t\,|M\cap K|^2.
    \end{equation}
    Substituting this bound into Eq.~\eqref{eq:main_chunk_condition_on_M}, we obtain
    \begin{align}
        \mathbb{E}_S\!\bigl[
            |S\cap B|^2
            \,\bigm|\,
            \dim(S\cap A)=a
        \bigr]
        &\le
        f_t\,
        \mathbb{E}_{M\sim \mathcal{I}_a(A)}\bigl[|M\cap K|^2\bigr].
    \end{align}
    By Lemma~\ref{lem:gamma},
    \begin{equation}
        \mathbb{E}_{M\sim \mathcal{I}_a(A)}\bigl[|M\cap K|^2\bigr]
        \le
        1+3\cdot 2^{\kappa+a-n}+4^{\kappa+a-n}.
    \end{equation}
    Therefore,
    \begin{equation}
        \mathbb{E}_S\!\bigl[
            |S\cap B|^2
            \,\bigm|\,
            \dim(S\cap A)=a
        \bigr]
        \le
        f_t\bigl(1+3\cdot 2^{\kappa+a-n}+4^{\kappa+a-n}\bigr).
    \end{equation}
    Plugging this estimate into Eq.~\eqref{eq:E_a_refined} yields
    \begin{align}
        \mathbb{E}_S\bigl[|S\cap A|^2 |S\cap B|^2\bigr]
        &\le
        f_t
        \sum_{a=0}^{n-t}
        \Pr_S\!\bigl(\dim(S\cap A)=a\bigr)
        \Bigl(
            2^{2a}
            + 3\cdot 2^{\kappa-n}2^{3a}
            + 4^{\kappa-n}2^{4a}
        \Bigr) \\
        &=
        f_t\Bigl(
            \mathbb{E}[2^{2a}]
            + O(2^{\kappa-n})\,\mathbb{E}[2^{3a}]
            + O(4^{\kappa-n})\,\mathbb{E}[2^{4a}]
        \Bigr),
    \end{align}
    where the expectation is taken with respect to the distribution of
    $a=\dim(S\cap A)$. Finally, since $A$ is Lagrangian, Lemma~\ref{lem:iso_Lag_high_inter} and
    Corollary~\ref{cor:f_t} imply that
    \begin{equation}
        \mathbb{E}[2^{2a}]
        =
        \mathbb{E}_S[|S\cap A|^2]
        =
        f_t+O(2^{-n}),
    \end{equation}
    while $\mathbb{E}[2^{3a}]$ and $\mathbb{E}[2^{4a}]$ are both $O(1)$.
    Hence
    \begin{equation}
        \mathbb{E}_S\bigl[|S\cap A|^2 |S\cap B|^2\bigr]
        \le
        f_t^2 + O(2^{\kappa-n}) + O(4^{\kappa-n}),
    \end{equation}
    as claimed.
\end{proof}

\begin{lemma}\label{lem:beta^2}
Let $S \in \mathcal{I}_{n-t}(V)$, and define $\beta(S):=\mathbb{E}_C[|C(S)\cap \mathcal{Z}|^2]$, where $C$ is sampled from $\mathcal{C}_k$. Then
\begin{equation}
    \mathbb{E}_S[\beta(S)^2]
    \le
    f_t^2+O\left(\left(\frac{2}{2^k+1}\right)^{n/k}\right),
\end{equation}
where $f_t=1+3/2^{t}+2/2^{2t}$.
\end{lemma}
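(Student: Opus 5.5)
Expanding the square of $\beta(S)$ with two independent copies $C,C'\sim\mathcal{C}_k$ gives
\begin{equation}
    \beta(S)^2=\mathbb{E}_{C,C'}\bigl[|S\cap C^{\dagger}(\mathcal{Z})|^2\,|S\cap C'^{\dagger}(\mathcal{Z})|^2\bigr],
\end{equation}
where we used that $|C(S)\cap\mathcal{Z}|=|S\cap C^\dagger(\mathcal{Z})|$. Exchanging the order of expectation, we have $\mathbb{E}_S[\beta(S)^2]=\mathbb{E}_{C,C'}\,\mathbb{E}_S\bigl[|S\cap A|^2|S\cap B|^2\bigr]$ with $A=C^\dagger(\mathcal{Z})$ and $B=C'^\dagger(\mathcal{Z})$, both Lagrangian. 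By Lemma~\ref{lem:dim_intersection}, the inner expectation depends only on $\kappa=\dim(A\cap B)$, and Lemma~\ref{lem:main_chunk} bounds it by $f_t^2+O(2^{\kappa-n})+O(4^{\kappa-n})$. It therefore suffices to show
\begin{equation}
    \mathbb{E}_{C,C'}\bigl[2^{\kappa-n}\bigr]+\mathbb{E}_{C,C'}\bigl[4^{\kappa-n}\bigr]=O\!\left(\left(\tfrac{2}{2^k+1}\right)^{n/k}\right).
\end{equation}

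Next, $A\cap B=C^\dagger(\mathcal{Z}\cap (C C'^{\dagger})(\mathcal{Z}))$, so $\kappa=\dim(\mathcal{Z}\cap D(\mathcal{Z}))$ with $D=CC'^\dagger$. Since $\mathcal{C}_k$ is a group, $D$ is again uniform on $\mathcal{C}_k$. Moreover $D=\bigotimes_i D_i$ and $\mathcal{Z}=\bigotimes_i\mathcal{Z}_k$ blockwise, so $\kappa=\sum_{i=1}^{n/k}\kappa_i$, where $\kappa_i=\dim(\mathcal{Z}_k\cap D_i(\mathcal{Z}_k))$ are i.i.d. For $D_i$ uniform in $\mathrm{Cl}(k)$, the image $D_i(\mathcal{Z}_k)$ is a uniformly random Lagrangian in $\mathbb{F}_2^{2k}$, since the symplectic group acts transitively on Lagrangians. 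Hence
\begin{equation}
    \mathbb{E}[s^{\kappa}]=G_k(s)^{n/k},
\end{equation}
using the generating function of Definition~\ref{def:generating_func}.

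Plugging in the special values gives $\mathbb{E}[2^{\kappa-n}]=2^{-n}G_k(2)^{n/k}=2^{-n}\bigl(2^{k+1}/(2^k+1)\bigr)^{n/k}=\bigl(2/(2^k+1)\bigr)^{n/k}$, exactly the claimed rate. Likewise $\mathbb{E}[4^{\kappa-n}]=4^{-n}G_k(4)^{n/k}=\bigl(6/((2^k+1)(2^k+2))\bigr)^{n/k}$. Since $6/(2^k+2)\le 2$ for all $k\ge1$, this term is also at most $\bigl(2/(2^k+1)\bigr)^{n/k}$. Summing the two contributions yields the lemma.

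The main point needing care is making sure the $O(\cdot)$ constants in Lemma~\ref{lem:main_chunk} are uniform in $\kappa$, so that one can average them over $(C,C')$. Those constants come from the $O(1)$ bounds on $\mathbb{E}[2^{3a}]$ and $\mathbb{E}[2^{4a}]$ in Lemma~\ref{lem:iso_Lag_high_inter}, which do not depend on $\kappa$, so the averaging goes through. The only other thing to verify is the step that $D$ is uniform and that $\dim(A\cap B)$ factorizes over blocks; both are immediate.
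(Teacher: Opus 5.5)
Your proof is correct and follows essentially the same route as the paper: you expand $\beta(S)^2$ with two independent Cliffords, swap the expectations, bound the inner term by Lemma~\ref{lem:main_chunk}, and average $2^{\kappa-n}$ and $4^{\kappa-n}$ using the block-factorized generating function $G_k(s)^{n/k}$ at $s=2$ and $s=4$. You also make explicit three steps the paper leaves implicit: that $CC'^{\dagger}$ is uniform on $\mathcal{C}_k$, that the constants in Lemma~\ref{lem:main_chunk} do not depend on $\kappa$, and that $6/(2^k+2)\le 2$ makes the $4^{\kappa-n}$ term subdominant.
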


\begin{proof}
    We expand the second moment as
    \begin{align}
        \mathbb{E}_{S}[\beta(S)^2]
        &=\mathbb{E}_S\mathbb{E}_{C_1}\mathbb{E}_{C_2}\bigl[|C_1(S)\cap \mathcal{Z}|^2\,|C_2(S)\cap\mathcal{Z}|^2\bigr]\\
        &=\mathbb{E}_{C_1}\mathbb{E}_{C_2}\mathbb{E}_S\bigl[|S\cap C_1^{\dagger}(\mathcal{Z})|^2\,|S\cap C_2^{\dagger}(\mathcal{Z})|^2\bigr]\\
        &=\sum_{l=0}^{n}\Pr_{C_1,C_2}\bigl(\dim(C_1^{\dagger}(\mathcal{Z}) \cap C_2^{\dagger}(\mathcal{Z}))=l\bigr)\,
        \mathbb{E}_S[|S\cap A|^2|S\cap B|^2]\\
        &=\sum_{l=0}^{n}\Pr_{C}\bigl(\dim(C^{\dagger}(\mathcal{Z}) \cap \mathcal{Z})=l\bigr)\,
        \mathbb{E}_S[|S\cap A|^2|S\cap B|^2],
    \end{align}
    where in the third line we used Lemma~\ref{lem:dim_intersection}, and $A,B\subset V$ are Lagrangian subspaces satisfying $\dim(A\cap B)=l$.
    
    Let $G_k(s)$ be the generating function defined in Definition~\ref{def:generating_func} for the block size $k$, and let $G(s):=G_k(s)^{n/k}$. Then
    \begin{equation}
        \Pr_{C}\bigl(\dim(C^{\dagger}(\mathcal{Z}) \cap \mathcal{Z})=l\bigr)=[s^l]\,G(s),
    \end{equation}
    where $[s^l]\,G(s)$ denotes the coefficient of $s^l$ in $G(s)$. In particular,
    \begin{align}
        &G(2)=G_k(2)^{n/k}=\sum_{l=0}^{n}\Pr_{C}\bigl(\dim(C^{\dagger}(\mathcal{Z}) \cap \mathcal{Z})=l\bigr)\,2^l,\\
        &G(4)=G_k(4)^{n/k}=\sum_{l=0}^{n}\Pr_{C}\bigl(\dim(C^{\dagger}(\mathcal{Z}) \cap \mathcal{Z})=l\bigr)\,4^l.
    \end{align}
    Applying Lemma~\ref{lem:main_chunk}, we obtain
    \begin{align}
        \mathbb{E}_{S}[\beta(S)^2]
        &\le f_t^2+O\left(2^{-n}G_k(2)^{n/k}\right)+O\left(4^{-n}G_k(4)^{n/k}\right)\\
        &\le f_t^2+O\left(\left(\frac{2}{2^k+1}\right)^{n/k}\right)+O\left(\left(\frac{6}{(2^k+1)(2^k+2)}\right)^{n/k}\right)\\
        &\le f_t^2+O\left(\left(\frac{2}{2^k+1}\right)^{n/k}\right).
    \end{align}
\end{proof}

\subsection{Part II: Reconstruction from computational difference samples}

In Part I, we showed that, after sampling $m=O(n2^t)$ Clifford circuits $C_1,\dots,C_m \sim \mathcal{C}_k$, one obtains measurement bases that are sufficient
to recover $\mathrm{Weyl}(\rho)$. This, however, does not by itself identify the generators
of $\mathrm{Weyl}(\rho)$. It remains to show that, in each such basis, sufficiently many
computational difference samples allow us to learn the corresponding hidden Pauli
symmetries, and that combining the resulting subspaces yields a sufficiently accurate
approximation to $\mathrm{Weyl}(\rho)$.

The role of Part II is precisely to establish this second step. For each
$\rho_i := C_i \rho C_i^\dagger$, we use computational difference sampling to construct a
subspace $\widehat{H}_i \subseteq \mathcal{X}$ that captures most of the mass of the
distribution $r_{\rho_i}$. We then show that the subspaces
$C_i^\dagger\bigl((\widehat{H}_i+\mathcal{Z})^\perp\bigr)$ can be aggregated so as to
produce a subspace $\widehat{S}$ satisfying
$\mathbb{E}_{x\sim\widehat{S}}[\mathrm{tr}(\rho W_x)^2]\ge 1-\epsilon$. In particular, this
part extends the computational difference sampling analysis from the pure-state setting~\cite{grewal2025efficient} to
general mixed-state inputs. 

\begin{fact}[Finding a heavy-weight subspace~\cite{grewal2025efficient, chen2025stabilizer}]\label{fact:heavy}
    Let $\epsilon > 0$, $m \in \mathbb{N}$, and let $\mathcal{D}$ be a distribution over $\mathbb{F}_2^d$.
    Suppose $x_1,\dots,x_m$ are $m$ i.i.d.\ samples drawn from $\mathcal{D}$.
    Fix $0 < \delta < 1$. If $m \ge \frac{2\log(1/\delta) + 2d}{\epsilon}$, then with probability at least $1-\delta$,
    \begin{equation}
        \mathcal{D}\!\left(\operatorname{span}(x_1,\dots,x_m)\right) \ge 1-\epsilon.
    \end{equation}
\end{fact}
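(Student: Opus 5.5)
The plan is the standard argument for learning a heavy subspace from i.i.d. samples. Let $U_j := \operatorname{span}(x_1,\dots,x_j)$ for $j=0,1,\dots,m$. We say step $j$ is a \emph{failure} if $\mathcal{D}(U_{j-1}) < 1-\epsilon$. The key monotonicity is $U_{j-1}\subseteq U_j$, so $\mathcal{D}(U_j)$ is nondecreasing in $j$. If the final span satisfies $\mathcal{D}(U_m)<1-\epsilon$, then every step $j=1,\dots,m$ is a failure. Conditioned on the past, the fresh sample $x_j$ falls outside $U_{j-1}$ with probability $1-\mathcal{D}(U_{j-1})>\epsilon$. Whenever this happens, $\dim U_j=\dim U_{j-1}+1$.

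Since $\dim U_j\le d$, the dimension can increase at most $d$ times. We couple the process with i.i.d.\ Bernoulli indicators $B_j$ of mean $\epsilon$, chosen so that $B_j=1$ implies a dimension increase whenever the process has not yet succeeded. To do this, draw a uniform $u_j\in[0,1]$ and order the sample space so that $x_j\notin U_{j-1}$ whenever $u_j\le \epsilon$. This coupling is valid because the conditional probability of escaping $U_{j-1}$ exceeds $\epsilon$ at every failure step. On the bad event $\{\mathcal{D}(U_m)<1-\epsilon\}$, all steps are failures, and hence
\[
\sum_{j=1}^m B_j\le d.
\]
It therefore suffices to bound $\Pr\bigl(\mathrm{Bin}(m,\epsilon)\le d\bigr)$.

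The binomial mean is $\mu=m\epsilon\ge 2\log(1/\delta)+2d$, so $d\le \mu/2$. The multiplicative Chernoff bound then gives
\[
\Pr\bigl(\mathrm{Bin}(m,\epsilon)\le \mu/2\bigr)\le e^{-\mu/8}.
\]
This is $e^{-(\log(1/\delta)+d)/4}$, which is weaker than $\delta$. The constant 2 in the statement is too small for the crude $1/2$-deviation bound, so the main obstacle is matching these constants. I would try the sharper form
\[
\Pr\bigl(\mathrm{Bin}\le d\bigr)\le \exp\Bigl(-\mu+d+d\log\tfrac{\mu}{d}\Bigr)
\]
and check whether it suffices. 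If it does not, I would use a direct union bound over subspaces. Any fixed subspace $U$ with $\mathcal{D}(U)<1-\epsilon$ contains all $m$ samples with probability at most $(1-\epsilon)^m\le e^{-\epsilon m}$. The bad event is contained in the event that some such $U$, with $\dim U\le d$, contains all samples. The number of subspaces of dimension at most $d$ in $\mathbb{F}_2^d$ is at most $2^{d^2}$, which is too many for this union bound.

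A better route is to union-bound only over spans of subsets that actually arise. The bad event implies there is a set of at most $d$ indices $I$ such that $U=\operatorname{span}(x_i:i\in I)$ is light and all remaining samples lie in $U$. Conditioning on the samples indexed by $I$, the remaining $m-|I|$ samples lie in $U$ with probability at most $(1-\epsilon)^{m-d}$. The number of choices of $I$ is at most $\binom{m}{\le d}$. This union bound produces a $d\log m$ term, so it also does not match exactly.

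The most robust approach is the martingale or coupling argument combined with the sharper Chernoff tail, accepting a constant-factor adjustment if needed. Since the statement is quoted from the cited works, I would finally reconcile any discrepancy in constants with their version, using natural-log conventions; only $O\bigl((\log(1/\delta)+d)/\epsilon\bigr)$ samples are ever used downstream.
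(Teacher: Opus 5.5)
\paragraph{Verdict.} Your reduction is correct and is the standard route, but the proposal stops short of proving the Fact as stated. The paper gives no proof and imports the Fact from the cited works. Your part that works: on the bad event every step is a failure, each failure step raises the dimension with conditional probability $>\epsilon$, and the coupling gives $\Pr\bigl(\mathcal{D}(U_m)<1-\epsilon\bigr)\le\Pr\bigl(\mathrm{Bin}(m,\epsilon)\le d\bigr)$. The gap is the final tail estimate. You instantiate the multiplicative Chernoff bound at the fixed deviation $1/2$ and get the too-weak $e^{-\mu/8}$. You leave the sharper bound unchecked and then propose to ``accept a constant-factor adjustment.'' That would prove a different statement, not this one.

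\paragraph{The missing calculation.} Apply $\Pr[X\le(1-\eta)\mu]\le e^{-\eta^2\mu/2}$ with $\eta=1-d/\mu$ instead of $\eta=1/2$. This is legitimate since $\mu=m\epsilon\ge 2\log(1/\delta)+2d>d$. Then
\[
\Pr\bigl(\mathrm{Bin}(m,\epsilon)\le d\bigr)\le \exp\Bigl(-\frac{(\mu-d)^2}{2\mu}\Bigr)=\exp\Bigl(-\frac{\mu}{2}+d-\frac{d^2}{2\mu}\Bigr)\le \exp\Bigl(-\frac{\mu}{2}+d\Bigr)\le\delta .
\]
The last inequality is exactly the hypothesis $\mu/2-d\ge\log(1/\delta)$. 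So the constants $2\log(1/\delta)+2d$ are precisely what this form of the Chernoff bound produces. This uses the natural log; a base-$2$ log only enlarges $m$.

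\paragraph{Your sharper bound also works.} The Poisson-type bound $\exp(-\mu+d+d\ln(\mu/d))$ suffices too. Its exponent is decreasing in $\mu$ for $\mu>d$, so it is enough to check $\mu=2L+2d$ with $L=\ln(1/\delta)$. There the exponent equals $-2L-d+d\ln(2+2L/d)$. This is at most $-L$ because $\ln(2+2r)\le 1+r$ for all $r\ge 0$: it holds at $r=0$ since $\ln 2<1$, and the left side has derivative $1/(1+r)\le 1$. The union-bound detours are therefore unnecessary.
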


Applying this fact to the computational difference sampling distribution $r_{\rho}$ yields the following.

\begin{corollary}\label{cor:r_mass}
    Let $x_1,\dots,x_m$ be $m$ i.i.d.\ computational difference samples drawn from $r_{\rho}$, and let $A=\langle x_1,\dots,x_m\rangle$. If
    \begin{equation}
        m\geq\frac{2\log(1/\delta) + 2n}{\epsilon},
    \end{equation}
    then with probability at least $1-\delta$,
    \begin{equation}
        r_{\rho}(A) \geq 1-\epsilon.
    \end{equation}
\end{corollary}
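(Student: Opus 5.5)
This follows directly from Fact~\ref{fact:heavy}; the only work is to identify the ambient dimension correctly. Although we have embedded the outcome space as $\mathcal{X}\subset\mathbb{F}_2^{2n}$, the distribution $r_\rho$ is supported entirely on $\mathcal{X}$. The map $\iota:\mathbb{F}_2^n\to\mathcal{X}$, $a\mapsto(a,0^n)$, is a linear isomorphism. We therefore define the pullback distribution $\mathcal{D}:=r_\rho\circ\iota$ on $\mathbb{F}_2^n$, which is exactly the law of $b_1+b_2$ for two independent computational-basis outcomes. We then apply Fact~\ref{fact:heavy} with $d=n$, not $d=2n$.

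First, we note that i.i.d.\ samples $x_1,\dots,x_m\sim r_\rho$ correspond under $\iota^{-1}$ to i.i.d.\ samples $y_i=\iota^{-1}(x_i)\sim\mathcal{D}$. Linearity of $\iota$ gives $\iota(\operatorname{span}(y_1,\dots,y_m))=\operatorname{span}(x_1,\dots,x_m)=A$. Since $\iota$ is a bijection onto $\mathcal{X}\supseteq\operatorname{supp}(r_\rho)$, we get $r_\rho(A)=\mathcal{D}(\operatorname{span}(y_1,\dots,y_m))$.

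Second, we invoke Fact~\ref{fact:heavy} with $d=n$, accuracy $\epsilon$ and failure probability $\delta$. Whenever $m\ge(2\log(1/\delta)+2n)/\epsilon$, this yields $\mathcal{D}(\operatorname{span}(y_i))\ge 1-\epsilon$ with probability at least $1-\delta$. Transporting this back through $\iota$ gives the claim.

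There is no real obstacle here. The only point needing care is the dimension bookkeeping: working naively in $\mathbb{F}_2^{2n}$ would give the weaker requirement $m\ge(2\log(1/\delta)+4n)/\epsilon$, which differs only by a constant factor. The pullback to $\mathbb{F}_2^n$ is what gives exactly the stated threshold.
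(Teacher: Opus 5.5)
Your proposal is correct and follows the paper's own proof, which simply applies Fact~\ref{fact:heavy}. Your identification of $\mathcal{X}$ with $\mathbb{F}_2^n$, so that $d=n$, is the implicit bookkeeping that yields exactly the stated threshold $m\geq(2\log(1/\delta)+2n)/\epsilon$.
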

\begin{proof}
    This follows directly from Fact~\ref{fact:heavy}.
\end{proof}

\begin{lemma}[Sum of heavy-weight subspaces]\label{lem:sos}
    Let $S_1,S_2 \subset V\coloneq\mathbb{F}_2^{2n}$ be subspaces satisfying
    \begin{equation}
        \frac{1}{|S_i|}\sum_{x\in S_i}\mathrm{tr}(\rho W_x)^2 \ge 1-\epsilon_i,
        \qquad i\in\{1,2\}.
    \end{equation}
    Let $S:=S_1+S_2$. Then
    \begin{equation}
        \frac{1}{|S|}\sum_{x\in S}\mathrm{tr}(\rho W_x)^2
        \ge 1-(\epsilon_1+\epsilon_2).
    \end{equation}
\end{lemma}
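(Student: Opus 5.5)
The plan is to reduce the subspace statement to a pointwise inequality for pairs of Pauli operators, and then average. Write $q(x):=\mathrm{tr}(\rho W_x)^2\in[0,1]$.

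\textbf{Step 1 (averaging reduction).} The linear map $S_1\times S_2\to S=S_1+S_2$, $(a,b)\mapsto a+b$, is surjective. All its fibers are cosets of its kernel, so they have equal size. Hence if $a$ is uniform on $S_1$ and $b$ is independently uniform on $S_2$, then $a+b$ is uniform on $S$. So
\[
1-\frac{1}{|S|}\sum_{x\in S}q(x)=\mathbb{E}_{a,b}\bigl[1-q(a+b)\bigr].
\]
It therefore suffices to prove, for every $a,b\in V$,
\[
1-q(a+b)\le \bigl(1-q(a)\bigr)+\bigl(1-q(b)\bigr).
\]
Averaging this inequality over $a\sim S_1$ and $b\sim S_2$ then gives exactly $\epsilon_1+\epsilon_2$ on the right-hand side.

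\textbf{Step 2 (pointwise inequality).} Let $u=\mathrm{tr}(\rho W_a)$, $v=\mathrm{tr}(\rho W_b)$, $w=\mathrm{tr}(\rho W_{a+b})$, and set $\alpha=1-|u|$, $\beta=1-|v|$.

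\emph{Anticommuting case.} If $[a,b]=1$, then $W_a$ and $W_b$ anticommute. The standard bound for anticommuting Hermitian involutions gives $u^2+v^2\le 1$. Indeed, $(sW_a+s'W_b)^2=2I$, so $|u\,\mathrm{tr}(\rho W_a)+v\,\mathrm{tr}(\rho W_b)|\le\sqrt{2}\sqrt{u^2+v^2}$ after choosing coefficients proportional to $(u,v)$. Consequently the right-hand side $2-u^2-v^2$ is at least $1\ge 1-q(a+b)$, and the inequality holds.

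\emph{Commuting case.} If $[a,b]=0$, then $W_aW_b=\pm W_{a+b}$. Let $s=\mathrm{sign}(u)$ and $s'=\mathrm{sign}(v)$, and define the commuting projectors $P_a=(I+sW_a)/2$ and $P_b=(I+s'W_b)/2$. Then $\mathrm{tr}(\rho P_a)=1-\alpha/2$ and $\mathrm{tr}(\rho P_b)=1-\beta/2$. By the union bound for the joint projective measurement,
\[
\mathrm{tr}(\rho P_aP_b)\ge 1-\tfrac{\alpha+\beta}{2}.
\]
The range of $P_aP_b$ is contained in the $+1$ eigenspace of $ss'W_aW_b$. Therefore $\mathrm{tr}\bigl(\rho(I+ss'W_aW_b)/2\bigr)\ge 1-(\alpha+\beta)/2$, which gives $|w|\ge 1-\alpha-\beta$.

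\textbf{Step 3 (closing the algebra).} If $\alpha+\beta\le 1$, then
\[
1-w^2\le 1-(1-\alpha-\beta)^2=2(\alpha+\beta)-(\alpha+\beta)^2.
\]
This is at most $2\alpha-\alpha^2+2\beta-\beta^2=(1-u^2)+(1-v^2)$, because $(\alpha+\beta)^2\ge\alpha^2+\beta^2$.

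If instead $\alpha+\beta>1$, then $|u|+|v|<1$, so $u^2+v^2<1$. The right-hand side then exceeds $1\ge 1-w^2$.

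In both cases the pointwise inequality holds, and Step 1 completes the proof. The main obstacle I expect is Step 2: deriving a \emph{squared}-expectation inequality for mixed states. Working with $|{\cdot}|$ via projectors and squaring only at the end, as in Step 3, is what makes this go through.
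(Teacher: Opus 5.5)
Your proof is correct, and it takes a different route from the paper.

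\textbf{The paper's route.} The paper works on two copies, $\rho\otimes\overline{\rho}$. It defines $P_A=\frac{1}{|A|}\sum_{x\in A}W_x\otimes\overline{W_x}$. Because the phases cancel, all the operators $W_x\otimes\overline{W_x}$ commute. It follows that $P_A$ is an orthogonal projector whose expectation in $\rho\otimes\overline{\rho}$ is the average of $\mathrm{tr}(\rho W_x)^2$ over $A$. The paper then shows $P_{S_1}P_{S_2}=P_{S_1+S_2}$ and finishes with one operator inequality, $P_{S_1+S_2}\succeq P_{S_1}+P_{S_2}-I$.

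\textbf{Your route.} You stay in the single-copy picture. You use the fact that $a+b$ is uniform on $S$ to reduce to the pointwise subadditivity $1-q(a+b)\le(1-q(a))+(1-q(b))$. You then prove that inequality by splitting into commuting and anticommuting cases.

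\textbf{How the two relate.} Your pointwise inequality is exactly the paper's lemma for $S_1=\{0,a\}$ and $S_2=\{0,b\}$: averaging over $\{0,a,b,a+b\}$ gives $1+q(a+b)\ge q(a)+q(b)$. So your Step 1 shows that the rank-one case already implies the general lemma.

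\textbf{What each approach buys.} The paper's two-copy construction is uniform. Since $W_a\otimes\overline{W_a}$ and $W_b\otimes\overline{W_b}$ always commute, the anticommuting case never needs separate treatment. It also needs neither the bound $u^2+v^2\le 1$ nor a separate averaging step. Your argument is more elementary. It isolates a clean standalone fact: $1-\mathrm{tr}(\rho W_x)^2$ is subadditive on all of $V$, which gives the iterated $m$-fold version immediately.

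\textbf{One local slip.} In the anticommuting case, the inequality you display, $|u\,\mathrm{tr}(\rho W_a)+v\,\mathrm{tr}(\rho W_b)|\le\sqrt{2}\sqrt{u^2+v^2}$, only yields $u^2+v^2\le 2$. The correct computation is $(uW_a+vW_b)^2=(u^2+v^2)I$. Hence $u^2+v^2=\mathrm{tr}\bigl(\rho(uW_a+vW_b)\bigr)\le\sqrt{u^2+v^2}$, which gives $u^2+v^2\le 1$ as you need.
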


\begin{proof}
    For a subspace $A\subset V$, define
    \begin{equation}
        P_A
        \coloneq
        \frac{1}{|A|}\sum_{x\in A} W_x\otimes \overline{W_x},
    \end{equation}
    where the bar denotes entrywise complex conjugation in the computational basis.

    We first show that $P_A$ is an orthogonal projector. Since the Weyl operators are
    defined only up to their projective phase, for any $x,y\in V$ there exists a phase
    $\gamma(x,y)$ with $|\gamma(x,y)|=1$ such that
    \begin{equation}
        W_xW_y=\gamma(x,y)W_{x+y}.
    \end{equation}
    Taking the complex conjugate gives
    \begin{equation}
        \overline{W_xW_y}
        =
        \overline{\gamma(x,y)}\,\overline{W_{x+y}}.
    \end{equation}
    Hence the phase cancels in the tensor product:
    \begin{equation}
        (W_x\otimes \overline{W_x})(W_y\otimes \overline{W_y})
        =
        W_{x+y}\otimes \overline{W_{x+y}}.
    \end{equation}
    Therefore,
    \begin{align}
        P_A^2
        &=
        \frac{1}{|A|^2}
        \sum_{x,y\in A}
        (W_x\otimes \overline{W_x})(W_y\otimes \overline{W_y}) \\
        &=
        \frac{1}{|A|^2}
        \sum_{x,y\in A}
        W_{x+y}\otimes \overline{W_{x+y}} \\
        &=
        \frac{1}{|A|}
        \sum_{z\in A} W_z\otimes \overline{W_z} \\
        &=P_A.
    \end{align}
    Since each qubit Weyl operator $W_x$ is Hermitian, $W_x\otimes\overline{W_x}$ is
    also Hermitian. Hence $P_A^\dagger=P_A$, and so $P_A$ is an orthogonal projector.

    Next, we relate this projector to the Pauli weights. For any subspace $A\subset V$,
    \begin{align}
        \mathrm{tr}\!\left[(\rho\otimes\overline{\rho})P_A\right]
        &=
        \frac{1}{|A|}\sum_{x\in A}
        \mathrm{tr}\!\left[(\rho\otimes\overline{\rho})
        (W_x\otimes \overline{W_x})\right] \\
        &=
        \frac{1}{|A|}\sum_{x\in A}
        \mathrm{tr}(\rho W_x)\,
        \mathrm{tr}(\overline{\rho}\,\overline{W_x}) \\
        &=
        \frac{1}{|A|}\sum_{x\in A}
        \left|\mathrm{tr}(\rho W_x)\right|^2.
    \end{align}
    Since both $\rho$ and $W_x$ are Hermitian, $\mathrm{tr}(\rho W_x)$ is real. Thus
    \begin{equation}
        \mathrm{tr}\!\left[(\rho\otimes\overline{\rho})P_A\right]
        =
        \frac{1}{|A|}\sum_{x\in A}\mathrm{tr}(\rho W_x)^2.
    \end{equation}
    Applying this identity to $A=S_i$, the assumptions become
    \begin{equation}
        \mathrm{tr}\!\left[(\rho\otimes\overline{\rho})P_{S_i}\right]
        \ge 1-\epsilon_i,
        \qquad i\in\{1,2\}.
    \end{equation}

    We now compute the product $P_{S_1}P_{S_2}$. Using the same phase-cancellation
    identity as above,
    \begin{align}
        P_{S_1}P_{S_2}
        &=
        \frac{1}{|S_1||S_2|}
        \sum_{x\in S_1,\;y\in S_2}
        (W_x\otimes \overline{W_x})(W_y\otimes \overline{W_y}) \\
        &=
        \frac{1}{|S_1||S_2|}
        \sum_{x\in S_1,\;y\in S_2}
        W_{x+y}\otimes \overline{W_{x+y}} \\
        &=
        \frac{1}{|S_1||S_2|}
        \sum_{s\in S} N_s\, W_s\otimes \overline{W_s},
    \end{align}
    where
    \begin{equation}
        N_s
        \coloneq
        \left|\{(x,y)\in S_1\times S_2:x+y=s\}\right|.
    \end{equation}
    For every $s\in S=S_1+S_2$, we have $N_s=|S_1\cap S_2|$. Indeed, if
    $(x_0,y_0)$ is one solution to $x+y=s$, then all solutions are exactly
    \begin{equation}
        (x_0+u,y_0+u),
        \qquad u\in S_1\cap S_2.
    \end{equation}
    Therefore,
    \begin{align}
        P_{S_1}P_{S_2}
        &=
        \frac{|S_1\cap S_2|}{|S_1||S_2|}
        \sum_{s\in S} W_s\otimes \overline{W_s} \\
        &=
        \frac{1}{|S|}
        \sum_{s\in S} W_s\otimes \overline{W_s} \\
        &=P_S,
    \end{align}
    where we used
    \begin{equation}
        |S|=|S_1+S_2|
        =
        \frac{|S_1||S_2|}{|S_1\cap S_2|}.
    \end{equation}
    In particular, $P_{S_1}$ and $P_{S_2}$ commute.

    Since $P_{S_1}$ and $P_{S_2}$ are commuting orthogonal projectors, they are
    simultaneously diagonalizable. Thus there exists an orthonormal basis
    $\{\ket{\psi_j}\}_j$ such that
    \begin{align}
        P_{S_1}&=\sum_j a_j\ket{\psi_j}\bra{\psi_j},\\
        P_{S_2}&=\sum_j b_j\ket{\psi_j}\bra{\psi_j},\\
        P_S=P_{S_1}P_{S_2}
        &=\sum_j a_jb_j\ket{\psi_j}\bra{\psi_j},
    \end{align}
    where $a_j,b_j\in\{0,1\}$. Since
    \begin{equation}
        a_jb_j\ge a_j+b_j-1
    \end{equation}
    for all $j$, we obtain the operator inequality
    \begin{equation}
        P_S\succeq P_{S_1}+P_{S_2}-I,
    \end{equation}
    where $A \succeq B$ means that $A-B$ is positive semidefinite. Finally, since $\rho\otimes\overline{\rho}$ is positive semidefinite, taking the
    trace against it preserves the semidefinite order. Hence
    \begin{align}
        \mathrm{tr}\!\left[(\rho\otimes\overline{\rho})P_S\right]
        &\ge
        \mathrm{tr}\!\left[(\rho\otimes\overline{\rho})P_{S_1}\right]
        +
        \mathrm{tr}\!\left[(\rho\otimes\overline{\rho})P_{S_2}\right]
        -
        \mathrm{tr}(\rho\otimes\overline{\rho}) \\
        &\ge
        (1-\epsilon_1)+(1-\epsilon_2)-1 \\
        &=
        1-(\epsilon_1+\epsilon_2).
    \end{align}
    Using again
    \begin{equation}
        \mathrm{tr}\!\left[(\rho\otimes\overline{\rho})P_S\right]
        =
        \frac{1}{|S|}\sum_{x\in S}\mathrm{tr}(\rho W_x)^2,
    \end{equation}
    we obtain
    \begin{equation}
        \frac{1}{|S|}\sum_{x\in S}\mathrm{tr}(\rho W_x)^2
        \ge 1-(\epsilon_1+\epsilon_2)
    \end{equation}
    as desired.
\end{proof}
\begin{corollary}\label{cor:chain_of_SOS}
    For $\epsilon\in(0,1)$ and $i \in [n]$, let $S_i\subset V\coloneq\mathbb{F}_2^{2n}$ be the subspace such that
    \begin{equation}
        \frac{1}{|S_i|}\sum_{x \in S_i} \mathrm{tr}(\rho W_x)^2 \geq 
        1-\epsilon_i.
    \end{equation}
    Define $S=\sum_i S_i$, then
    \begin{equation}
        \frac{1}{|S|}\sum_{x\in S}\mathrm{tr}(\rho W_x)^2 
        \geq
        1-\sum_i \epsilon_i
    \end{equation}
\end{corollary}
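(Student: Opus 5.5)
The plan is to prove Corollary~\ref{cor:chain_of_SOS} by induction on the number of summands, using Lemma~\ref{lem:sos} as the inductive step. Define the partial sums $R_j := \sum_{i=1}^{j} S_i$ for $j=1,\dots,n$. The claim to carry through the induction is
\begin{equation}
    \frac{1}{|R_j|}\sum_{x\in R_j}\mathrm{tr}(\rho W_x)^2 \ge 1-\sum_{i=1}^{j}\epsilon_i .
\end{equation}

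The base case $j=1$ is exactly the hypothesis on $S_1$. For the inductive step, assume the bound holds for $R_{j-1}$ with error $\epsilon' := \sum_{i<j}\epsilon_i$. Apply Lemma~\ref{lem:sos} with the pair $(S_1,S_2) \leftarrow (R_{j-1}, S_j)$ and errors $(\epsilon_1,\epsilon_2) \leftarrow (\epsilon',\epsilon_j)$. Since $R_j = R_{j-1}+S_j$, the lemma gives the bound for $R_j$ with error $\epsilon'+\epsilon_j$. Lemma~\ref{lem:sos} imposes no condition on the subspaces beyond their being subspaces of $V$ (it does not even need isotropy), so its hypotheses are met at every step.

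Taking $j=n$ (or however many summands are present) yields the statement, with $S=R_n$. If $\sum_i\epsilon_i \ge 1$ the bound is trivially true, since the left-hand side is nonnegative, so no range restriction on the $\epsilon_i$ is needed.

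There is no real obstacle. The only point worth checking is that the average-weight quantity of Lemma~\ref{lem:sos} is what is propagated. It is, because that lemma's conclusion has exactly the form of its hypothesis, so the errors simply add through the union-bound-like inequality $P_S\succeq P_{S_1}+P_{S_2}-I$. Equivalently, one could give a direct proof: the projectors $P_{S_i}$ pairwise commute and multiply to $P_S$, and $\prod_i P_{S_i} \succeq \sum_i P_{S_i} - (N-1)I$ for commuting $\{0,1\}$-valued diagonals. The induction is the cleaner route.
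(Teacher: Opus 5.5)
Your proposal is correct and matches the paper's proof: the paper likewise defines the partial sums $T_m=\sum_{i=1}^m S_i$ and inducts on $m$. The base case is the hypothesis, and Lemma~\ref{lem:sos} applied to $T_m+S_{m+1}$ gives the inductive step. Your remarks on the trivial case $\sum_i\epsilon_i\ge 1$ and on the direct commuting-projector argument are valid additions, but they are not needed.
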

\begin{proof}
    Define $T_m:=\sum_{i=1}^m S_i$ for $m\in[n]$. We prove by induction on $m$ that
    \begin{equation}
        \frac{1}{|T_m|}\sum_{x\in T_m}\mathrm{tr}(\rho W_x)^2 \ge 1-\sum_{i=1}^m \epsilon_i.
    \end{equation}
    The case $m=1$ is exactly the assumption. For the induction step, assume the claim holds for $T_m$. Since $T_{m+1}=T_m+S_{m+1}$, Lemma~\ref{lem:sos} implies
    \begin{equation}
        \frac{1}{|T_{m+1}|}\sum_{x\in T_{m+1}}\mathrm{tr}(\rho W_x)^2
        \ge \left(1-\sum_{i=1}^m \epsilon_i\right)-\epsilon_{m+1}
        =1-\sum_{i=1}^{m+1}\epsilon_i.
    \end{equation}
    Taking $m=n$ proves the claim.
\end{proof}

\begin{fact}[Lemma 9.4 in~\cite{grewal2025efficient}]\label{fact:reduce_2n}
    Let $S_1, \ldots, S_m$ denote $m$ (potentially non-unique) subspaces of $\mathbb{F}_2^{2n}$ and let
    \[
    S = \sum_{i=1}^m S_i.
    \]
    Suppose that for some $\varepsilon > 0$ and for all $S_i$,
    \[
    \sum_{x \in S_i} p_\rho(x) \ge \left(1 - \frac{\varepsilon}{2n}\right)\frac{|S_i|}{2^n}.
    \]
    Then,
    \[
    \sum_{x \in S} p_\rho(x) \ge (1 - \varepsilon)\frac{|S|}{2^n}.
    \]
\end{fact}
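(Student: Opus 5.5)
The plan is to reduce this to the sum-of-heavy-subspaces bound already established in Lemma~\ref{lem:sos} and Corollary~\ref{cor:chain_of_SOS}. Throughout I take $p_\rho(x)=2^{-n}\mathrm{tr}(\rho W_x)^2$, the normalization consistent with Lemma~\ref{lem:cds-subspace-mass}. With this normalization the hypothesis on each $S_i$ reads
\[
\frac{1}{|S_i|}\sum_{x\in S_i}\mathrm{tr}(\rho W_x)^2\ \ge\ 1-\frac{\varepsilon}{2n}.
\]
Likewise, the desired conclusion is $\frac{1}{|S|}\sum_{x\in S}\mathrm{tr}(\rho W_x)^2\ge 1-\varepsilon$.

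Applying Corollary~\ref{cor:chain_of_SOS} directly to all $m$ subspaces would only give $1-m\varepsilon/2n$, which is useless when $m>2n$. The key step is therefore to discard redundant subspaces. I would select a subcollection greedily. Set $T_0=\{0\}$ and scan $S_1,\dots,S_m$ in order. Keep $S_i$ only if $S_i\not\subseteq T$, where $T$ is the sum of the subspaces kept so far, and then update $T\leftarrow T+S_i$.

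Every kept subspace strictly increases $\dim T$. Since $\dim T\le \dim V=2n$, at most $2n$ subspaces are kept. Every discarded $S_i$ is already contained in the running sum at the moment it is skipped. Hence the sum of the kept subspaces $S_{i_1},\dots,S_{i_r}$, with $r\le 2n$, equals $S$.

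Next I apply Corollary~\ref{cor:chain_of_SOS}, i.e.\ iterated Lemma~\ref{lem:sos}, to the kept subspaces with $\epsilon_{i_j}=\varepsilon/2n$. This gives
\[
\frac{1}{|S|}\sum_{x\in S}\mathrm{tr}(\rho W_x)^2\ \ge\ 1-r\cdot\frac{\varepsilon}{2n}\ \ge\ 1-\varepsilon .
\]
Multiplying by $|S|/2^n$ recovers the stated form.

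The corollary is stated for indices $i\in[n]$, but its induction works verbatim for any finite number of subspaces, so using $r\le 2n$ terms is fine. The only real obstacle is noticing that the dependence on $m$ must be removed. The greedy dimension-counting argument does this, and everything else is already in Lemma~\ref{lem:sos}.
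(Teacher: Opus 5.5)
Your proof is correct. With $p_\rho(x)=2^{-n}\mathrm{tr}(\rho W_x)^2$, you greedily prune the family to at most $2n$ summands with the same sum, then chain Lemma~\ref{lem:sos} as in Corollary~\ref{cor:chain_of_SOS}, whose induction works for any finite number of terms, to get $1-r\varepsilon/2n\ge 1-\varepsilon$. The paper only cites this fact without proof, but your dimension-counting reduction plus repeated two-subspace bound is the argument its Lemma~\ref{lem:sos} and Corollary~\ref{cor:chain_of_SOS} are evidently set up to support, with Lemma~\ref{lem:sos} covering the mixed-state case.
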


\begin{fact}[Fact 2.21 in~\cite{grewal2025efficient}]
    $M=\left\{x\in \mathbb{F}_2^{2n} : 2^n p_\rho(x) > \tfrac12 \right\}$ is isotropic.
\end{fact}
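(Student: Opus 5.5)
The plan is to argue by contradiction, using the elementary fact that two anticommuting Pauli operators cannot both have large squared expectation values. Throughout I read $p_\rho(x)=2^{-n}\mathrm{tr}(\rho W_x)^2$, which is consistent with $\rho=2^{-n}\sum_a \mathrm{tr}(\rho W_a)W_a$ and with the normalization used in Fact~\ref{fact:reduce_2n}. With this reading, $M=\{x: \mathrm{tr}(\rho W_x)^2>1/2\}$. Since isotropy of a subset only asks that $[x,y]=0$ for all $x,y\in M$, I do not need $M$ to be a subspace. It suffices to show that whenever $[x,y]=1$, we have
\begin{equation}
    \mathrm{tr}(\rho W_x)^2+\mathrm{tr}(\rho W_y)^2\le 1 ,
\end{equation}
because then $x$ and $y$ cannot both lie in $M$.

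To prove this inequality, set $A=W_x$ and $B=W_y$. Both are Hermitian with $A^2=B^2=I$; this holds for the chosen phase convention $i^{a_x\cdot a_z}$. Since $[x,y]=1$, they anticommute: $AB=-BA$. Let $a=\mathrm{tr}(\rho A)$ and $b=\mathrm{tr}(\rho B)$, both real, and form $O=aA+bB$. Expanding the square, the cross terms cancel:
\begin{equation}
    O^2=a^2I+b^2I+ab(AB+BA)=(a^2+b^2)I .
\end{equation}
Hence the Hermitian operator $O$ satisfies $\|O\|_{\infty}=\sqrt{a^2+b^2}$. Because $\rho$ is a density matrix,
\begin{equation}
    a^2+b^2=\mathrm{tr}(\rho O)\le\|O\|_\infty=\sqrt{a^2+b^2},
\end{equation}
which gives $a^2+b^2\le 1$.

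This contradicts $a^2>1/2$ and $b^2>1/2$, so any two elements of $M$ commute, and $M$ is isotropic. Two trivial cases need no separate argument: if $x=y$, then $[x,x]=0$ holds automatically, and $0\in M$ commutes with everything. I do not expect any real obstacle here. The only point to check carefully is that the Weyl operators, with the stated phase convention, are Hermitian involutions, so that $O^2$ is a scalar multiple of the identity.
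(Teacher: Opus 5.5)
Your proof is correct. The paper does not prove this statement itself; it imports it as Fact 2.21 of Grewal et al., which is formulated for pure states.

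Your argument is the standard uncertainty relation for anticommuting Hermitian involutions. With $O=aW_x+bW_y$ one has $O^2=(a^2+b^2)I$, and $\mathrm{tr}(\rho O)=a^2+b^2\le\|O\|_\infty=\sqrt{a^2+b^2}$ then forces $\mathrm{tr}(\rho W_x)^2+\mathrm{tr}(\rho W_y)^2\le 1$ whenever $[x,y]=1$.

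What your self-contained route buys is generality. It uses only $\rho\succeq 0$ and $\mathrm{tr}\,\rho=1$, so it establishes the fact verbatim for mixed states. That is exactly the generality the paper relies on without comment when it invokes Fact~\ref{fact:heavy-subspace-in-M} for a possibly mixed $\rho$ in Corollary~\ref{cor:part2_main}.

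You also correctly note that only pairwise commutation of elements of $M$ is required, since the paper defines isotropy for subsets. Isotropy of the span $\langle M\rangle$, which is what Fact~\ref{fact:heavy-subspace-in-M} actually uses, then follows at once from bilinearity of $[\cdot,\cdot]$.
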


\begin{fact}[Lemma B.7 in~\cite{grewal2025efficient}]\label{fact:heavy-subspace-in-M}
    Let $M=\left\{x\in \mathbb{F}_2^{2n} : 2^n p_\rho(x) > \tfrac12 \right\}$. If $S\subseteq \mathbb{F}_2^{2n}$ is a subspace such that
    \begin{equation}
        \sum_{x\in S} p_\rho(x) > \frac{3|S|}{4\cdot 2^n},
    \end{equation}
    then $S \subseteq \langle M\rangle$, and hence $S$ is isotropic.
\end{fact}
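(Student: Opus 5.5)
The plan is a Markov-type counting argument inside $S$, followed by a sumset argument showing that the heavy elements of $S$ generate all of $S$. Throughout I use the normalization implicit in the paper, $p_\rho(x)=2^{-n}\mathrm{tr}(\rho W_x)^2$. With $q(x):=\mathrm{tr}(\rho W_x)^2=2^n p_\rho(x)$, the hypothesis reads
\begin{equation}
    \frac{1}{|S|}\sum_{x\in S} q(x) > \frac34,
\end{equation}
and membership in $M$ means $q(x)>\tfrac12$. The one analytic input is $0\le q(x)\le 1$, which holds because $W_x$ is Hermitian with eigenvalues $\pm1$ and $\rho$ is a state.

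\textbf{Step 1 (heavy elements fill more than half of $S$).} Let $f$ be the fraction of $x\in S$ with $q(x)\le \tfrac12$. Bounding $q$ by $\tfrac12$ on these elements and by $1$ on the rest gives
\begin{equation}
    \frac34 < \frac{1}{|S|}\sum_{x\in S}q(x) \le \frac{f}{2} + (1-f) = 1-\frac f2 ,
\end{equation}
so $f<\tfrac12$. Equivalently, $|S\cap M|>|S|/2$.

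\textbf{Step 2 (the heavy elements generate $S$).} Put $A:=S\cap M$ and fix any $x\in S$. Both $A$ and $x+A$ are subsets of the group $S$, and each has more than $|S|/2$ elements. By pigeonhole they intersect, so there are $m_1,m_2\in A$ with $x+m_1=m_2$. Hence $x=m_1+m_2\in\langle M\rangle$, and therefore $S\subseteq\langle M\rangle$.

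\textbf{Step 3 (isotropy).} I invoke the preceding fact that $M$ is isotropic, i.e.\ $[a,b]=0$ for all $a,b\in M$. Since the symplectic form is bilinear, this extends to all sums of elements of $M$, so $\langle M\rangle$ is isotropic. Any subspace of an isotropic subspace is isotropic, so $S$ is isotropic.

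None of these steps is a serious obstacle. The only point needing care is Step 2: the pigeonhole requires a \emph{strict} inequality $|A|>|S|/2$, and this is exactly what the strict hypothesis $>\tfrac34$ delivers through Step 1. The substantive ingredient, isotropy of $M$, is taken from the earlier fact rather than reproved here.
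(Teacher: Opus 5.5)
Your proof is correct and complete, given the preceding Fact that $M$ is isotropic. The paper gives no proof of this statement; it imports it as Lemma B.7 of \cite{grewal2025efficient}. Your argument is the standard one.

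Your Steps 1 and 2 together are equivalent to the contrapositive. If $S\not\subseteq\langle M\rangle$, then $S\cap\langle M\rangle$ is a proper subspace of $S$, so it has at most $|S|/2$ elements. Every other $x\in S$ has $\mathrm{tr}(\rho W_x)^2\le\tfrac12$, which caps the average over $S$ at $\tfrac12+\tfrac14=\tfrac34$. Your pigeonhole on $A$ and $x+A$ simply replaces the step ``a proper subgroup has index at least $2$''.

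You may want the argument self-contained for the mixed states the paper applies it to. The imported ingredient takes one line. For anticommuting $W_x,W_y$, set $a=\mathrm{tr}(\rho W_x)$ and $b=\mathrm{tr}(\rho W_y)$. Then $(aW_x+bW_y)^2=(a^2+b^2)I$, so $a^2+b^2=\mathrm{tr}(\rho(aW_x+bW_y))\le\sqrt{a^2+b^2}$. Hence $a^2+b^2\le 1$, and no two elements of $M$ anticommute.
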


\begin{corollary}[Correctness of the the Algorithm~\ref{alg}]\label{cor:part2_main}
    Assume that $t=O(\log n)$, $k=\Omega(\log n)$, and $\epsilon\in (0, 0.25)$.
    Let $C_1,\dots,C_m$ be sampled independently from $\mathcal{C}_k$, where $m=O(n2^t)$,
    and define $\rho_i:=C_i\rho C_i^\dagger$ for each $i\in[m]$.
    For each $i\in[m]$, let $\widehat{H}_i\subseteq \mathcal{X}$ be the subspace spanned by
    $O(n^2/\epsilon)$ independent computational difference samples drawn from $r_{\rho_i}$.
    Define
    \begin{equation}
        \widehat{S}
        :=
        \sum_{i=1}^{m} C_i^\dagger\bigl((\widehat{H}_i+\mathcal{Z})^\perp\bigr).
    \end{equation}
    Then, with high probability over the choice of $C_1,\dots,C_m$ and the computational
    difference samples, the following holds for all but a $2^{-\Omega(n)}$ fraction of
    $\mathrm{Weyl}(\rho)\in\mathcal{I}_{n-t}(V)$:
    \begin{enumerate}
        \item $\widehat{S}\supseteq \mathrm{Weyl}(\rho)$,
        \item $\displaystyle \mathbb{E}_{x\sim \widehat{S}}\!\left[\mathrm{tr}(\rho W_x)^2\right]\ge 1-\epsilon$,
        \item $\widehat{S}$ is isotropic.
    \end{enumerate}
    In particular, $\widehat{S}$ is a valid output of Algorithm~\ref{alg}.
\end{corollary}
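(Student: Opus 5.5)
We will prove the three properties of Corollary~\ref{cor:part2_main} by combining Part~I with the computational-difference-sampling lemmas. Throughout, we condition on two events, each of which holds with high probability:
\begin{itemize}
\item the event of Corollary~\ref{cor:weyl-support-recovery}, so that $\sum_i (\mathrm{Weyl}(\rho)\cap C_i^\dagger(\mathcal{Z}))=\mathrm{Weyl}(\rho)$ for all but a $2^{-\Omega(n)}$ fraction of $\mathrm{Weyl}(\rho)$;
\item the event of Fact~\ref{fact:recover-heavy-subspace-each-basis}, so that $r_{\rho_i}(\widehat H_i)\ge 1-\epsilon/2n$ for every $i$.
\end{itemize}
The second event follows from Corollary~\ref{cor:r_mass} with accuracy $\epsilon/2n$, failure probability $\delta/m$, and a union bound over $m=O(n2^t)$ bases. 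Since $\log m=O(n)$, this gives $O(n^2/\epsilon)$ samples per basis.

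\textbf{Containment.} Write $S_i:=C_i^\dagger((\widehat H_i+\mathcal{Z})^\perp)$.

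First, $\widehat H_i$ is spanned by samples in the support of $r_{\rho_i}$. By Corollary~\ref{prop:cds-support}, $\widehat H_i\subseteq H_i:=(\mathrm{Weyl}(\rho_i)\cap\mathcal{Z})^\perp\cap\mathcal{X}$.

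Taking complements of $\widehat H_i+\mathcal{Z}\subseteq H_i+\mathcal{Z}$ and applying Lemma~\ref{lem:cds-orthogonal}, we get
\[
(\widehat H_i+\mathcal{Z})^\perp\supseteq \mathrm{Weyl}(\rho_i)\cap\mathcal{Z}=C_i(\mathrm{Weyl}(\rho))\cap\mathcal{Z}.
\]
Pulling back by $C_i^\dagger$ gives $S_i\supseteq \mathrm{Weyl}(\rho)\cap C_i^\dagger(\mathcal{Z})$. Summing over $i$ and using the first event yields $\widehat S\supseteq\mathrm{Weyl}(\rho)$.

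\textbf{Heaviness.} By Lemma~\ref{lem:cds-subspace-mass} applied to $\rho_i$ and $\widehat H_i$, together with Clifford invariance of $p_\rho$ ($p_{\rho_i}(C_i x)=p_\rho(x)$ and $|S_i|=|(\widehat H_i+\mathcal{Z})^\perp|=4^n/|\widehat H_i+\mathcal{Z}|$), we obtain
\[
\sum_{x\in S_i}p_\rho(x)=r_{\rho_i}(\widehat H_i)\,\frac{|S_i|}{2^n}\ge\Bigl(1-\frac{\epsilon}{2n}\Bigr)\frac{|S_i|}{2^n}.
\]
This is exactly the hypothesis of Fact~\ref{fact:reduce_2n}. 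That fact gives $\sum_{x\in\widehat S}p_\rho(x)\ge(1-\epsilon)|\widehat S|/2^n$. Since $p_\rho(x)=2^{-n}\mathrm{tr}(\rho W_x)^2$, this is precisely $\mathbb{E}_{x\sim\widehat S}\mathrm{tr}(\rho W_x)^2\ge 1-\epsilon$.

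We use Fact~\ref{fact:reduce_2n} rather than Corollary~\ref{cor:chain_of_SOS} because $m$ may exceed $2n$. The per-term errors would then sum to more than $\epsilon$, whereas Fact~\ref{fact:reduce_2n} exploits that at most $2n$ summands raise the dimension. Checking this mismatch is the main point needing care: we must confirm that Fact~\ref{fact:reduce_2n} allows arbitrarily many summands, which its statement does.

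\textbf{Isotropy.} Since $\epsilon<1/4$, the bound $1-\epsilon>3/4$ gives $\sum_{x\in\widehat S}p_\rho(x)>3|\widehat S|/(4\cdot 2^n)$. Fact~\ref{fact:heavy-subspace-in-M} then shows $\widehat S\subseteq\langle M\rangle$, so $\widehat S$ is isotropic.

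A union bound over the two conditioning events gives the high-probability claim, valid for all but a $2^{-\Omega(n)}$ fraction of $\mathrm{Weyl}(\rho)$.
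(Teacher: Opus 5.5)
Your proposal is correct and follows the paper's proof step for step. Containment comes from Corollary~\ref{prop:cds-support}, Lemma~\ref{lem:cds-orthogonal} and Corollary~\ref{cor:weyl-support-recovery}; heaviness comes from transferring $r_{\rho_i}(\widehat H_i)\ge 1-\epsilon/2n$ via Lemma~\ref{lem:cds-subspace-mass} and Clifford invariance, then aggregating with Fact~\ref{fact:reduce_2n}; and isotropy comes from Fact~\ref{fact:heavy-subspace-in-M} since $1-\epsilon>3/4$. Your remark on why Fact~\ref{fact:reduce_2n}, rather than naive chaining of Lemma~\ref{lem:sos} over all $m$ summands, is needed is accurate and matches the paper's choice.
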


\begin{proof}
    By Corollary~\ref{cor:weyl-support-recovery}, with high probability over the choice of
    $C_1,\dots,C_m$, we have
    \begin{equation}\label{eq:part2_support_sum}
        \sum_{i=1}^{m}\bigl(\mathrm{Weyl}(\rho)\cap C_i^\dagger(\mathcal{Z})\bigr)
        =
        \mathrm{Weyl}(\rho)
    \end{equation}
    for all but a $2^{-\Omega(n)}$ fraction of $\mathrm{Weyl}(\rho)\in\mathcal{I}_{n-t}(V)$.

    Fix such a $\rho$, and for each $i\in[m]$ define
    \begin{equation}
        H_i:=\bigl(\mathrm{Weyl}(\rho_i)\cap \mathcal{Z}\bigr)^\perp\cap \mathcal{X}.
    \end{equation}
    By Corollary~\ref{prop:cds-support}, every computational difference sample drawn from
    $r_{\rho_i}$ lies in $H_i$. Since $H_i$ is a subspace, it follows that
    \begin{equation}
        \widehat{H}_i\subseteq H_i.
    \end{equation}
    Therefore,
    \begin{equation}
        (\widehat{H}_i+\mathcal{Z})^\perp \supseteq (H_i+\mathcal{Z})^\perp.
    \end{equation}
    By Lemma~\ref{lem:cds-orthogonal},
    \begin{equation}
        (H_i+\mathcal{Z})^\perp=\mathrm{Weyl}(\rho_i)\cap \mathcal{Z}.
    \end{equation}
    Conjugating by $C_i^\dagger$, we obtain
    \begin{equation}
        C_i^\dagger\bigl((\widehat{H}_i+\mathcal{Z})^\perp\bigr)
        \supseteq
        C_i^\dagger\bigl(\mathrm{Weyl}(\rho_i)\cap \mathcal{Z}\bigr)
        =
        \mathrm{Weyl}(\rho)\cap C_i^\dagger(\mathcal{Z}).
    \end{equation}
    Summing over $i$ and using Eq.~\eqref{eq:part2_support_sum}, we conclude that
    \begin{equation}
        \widehat{S}\supseteq \mathrm{Weyl}(\rho).
    \end{equation}
    Next, applying Corollary~\ref{cor:r_mass} to each $\rho_i$ with error parameter
    $\epsilon/(2n)$ and failure probability $\delta/m$, it suffices to take
    \begin{equation}
        N_S
        =
        O\!\left(
            \frac{n(n+\log(m/\delta))}{\epsilon}
        \right)
    \end{equation}
    computational difference samples for each $i\in[m]$. Then, by a union bound, with
    probability at least $1-\delta$,
    \begin{equation}
        r_{\rho_i}(\widehat{H}_i)\ge 1-\frac{\epsilon}{2n}
    \end{equation}
    holds simultaneously for all $i\in[m]$.

    Define
    \begin{equation}
        \widehat{S}_i:=C_i^\dagger\bigl((\widehat{H}_i+\mathcal{Z})^\perp\bigr).
    \end{equation}
    For each $i$, using $p_{C\rho C^{\dagger}}(S)=p_{\rho}(C^{\dagger}(S))$ and
    Lemma~\ref{lem:cds-subspace-mass}, we have
    \begin{align}
        p_\rho(\widehat{S}_i)
        &=
        p_{\rho_i}((\widehat{H}_i+\mathcal{Z})^\perp) \\
        &=
        \frac{|(\widehat{H}_i+\mathcal{Z})^\perp|}{2^n}\, r_{\rho_i}(\widehat{H}_i) \\
        &\ge
        \left(1-\frac{\epsilon}{2n}\right)\frac{|(\widehat{H}_i+\mathcal{Z})^\perp|}{2^n} \\
        &=
        \left(1-\frac{\epsilon}{2n}\right)\frac{|\widehat{S}_i|}{2^n}.
    \end{align}
    Applying Fact~\ref{fact:reduce_2n} to the family $\{\widehat{S}_i\}_{i=1}^m$, we obtain
    \begin{equation}
        \sum_{x\in \widehat{S}} p_\rho(x)
        \ge
        (1-\epsilon)\frac{|\widehat{S}|}{2^n}.
    \end{equation}
    Since $p_\rho(x)=2^{-n}\mathrm{tr}(\rho W_x)^2$, this is equivalent to
    \begin{equation}
        \mathbb{E}_{x\sim \widehat{S}}\!\left[\mathrm{tr}(\rho W_x)^2\right]\ge 1-\epsilon.
    \end{equation}

    Finally, because $\epsilon<1/4$, we have
    \begin{equation}
        \sum_{x\in \widehat{S}} p_\rho(x)
        >
        \frac{3|\widehat{S}|}{4\cdot 2^n}.
    \end{equation}
    Therefore, Fact~\ref{fact:heavy-subspace-in-M} implies that $\widehat{S}$ is isotropic.
\end{proof}

\section{Worst-case instance: GHZ}\label{appx:ghz}

Our learning algorithm does not succeed for every input whose stabilizer group $\mathrm{Weyl}(\rho)$ lies in $\mathcal{I}_{n-t}(V)$. As noted in the main text, the $n$-qubit GHZ state
provides a representative hard instance. Let
\[
    \rho_{\mathrm{GHZ}} = \ket{\mathrm{GHZ}}\!\bra{\mathrm{GHZ}},
\]
where $\ket{\mathrm{GHZ}}=\frac{1}{\sqrt{2}}(\ket{0^n}+\ket{1^n})$ and let $S_{\mathrm{GHZ}}$ denote its stabilizer group, generated by
\[
    S_{\mathrm{GHZ}}
    =
    \left\langle
    Z_1 Z_2,\,
    Z_2 Z_3,\,
    \dots,\,
    Z_{n-1} Z_n,\,
    \bigotimes_{i=1}^n X_i
    \right\rangle.
\]
Define
\[
    T_0
    :=
    \left\langle
    Z_1 Z_2,\,
    Z_2 Z_3,\,
    \dots,\,
    Z_{n-1} Z_n
    \right\rangle
    =
    \{(0^n,z)\in \mathbb{F}_2^{2n} : \textstyle\sum_i z_i \equiv 0 \pmod 2\}.
\]
Then
\[
    S_{\mathrm{GHZ}} \setminus T_0
    =
    \{(1^n,z)\in \mathbb{F}_2^{2n} : \textstyle\sum_i z_i \equiv 0 \pmod 2\}.
\]
If $C$ is sampled uniformly from $\mathcal{C}_k$, then we obtain
\begin{equation}\label{eq:ghz_pr_upper}
    \Pr_C\!\left(
        C(S_{\mathrm{GHZ}}\setminus T_0)\cap \mathcal{Z}\neq \emptyset
    \right)
    \le (2/3)^{n/k}.
\end{equation}
This shows that, under the ensemble $\mathcal{C}_k$, one would need exponentially many
independent random Clifford circuits in order to reveal all generators in $S_{\mathrm{GHZ}}$.

There are two points worth emphasizing about this observation. First, the bound in
Eq.~\eqref{eq:ghz_pr_upper} is proved only for the Clifford ensemble $\mathcal{C}_k=\mathrm{Cl}(k)^{\otimes n/k}$ used throughout this paper. It does not rule out
the possibility that a different choice of shallow-depth Clifford ensemble could avoid this
exponential suppression. Understanding whether such an ensemble exists remains an interesting direction for future work.

Second, even within the ensemble $\mathcal{C}_k$, the GHZ example does not by itself imply
that shallow-depth measurements are fundamentally insufficient. As discussed in the main
text, once one allows a single adaptive step, the full stabilizer group of the GHZ state can
still be learned using $O(\log n)$-depth measurements without any exponential increase in the
number of samples. This suggests that adaptivity may substantially enlarge the class of states
that can be learned efficiently in the shallow-depth setting. Determining how much adaptivity
is necessary remains another interesting open question.

We now introduce the lemma needed to prove Eq.~\eqref{eq:ghz_pr_upper}.

\begin{lemma}\label{lem:ghz_pr_2/3}
    Let $x_0=(1^n,0^n)\in V\coloneq \mathbb{F}_2^{2n}$, and let $C$ be sampled uniformly from $\mathrm{Cl}(n)$. Then
    \begin{equation}
        \Pr_C\!\bigl(C(x_0+\mathcal{Z})\cap \mathcal{Z}\neq \emptyset\bigr)=\frac{2}{3}.
    \end{equation}
\end{lemma}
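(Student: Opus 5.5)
The plan is to move the randomness from the coset onto a uniformly random Lagrangian subspace, turn the intersection condition into a symplectic-orthogonality condition, and then average over the dimension of $L\cap\mathcal{Z}$ using the generating function of Definition~\ref{def:generating_func}.

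\textbf{Step 1: rewrite the event.} Since $C$ is linear on $V$, we have $C(x_0+\mathcal{Z})=C(x_0)+C(\mathcal{Z})$. This meets $\mathcal{Z}$ if and only if $x_0\in \mathcal{Z}+C^{\dagger}(\mathcal{Z})$. As $C$ is uniform on $\mathrm{Cl}(n)$, so is $C^\dagger$. The induced action on $V$ is all of $\mathrm{Sp}(V)$, which acts transitively on Lagrangians. Hence $L:=C^\dagger(\mathcal{Z})$ is a uniformly random Lagrangian subspace. Using Fact~\ref{fact:sum_perp} and $\mathcal{Z}^\perp=\mathcal{Z}$, $L^\perp=L$, we get
\begin{equation}
(\mathcal{Z}+L)^\perp=\mathcal{Z}\cap L=:K .
\end{equation}
So the event is $x_0\in K^\perp$, i.e.\ $[x_0,w]=0$ for all $w\in K$. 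For $w=(0^n,z)$ we have $[x_0,w]=\sum_i z_i \bmod 2$. Therefore the event is exactly $K\subseteq T_0$, where $T_0$ is the even-weight hyperplane of $\mathcal{Z}$ defined above.

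\textbf{Step 2: distribution of $K$ given its dimension.} The symplectic maps $(x,z)\mapsto(Ax,A^{-T}z)$ with $A\in\mathrm{GL}_n(\mathbb{F}_2)$ fix $\mathcal{Z}$. They act on $\mathcal{Z}\cong\mathbb{F}_2^n$ through $\mathrm{GL}_n$, which is transitive on $l$-dimensional subspaces. They also preserve the uniform measure on Lagrangians. Hence, conditioned on $\dim K=l$, $K$ is a uniformly random $l$-dimensional subspace of $\mathcal{Z}$. The probability that such a subspace lies in a fixed hyperplane of $\mathbb{F}_2^n$ is
\begin{equation}
\binom{n-1}{l}_2\Big/\binom{n}{l}_2=\frac{2^{n-l}-1}{2^n-1}.
\end{equation}

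\textbf{Step 3: average over $l$.} Combining the two steps,
\begin{equation}
\Pr=\sum_{l}\Pr(\dim K=l)\,\frac{2^{n-l}-1}{2^n-1}=\frac{2^n\,G_n(1/2)-1}{2^n-1},
\end{equation}
with $G_n$ as in Definition~\ref{def:generating_func}. Evaluating the product formula at $s=1/2$ gives
\begin{equation}
G_n(1/2)=2^{-n}\frac{\prod_{i=0}^{n-1}(1+2^{i+2})}{\prod_{i=1}^{n}(2^i+1)}=2^{-n}\frac{2^{n+1}+1}{3},
\end{equation}
since the products telescope. Substituting yields $\frac{(2^{n+1}+1)/3-1}{2^n-1}=\frac{2}{3}$.

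\textbf{Main obstacle.} The step needing the most care is Step 1. One must justify that $C^\dagger(\mathcal{Z})$ is uniform over Lagrangians, and handle the affine (rather than linear) structure of $x_0+\mathcal{Z}$ correctly. The remaining calculation is routine once the generating function is evaluated at $s=1/2$; the product formula there is valid as a polynomial identity.
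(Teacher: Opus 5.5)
Your proof is correct and follows essentially the same route as the paper. Both replace $C$ by the uniform Lagrangian $L=C^{\dagger}(\mathcal{Z})$, condition on $\dim(L\cap\mathcal{Z})$, use transitivity of the stabilizer of $\mathcal{Z}$ in $\mathrm{Sp}(V)$ to get the conditional probability $(2^{n-l}-1)/(2^n-1)$, and sum; the differences are cosmetic: you work in the dual picture ($L\cap\mathcal{Z}\subseteq T_0$ via $(\mathcal{Z}+L)^\perp=\mathcal{Z}\cap L$) instead of with $\pi(x_0)\in\pi(L)$ in $V/\mathcal{Z}$, and you derive the final summation explicitly through the telescoping evaluation of $G_n(1/2)$, which the paper states without derivation.
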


\begin{proof}
    We first replace the random Clifford $C$ by the random Lagrangian subspace
    $L:=C^\dagger(\mathcal{Z})$. Since $C$ is sampled uniformly from $\mathrm{Cl}(n)$, the
    subspace $L$ is distributed uniformly over the set of Lagrangian subspaces of $V$.
    Therefore,
    \begin{align}
        \Pr_C\!\bigl(C(x_0+\mathcal{Z})\cap \mathcal{Z}\neq \emptyset\bigr)
        &= \Pr_C\!\bigl((x_0+\mathcal{Z})\cap C^\dagger(\mathcal{Z})\neq \emptyset\bigr) \\
        &= \Pr_L\!\bigl((x_0+\mathcal{Z})\cap L\neq \emptyset\bigr) \\
        &= \Pr_L\!\bigl(x_0\in \mathcal{Z}+L\bigr).
        \label{eq:cond_lag}
    \end{align}

    Next, consider the quotient map
    \begin{equation}
        \pi:V\to V/\mathcal{Z},\qquad \pi(x)=x+\mathcal{Z}.
    \end{equation}
    We claim that
    \begin{equation}
        x_0\in \mathcal{Z}+L
        \quad\Longleftrightarrow\quad
        \pi(x_0)\in \pi(L),
    \end{equation}
    where $\pi(L)=\{l+\mathcal{Z}: l\in L\}$. Indeed, if $x_0\in \mathcal{Z}+L$, then $x_0=z+l$ for some $z\in \mathcal{Z}$ and
    $l\in L$, so $\pi(x_0)=x_0+\mathcal{Z}=l+\mathcal{Z}\in \pi(L)$. Conversely, if $\pi(x_0)\in \pi(L)$, then $x_0+\mathcal{Z}=l+\mathcal{Z}$ for some
    $l\in L$, which means that $x_0-l\in \mathcal{Z}$, or equivalently
    $x_0\in \mathcal{Z}+L$.

    Thus,
    \begin{align}
        \Pr_L\!\bigl(x_0\in \mathcal{Z}+L\bigr)
        &= \Pr_L\!\bigl(\pi(x_0)\in \pi(L)\bigr) \\
        &= \sum_{r=0}^{n}
        \Pr_L\!\bigl(\dim(L\cap \mathcal{Z})=r\bigr)\,
        \Pr_L\!\bigl(\pi(x_0)\in \pi(L)\mid \dim(L\cap \mathcal{Z})=r\bigr).
    \end{align}

    By Fact~\ref{fact:num_lag_intersection},
    \begin{equation}
        \Pr_L\!\bigl(\dim(L\cap \mathcal{Z})=r\bigr)
        =
        \frac{\binom{n}{r}_2\,2^{(n-r)(n-r+1)/2}}{\prod_{i=1}^{n}(2^i+1)}.
    \end{equation}
    Moreover, conditioned on $\dim(L\cap \mathcal{Z})=r$, the image $\pi(L)$ is uniformly distributed over all $(n-r)$-dimensional subspaces of $V/\mathcal{Z}$, since the subgroup $\{\phi\in\mathrm{Sp}(V):\phi(\mathcal{Z})=\mathcal{Z}\}$ acts transitively on that set. Therefore,
    since $\pi(x_0)\neq 0$, we have
    \begin{equation}
        \Pr_L\!\bigl(\pi(x_0)\in \pi(L)\mid \dim(L\cap \mathcal{Z})=r\bigr)
        =
        \frac{\binom{n-1}{n-r-1}_2}{\binom{n}{n-r}_2}
        =
        \frac{2^{\,n-r}-1}{2^n-1}.
    \end{equation}
    Hence,
    \begin{align}
        \Pr_L\!\bigl(x_0\in \mathcal{Z}+L\bigr)
        &=
        \sum_{r=0}^{n}
        \frac{\binom{n}{r}_2\,2^{(n-r)(n-r+1)/2}}{\prod_{i=1}^{n}(2^i+1)}
        \cdot
        \frac{2^{\,n-r}-1}{2^n-1} \\
        &=
        \frac{2^{n+1}+1}{3(2^n-1)}-\frac{1}{2^n-1} \\
        &= \frac{2}{3}
    \end{align}
    as desired.
\end{proof}

We can now prove Eq.~\eqref{eq:ghz_pr_upper}.

\begin{proof}[Proof of Eq.~\eqref{eq:ghz_pr_upper}]
    Since
    \[
        S_{\mathrm{GHZ}}\setminus T_0 = x_0+T_0 \subseteq x_0+\mathcal{Z},
    \]
    we obtain
    \begin{align}
        \Pr_C\!\bigl(C(S_{\mathrm{GHZ}}\setminus T_0)\cap \mathcal{Z}\neq \emptyset\bigr)
        &= \Pr_C\!\bigl(C(x_0+T_0)\cap \mathcal{Z}\neq \emptyset\bigr) \\
        &\le \Pr_C\!\bigl(C(x_0+\mathcal{Z})\cap \mathcal{Z}\neq \emptyset\bigr).
    \end{align}

    Now let $C=C_1\otimes\cdots\otimes C_{n/k}$ be sampled uniformly from
    $\mathcal{C}_k=\mathrm{Cl}(k)^{\otimes n/k}$. Since $x_0+\mathcal{Z}$ factorizes across the
    $n/k$ blocks as
    \[
        x_0+\mathcal{Z}
        =
        (x_{0,k}+\mathcal{Z}_k)^{\otimes n/k},
    \]
    where
    \[
        x_{0,k}=(1^k,0^k)\in \mathbb{F}_2^{2k},
        \qquad
        \mathcal{Z}_k=\{I,Z\}^k=\{(0^k,z)\in \mathbb{F}_2^{2k}: z\in \mathbb{F}_2^k\},
    \]
    the event factorizes over $n/k$ blocks as well. Therefore,
    \begin{align}
        \Pr_C\!\bigl(C(x_0+\mathcal{Z})\cap \mathcal{Z}\neq \emptyset\bigr)
        &= \prod_{i=1}^{n/k}
        \Pr_{C_i}\!\bigl(C_i(x_{0,k}+\mathcal{Z}_k)\cap \mathcal{Z}_k\neq \emptyset\bigr) \\
        &= \left(\frac{2}{3}\right)^{n/k},
    \end{align}
    where the last equality follows from Lemma~\ref{lem:ghz_pr_2/3} applied on each
    $k$-qubit block. This proves Eq.~\eqref{eq:ghz_pr_upper}.
\end{proof}

\section{Numerical simulations}\label{appx:numeric}

In this section, we provide additional details on the numerical simulations presented in the main text.
Our theoretical analysis shows that, for the Clifford ensemble $\mathcal{C}_k$ with $k=\Omega(\log n)$, the proposed learning procedure succeeds for all but an exponentially small $2^{-\Omega(n)}$ fraction of $(n-t)$-dimensional stabilizer groups $\mathrm{Weyl}(\rho)\in \mathcal{I}_{n-t}(V)$, with the same asymptotic guarantee as in the fully random Clifford measurements $(k=n)$.
Moreover, our numerical results suggest that, for the random instances considered here, even the case $k=1$, corresponding to single-qubit measurements, already performs comparably well in practice.

In Fig.~\ref{fig:2}(a) of the main text, we estimate the number of Clifford circuits required to learn the generators of $\mathrm{Weyl}(\rho)$ for the two Clifford ensembles $\mathcal{C}_1$ and $\mathcal{C}_n$.
We fix $n=100$ and vary $t=0,\dots,5$.
For each value of $t$, we sample $1000$ random isotropic subspaces of $\mathbb{F}_2^{2n}$ of dimension $n-t$.
For each sampled subspace $S=\mathrm{Weyl}(\rho)$, we sequentially draw Clifford circuits $C_1,C_2,\dots, C_m$ independently from each Clifford ensemble, and define
\begin{equation}
    S_i := S \cap C_i^\dagger(\mathcal{Z}).
\end{equation}
Starting from $A_0=\{0\}$, we recursively set
\begin{equation}
    A_i := \langle S_1,\dots,S_i\rangle.
\end{equation}
We then record the smallest integer $m$ such that $A_m=S$.
Thus, $m$ is the number of sampled Clifford circuits required until the accumulated generators span the full hidden symmetry group. 

\begin{figure}[t]
    \centering
    \includegraphics[width=0.4\linewidth, trim=1cm 8.5cm 18cm 1cm, clip]{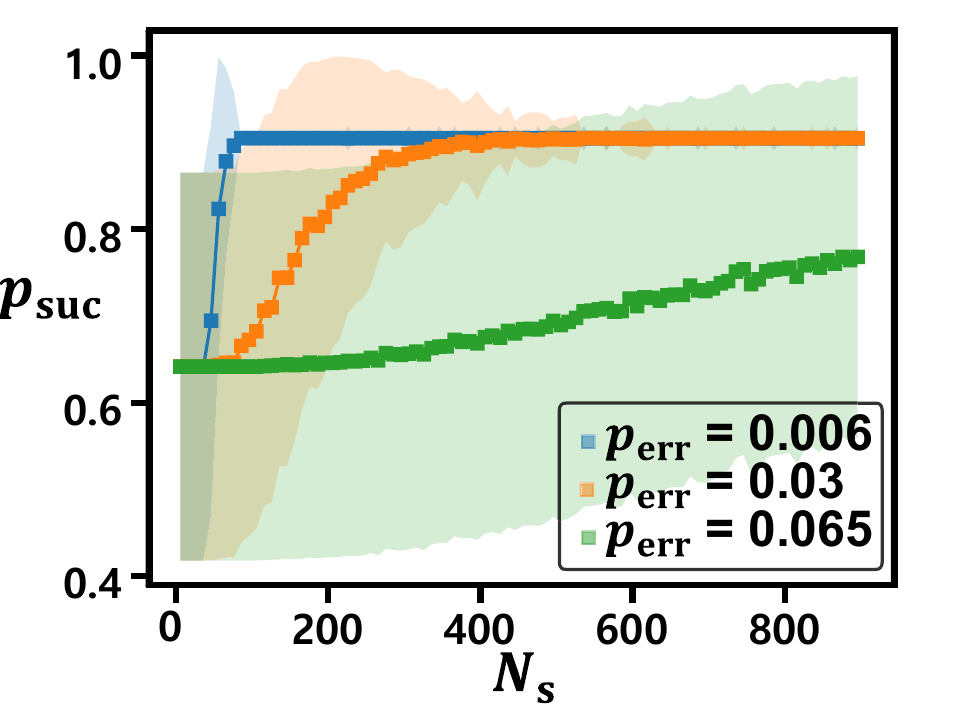}
    \caption{\label{fig_2b_std} Recovery under noisy computational difference samples. We fix $n=24$, $t=0$, and vary the number $N_S$ of computational difference samples obtained from a single measurement basis. We evaluate the success probability $p_{\mathrm{suc}}$ of recovering the hidden symmetry when each measured bit is flipped independently with probability $p_{\mathrm{err}}$. Each point shows the mean, and the shaded region indicates one standard deviation over $1000$ independent trials.
    }
\end{figure}

In Fig.~\ref{fig:2}(b) of the main text, we study recovery of a hidden Pauli symmetry from noisy samples.
Whenever a Clifford circuit $C$ has a nontrivial overlap with $\mathrm{Weyl}(\rho)$, the resulting post-processing problem reduces to estimating a hidden parity constraint.
More precisely, there exists a secret string $s\in\mathbb{F}_2^n$ such that, in the noiseless case, the samples satisfy
\begin{equation}
    s\cdot x \equiv 0 \pmod 2.
\end{equation}
In the noisy setting, this condition is only satisfied with a bias determined by the sample error rate $p_{\mathrm{err}}$.

To recover the hidden parity, we use the Fast Walsh--Hadamard Transform (FWHT), which computes all empirical Walsh correlations once the sample frequency is estimated.
Given samples $\{x_1,\dots,x_{N_S}\}\subseteq\{0,1\}^n$, let $\widehat{f}$ denote the empirical frequency, and define
\begin{equation}
    c(s)
    :=
    \sum_{x\in\{0,1\}^n} \widehat{f}(x)(-1)^{s\cdot x},
    \qquad s\in\{0,1\}^n.
\end{equation}
In our implementation, we exclude the trivial coefficient $c(0)$ and use the estimator
\begin{equation}
    \widehat{s}
    :=
    \mathrm{argmax}_{s\in\{0,1\}^n\setminus\{0^n\}} c(s).
\end{equation}
Since in our simulation the hidden parity is biased toward the event $s\cdot x=0$, the correct string $s$ yields a large positive Walsh coefficient. Therefore, in this setting it suffices to maximize $c(s)$.

To suppress false positives when no nontrivial hidden symmetry is present, we introduce a detection threshold
\begin{equation}
    \gamma
    :=
    \sqrt{\frac{2\bigl(n\log 2 + \log(1/\alpha)\bigr)}{N_S}},
\end{equation}
where $\alpha$ is the target confidence level.
If
\begin{equation}
    \max_{s\neq 0} c(s) \le \gamma,
\end{equation}
we declare that the corresponding Clifford circuit $C$ does not reveal a detectable nontrivial symmetry at that step.
This choice is motivated by a union-bound estimate over all nonzero Walsh coefficients, which ensures that the false positive probability under the null model is at most $\alpha$.

In the experiment, we fix $n=24$, vary the number of samples $N_S$, and evaluate the success probability of recovering the hidden symmetry for several values of the sample error rate $p_{\mathrm{err}}$. In the main text, the standard-deviation bars are omitted for visual clarity; the corresponding plot including them is shown in Fig.~\ref{fig_2b_std}. Each plotted point is averaged over 1000 independent trials.
If the hidden component is trivial, a trial is counted as successful only when the algorithm outputs the trivial estimate.
If the hidden component is nontrivial, a trial is counted as successful whenever the returned vector is consistent with the target hidden symmetry.
Throughout this experiment, we assume that, conditioned on the hidden constraint, the valid samples are drawn uniformly. The FWHT-based procedure is suitable for moderate system sizes, but it becomes impractical as $n$ grows, since it requires evaluating correlation scores over all $2^n$ candidate bit strings. As a result, both the samples and the post-processing cost scale exponentially with $n$. For larger systems, one may instead use the Goldreich--Levin algorithm~\cite{goldreich1989hardcore}, which recovers heavy Fourier coefficients without explicitly scanning the full $\mathbb{F}_2^{n}$.

\begin{figure}[t]
    \centering
    \includegraphics[width=0.8\linewidth, trim=0cm 3.5cm 11cm 0cm, clip]{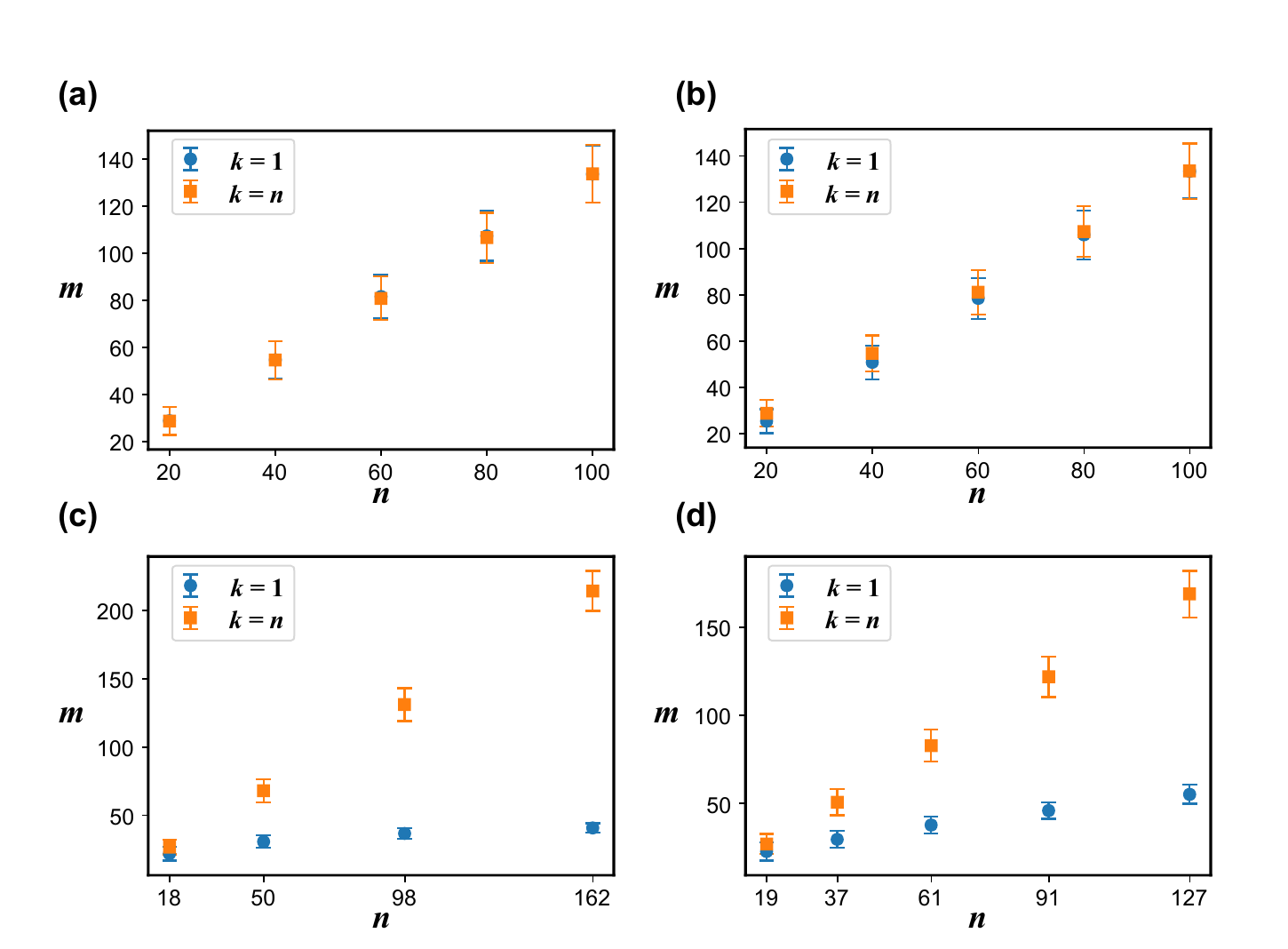}
    \caption{\label{fig:5}
        Numerical simulations for learning stabilizer groups with different Clifford ensembles.
        We compare the number $m$ of sampled Clifford circuits required to recover the stabilizer
        group $\mathrm{Weyl}(\rho)$ using $\mathcal{C}_k=\mathrm{Cl}(k)^{\otimes n/k}$ for $k=1$ and $k=n$.
        (a) Random stabilizer states, as a function of the system size $n$.
        (b) Random CSS-code states, where $\rho$ is taken to be a logical state of a random CSS code.
        (c) Toric-code logical states, plotted as a function of the number of physical qubits $n$
        for different code distances.
        (d) Color-code logical states, plotted as a function of the number of physical qubits $n$
        for different code distances.
        Each point shows the mean and standard deviation over $1000$ independent trials.
        }
\end{figure}
In addition to the experiments reported in the main text, we performed several supplementary numerical simulations that further support our findings [Fig.~\ref{fig:5}].
First, we compared the performance of different Clifford ensembles as a function of the system size $n$.
For $n=20,40,\dots,100$, we estimated the number of Clifford circuits $m$ required to learn $\mathrm{Weyl}(\rho)$.
Each plotted point reports the empirical mean and standard deviation (std) over $1000$ random instances.
Second, we repeated the same comparison not only for uniformly random stabilizer groups, but also for random CSS codes, which arise more frequently in quantum error correction.
Again, for $n=20,40,\dots,100$, we estimated the number of Clifford circuits $m$ required to reconstruct $\mathrm{Weyl}(\rho)$ and compared the resulting performance across different Clifford ensembles.
Third, we considered structured, non-random instances given by a single logical state of the toric code~\cite{kitaev2003fault} and of the color code~\cite{bombin2006topological}.
For both families, we varied the code distance over
\begin{align}
    &d_{\mathrm{code}}^{\mathrm{(color)}} \in \{5,7,9,11,13\},\\
    &d_{\mathrm{code}}^{\mathrm{(toric)}} \in \{3,5,7,9\}.
\end{align}
and compared the performance of different Clifford ensembles in terms of the number of circuits required for successful recovery.
These experiments indicate that the qualitative behavior observed for shallow Clifford ensembles is not restricted to uniformly random stabilizer instances, but also persists for physically motivated code states. Interestingly, in these examples, $\mathcal{C}_1$ even outperforms $\mathcal{C}_n$. This is because logical states of commonly studied quantum error-correcting codes are typically generated by more structured and local entangling circuits than highly nonlocal states such as GHZ.

\section{Lower bound on the $t$-dependence}\label{appx:lower}

In this section, we show that the exponential dependence on $t$ is unavoidable for
single-copy learning of Pauli symmetries. More precisely, even if adaptive
single-copy measurements are allowed, any algorithm that approximates
$\mathrm{Weyl}(\rho)$ in the sense of Definition~\ref{def:learning_task}
requires $\Omega(2^t)$ copies in the worst case.

We consider algorithms $\mathcal{A}$ that, given an $n$-qubit state $\rho$, output a subspace
$\widehat S \subset \mathbb{F}_2^{2n}$ satisfying
\begin{flalign}
    &\quad 1.\ \widehat S \supseteq \mathrm{Weyl}(\rho), \\
    &\quad 2.\ \frac{1}{|\widehat S|}\sum_{x\in \widehat S}\mathrm{tr}(\rho W_x)^2 \ge 1-\epsilon, \\
    &\quad 3.\ \widehat S \text{ is isotropic}. &&
\end{flalign}

We will show that the existence of such an algorithm would imply an efficient
single-copy protocol for a previously studied hypothesis-testing problem that is known
to require $\Omega(2^t)$ samples. This yields the desired lower bound.

Consider the following two families of states:
\begin{alignat}{2}
    \rho_A
    &= \ket{0^{n-t}}\bra{0^{n-t}} \otimes \frac{I}{2^t}, 
    \qquad &\text{with probability } \frac{1}{2},\\
    \rho_B
    &= \ket{0^{n-t}}\bra{0^{n-t}} \otimes \frac{I+P}{2^t}, 
    \qquad &\text{with probability } \frac{1}{2},
\end{alignat}
where $P$ is randomly chosen from $\{I,X,Y,Z\}^{\otimes t}\setminus\{I^{\otimes t}\}$.
Then
\begin{align}
    \mathrm{Weyl}(\rho_A)
    &= \langle Z_1,\dots,Z_{n-t}\rangle, \\
    \mathrm{Weyl}(\rho_B)
    &= \langle Z_1,\dots,Z_{n-t}, P\rangle,
\end{align}
where, in the second line, $P$ is understood as acting on the last $t$ qubits.

Assume now that $\epsilon < 1/2$. Then condition 2 implies
\begin{equation}
    \frac{1}{|\widehat S|}\sum_{x\in \widehat S}\mathrm{tr}(\rho W_x)^2 > \frac12.
\end{equation}
We claim that, for both $\rho=\rho_A$ and $\rho=\rho_B$, this forces
\[
    \widehat S = \mathrm{Weyl}(\rho).
\]

Indeed, by condition 1 we already have $\widehat S \supseteq \mathrm{Weyl}(\rho)$.
Suppose, for contradiction, that $\widehat S$ strictly contains $\mathrm{Weyl}(\rho)$, and let
$x \in \widehat S\setminus \mathrm{Weyl}(\rho)$. Since $\widehat S$ is a subspace containing
$\mathrm{Weyl}(\rho)$, it also contains the entire coset $x+\mathrm{Weyl}(\rho)$.
Moreover, because $x \notin \mathrm{Weyl}(\rho)$, none of the elements of this coset belong to
$\mathrm{Weyl}(\rho)$. Therefore,
\[
    \mathrm{tr}(\rho W_y)^2 = 0
    \qquad\text{for all } y\in x+\mathrm{Weyl}(\rho),
\]
whereas
\[
    \mathrm{tr}(\rho W_y)^2 = 1
    \qquad\text{for all } y\in \mathrm{Weyl}(\rho).
\]
Since the two cosets $\mathrm{Weyl}(\rho)$ and $x+\mathrm{Weyl}(\rho)$ have the same size,
it follows that
\begin{equation}
    \frac{1}{|\widehat S|}\sum_{y\in \widehat S}\mathrm{tr}(\rho W_y)^2 \le \frac12,
\end{equation}
contradicting condition 2. Hence $\widehat S=\mathrm{Weyl}(\rho)$.

It follows that any algorithm satisfying the above three conditions can distinguish
$\rho_A$ from $\rho_B$: its output has dimension $n-t$ in the first case and $n-t+1$
in the second.

We now express such an algorithm as an adaptive single-copy measurement protocol.
Let
\[
    \{F_k^{<i}\}_k
\]
denote the POVM used on the $i$-th copy, where the choice of POVM may depend on all
previous $i-1$ outcomes. Since both $\rho_A$ and $\rho_B$ are tensor products with the same
pure state on the first $n-t$ qubits, distinguishing $\rho_A$ from $\rho_B$ is equivalent
to distinguishing their reduced states on the last $t$ qubits using the induced POVMs
\begin{equation}
    G_k^{<i}
    :=
    \mathrm{tr}_{1,\dots,n-t}\!\Big[
        \bigl(\ket{0^{n-t}}\bra{0^{n-t}} \otimes I\bigr) F_k^{<i}
    \Big].
\end{equation}
Each $\{G_k^{<i}\}_k$ is a valid POVM on $t$ qubits. Therefore the original task reduces to
the following adaptive single-copy property-testing problem on $t$ qubits:
\begin{alignat}{2}
    \rho_A'
    &= \frac{I}{2^t}, 
    \qquad &\text{with probability } \frac{1}{2},\\
    \rho_B'
    &= \frac{I+P}{2^t}, 
    \qquad &\text{with probability } \frac{1}{2}.
\end{alignat}
where $P \in \{I,X,Y,Z\}^{\otimes t}\setminus\{I^{\otimes t}\}$.

This is the single-copy Pauli-learning/testing problem studied in previous work~\cite{chen2024optimal, huang2021information, chen2022exponential},
for which an $\Omega(2^t)$ lower bound is known. Hence any algorithm satisfying conditions 1--3 above must use at least $\Omega(2^t)$ copies in the worst case. If we restrict the algorithm to be non-adaptive like ours, then we can get a tighter $\Omega(t2^t)$ lower bound~\cite{huang2021information}. 

We summarize the discussion as follows:

\begin{lemma}[Single-copy lower bound on the $t$-dependence]
    Let $\epsilon < 1/2$. Any potentially adaptive single-copy algorithm that outputs
    an isotropic subspace $\widehat S$ satisfying
    \begin{enumerate}
        \item $\widehat S \supseteq \mathrm{Weyl}(\rho)$,
        \item $\displaystyle \frac{1}{|\widehat S|}\sum_{x\in\widehat S}\mathrm{tr}(\rho W_x)^2 \ge 1-\epsilon$,
    \end{enumerate}
    for every $n$-qubit state $\rho$ with $\dim \mathrm{Weyl}(\rho)=n-t$, must use
    at least $\Omega(2^t)$ copies in the worst case.
\end{lemma}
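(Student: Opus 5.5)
The plan is to argue by reduction, essentially formalizing the discussion preceding the statement. Suppose $\mathcal{A}$ is a (possibly adaptive) single-copy algorithm that, for every $n$-qubit $\rho$ with $\dim\mathrm{Weyl}(\rho)=n-t$, outputs with probability at least $1-\delta$ (say $\delta\le 1/3$) an isotropic $\widehat S$ satisfying conditions 1 and 2, using $N$ copies. I will convert $\mathcal{A}$ into a single-copy tester for the $t$-qubit problem ``$I/2^t$ versus $(I+P)/2^t$ with $P$ a uniformly random non-identity Pauli'' with constant advantage, and then invoke the known $\Omega(2^t)$ lower bound for that problem~\cite{chen2024optimal, huang2021information, chen2022exponential}.

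\textbf{Step 1 (hard instances).} I will use $\rho_A=\ket{0^{n-t}}\bra{0^{n-t}}\otimes I/2^t$ and $\rho_B=\ket{0^{n-t}}\bra{0^{n-t}}\otimes (I+P)/2^t$. Here $\mathrm{Weyl}(\rho_A)=\langle Z_1,\dots,Z_{n-t}\rangle$ has dimension $n-t$, while $\mathrm{Weyl}(\rho_B)$ additionally contains $P$. One bookkeeping point needs care. The statement is quantified over states of stabilizer dimension exactly $n-t$, but $\rho_B$ has dimension $n-t+1$. I will therefore apply the guarantee with $t$ for $\rho_A$ and with $t-1$ for $\rho_B$; equivalently, I run the algorithm for parameter $t-1$ on $\rho_B$ and compare outputs. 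Since $2^{t-1}=\Theta(2^t)$, the asymptotic bound is unaffected. Alternatively, I note that the $t$-qubit testing lower bound is robust to replacing $I/2^t$ by any state with trivial Weyl group, so the dimension mismatch can be absorbed.

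\textbf{Step 2 (output identifies the Weyl group).} I will show that for $\epsilon<1/2$, conditions 1 and 2 force $\widehat S=\mathrm{Weyl}(\rho)$ whenever every $x\notin\mathrm{Weyl}(\rho)$ has $\mathrm{tr}(\rho W_x)=0$. This hypothesis holds for both $\rho_A$ and $\rho_B$: their only nonzero Pauli expectations come from $\langle Z_i\rangle$, and from $P$ in the case of $\rho_B$. The proof is the coset argument: if $\widehat S\supsetneq \mathrm{Weyl}(\rho)$, then $\widehat S$ contains a full coset of $\mathrm{Weyl}(\rho)$ of zero weight, so the average weight over $\widehat S$ is at most $1/2$, contradicting condition 2. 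Hence $\dim\widehat S$ reveals which hypothesis holds, and $\mathcal{A}$ succeeds with probability at least $2/3$.

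\textbf{Step 3 (reduction to $t$ qubits).} Each adaptive POVM $\{F^{<i}_k\}$ on $n$ qubits will be replaced by the induced POVM $G^{<i}_k=\mathrm{tr}_{1..n-t}[(\ket{0^{n-t}}\bra{0^{n-t}}\otimes I)F^{<i}_k]$ on $t$ qubits. I will check that these operators are positive and sum to $I$. Outcome statistics are preserved exactly, because the first register is a fixed known pure state that the tester can prepare itself, so the adaptive structure carries over. This yields an adaptive single-copy $t$-qubit distinguisher using $N$ copies with success probability at least $2/3$, and the cited lower bound gives $N=\Omega(2^t)$.

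The main obstacle is making sure the cited lower bound applies in the right form. It must hold for adaptive protocols in the ``identity versus uniformly random Pauli'' (many-versus-one) formulation with constant advantage. I would cite Chen et al.'s exponential separation result, which bounds the total variation distance between the two hypotheses' outcome distributions by $O(N/2^t)$ even under adaptivity, via a learning-tree argument. From this, constant advantage requires $N=\Omega(2^t)$.
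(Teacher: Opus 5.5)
Your main line is the paper's proof. You use the same pair $\rho_A,\rho_B$, the same coset argument forcing $\widehat S=\mathrm{Weyl}(\rho)$ when $\epsilon<1/2$, the same induced POVMs $G^{<i}_k$ on the last $t$ qubits, and the same appeal to the adaptive Pauli-testing lower bound.

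The one place you go beyond the paper is the dimension mismatch: $\dim\mathrm{Weyl}(\rho_B)=n-t+1$, and the paper applies the learner's guarantee to $\rho_B$ without comment. You are right to flag it, but neither of your remedies works as written.

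\begin{itemize}
\item A tester cannot ``apply the guarantee with $t$ for $\rho_A$ and with $t-1$ for $\rho_B$'', because it does not know which state it holds.
\item Running both learners on separate batches and comparing their outputs is not enough either, because each learner is unconstrained outside its promise. For example, the $(t-1)$-learner may output $\langle Z_1,\dots,Z_{n-t},Q\rangle$ with random $Q$ on $\rho_A$. The $t$-learner may output $\langle Z_1,\dots,Z_{n-t}\rangle$ on $\rho_B$. Then the joint outputs are identically distributed under the two hypotheses.
\item Your second remedy changes the null state. The mismatch is on the $\rho_B$ side, so it does not help.
\end{itemize}

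The missing ingredient is a cheap verification step.
\begin{enumerate}
\item Run only the $(t-1)$-learner, using $N_{t-1}$ copies, and call its output $\widehat S$.
\item If $\widehat S\subseteq\langle Z_1,\dots,Z_{n-t}\rangle$, answer A.
\item Otherwise pick any $x\in\widehat S\setminus\langle Z_1,\dots,Z_{n-t}\rangle$, measure $W_x$ on $r=O(1)$ fresh copies, and answer B iff all outcomes agree.
\end{enumerate}

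Under $\rho_B$, the learner's guarantee and your Step~2 give $\widehat S=\mathrm{Weyl}(\rho_B)$ with probability at least $1-\delta$. Then $\mathrm{tr}(\rho_B W_x)=\pm1$ and the outcomes are deterministic.

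Under $\rho_A$, every $x\notin\langle Z_1,\dots,Z_{n-t}\rangle$ has $\mathrm{tr}(\rho_A W_x)=0$. So the fresh outcomes are i.i.d.\ uniform and all agree only with probability $2^{1-r}$, whatever the learner output.

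The combined procedure is still an adaptive single-copy protocol, so your Step~3 applies verbatim. It gives $N_{t-1}+O(1)=\Omega(2^t)$, which is the claimed bound after relabeling $t$.
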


Previous single-copy learning algorithms~\cite{grewal2025efficient, chia2024efficient} were established mainly for pure-state inputs,
so, taken in isolation, a lower bound based on mixed-state hypotheses could be viewed
as an artifact of the particular mixed-state construction used in the proof of the
lower bound. By contrast, combining the extension to mixed states in
Lemma~\ref{lem:cds-subspace-mass} with previous single-copy learning
algorithms~\cite{grewal2025efficient, chia2024efficient} shows that the relevant
parameter is the dimension of $\mathrm{Weyl}(\rho)$, rather than whether the input
state is pure or mixed. This suggests that the exponential dependence on $t$ is
intrinsic to the single-copy task of identifying hidden Pauli symmetries, rather than
being a consequence of the mixed-state instances used in the lower-bound construction.
\end{document}